\newcounter{thm}
\newtheorem{theorem}[thm]{Theorem}
\newtheorem{prop}[thm]{Proposition}
\newtheorem{corollary}[thm]{Corollary}
\newtheorem{lemma}[thm]{Lemma}
\newenvironment{proof}[1][Proof]{\textbf{#1.} }{\ \rule{0.5em}{0.5em}}
\renewcommand{\mathbbm}[1]{\boldsymbol{#1}}
\author{Margherita Disertori\footnote{Institut f{\"u}r Angewandte Mathematik \&
Hausdorff Center for Mathematics,Endenicher Allee 60, 53115 Bonn, Germany. E-mail:disertori@iam.uni-bonn.de},  Valentin Rapenne\footnote{Institut Camille Jordan,Univ. Lyon 1, 43 bd. du 11 nov.
1918, 69622 Villeurbanne cedex,France. E-mail:valentin.rapenne.maths@gmail.com}\\
 Constanza Rojas-Molina\footnote{Laboratoire AGM, D\'epartement de Math\'ematiques,CY Cergy Paris Universit\'e, 2 Av. Adolphe Chauvin, 95302 Cergy-Pontoise, France. E-mail:crojasmo@cyu.fr}, Xiaolin Zeng\footnote{IRMA, Universit\'e de Strasbourg,
7 rue Ren\'e Descartes, 67084 Strasbourg, France. E-mail:zeng@math.unistra.fr }}
\date{}
\title{Phase transition in the Integrated Density of States of the Anderson model arising
from a supersymmetric sigma model}
\begin{document}

\maketitle
\begin{abstract}
We study the Integrated Density of States (IDS) of the random Schr\"odinger operator appearing in the study of certain reinforced random processes in connection with a supersymmetric sigma-model. 
We rely on previous results on the supersymmetric sigma-model to obtain lower and upper bounds  on  the asymptotic behavior of the IDS near the bottom of the spectrum in all dimension. 
We show a phase transition for the IDS between weak and strong disorder regime in dimension larger or equal to three, that follows from a phase transition  in the corresponding random process and supersymmetric sigma-model. 
In particular,  we show that the IDS does not exhibit Lifshitz tails in the strong disorder regime, confirming a recent conjecture. This is in stark contrast with other disordered systems, like the Anderson model. A Wegner type estimate is also derived, giving an upper bound on the IDS and showing the regularity of the function. 
\\[3mm]
2020 MSC: 82B44, 82B20 (primary), 47B80, 60K37, 60G60  (secondary).\\
Keywords: random Sch\"odinger operator, nonlinear hyperbolic supersymmetric sigma model,  density of states, processes with reinforcement.
\end{abstract}

\section{Introduction and main results}
\label{sec:org992929a}

Transport phenomena in disordered materials can be described at the quantum mechanical level via
random Sch\"odinger operators. On the lattice \(\mathbb{Z}^{d},\) \(d\geq 1,\) they generally take the
form of an infinite random matrix
\(H_{\omega }= -\Delta+\lambda V_{\omega }\in
\mathbb{R}^{\mathbb{Z}^{d}\times \mathbb{Z}^{d}}_{sym}\) where \(-\Delta \) is the negative discrete Laplacian 
and \(V_{\omega }\) is a diagonal matrix with random entries.

In this paper we consider a random Schr\"odinger operator \(H_{\beta }\) (defined in \eqref{eq:defHbetazd} below)
arising from the supersymmetric hyperbolic sigma model \(H^{2|2}\) introduced by Zirnbauer in the context of quantum diffusion
\cite{Zirnbauer1991}  \cite{drunk1992migdal}. This can be seen as a statistical mechanics spin model, where the
spins take values on a supersymmetric extension of the hyperbolic plane \(H^{2}.\) 
This model  is expected to qualitatively reflect the phenomenon of Anderson localization and delocalization for real
symmetric band matrices (see  \cite[Section 3]{Disertori2010a}) and  exhibits a dimension-dependent phase transition
between a disordered phase \cite{Disertori2010} and an ordered phase with spontaneous symmetry breaking
\cite{Disertori2010a}.

In recent years, \(H^{2|2}\) attracted a growing interest from the mathematics
community due to the discovery in  \cite{Sabot2015,Sabot2017, Sabot2019} of surprising connections with
two linearly  reinforced random processes: the edge reinforced random walk introduced by  Diaconis in 1986,
and the vertex reinforced jump process conceived by Werner around 2000.
A first spectacular application  of this connection  was the proof of a phase transition in the reinforced processes
between a recurrent and a transient phase that follows from the disorder/order transition in the \(H^{2|2}\) model \cite{Angel2014lo,Sabot2015,Disertori2015tr}.

In \cite{Sabot2017,Sabot2019}, Sabot, Tarr\` es and Zeng show that a key ingredient in proving many properties of the reinforced processes  is the connection with the random Schr\"odinger operator \(H_\beta\) \eqref{eq:defHbetazd},
which is the main object of study in the present paper.  The spectrum of this random operator is deterministic (see \cite[Theorem 3.10]{AW} and \cite[Chap. 4]{theserapenne}) and the existence/non-existence of an eigenvalue in \(0\)
is related to  transience/recurrence properties of the stochastic processes \cite{Sabot2017, Sabot2019}.
At large disorder  the spectrum is pure point  \cite{Collevecchio2018}.

In this paper we pursue the study of spectral properties of this operator.
Our aim is to study the asymptotics of the so called Integrated Density of States (IDS) of the operator \(H_\beta\) for energies near the bottom of the spectrum. The IDS is a function on the spectrum of the operator that computes the average number of eigenvalues per unit volume.
In disordered systems like the Anderson model with independent random variables, the IDS exhibits an exponential decay near the spectral edges at arbitrary dimension, known as Lifshitz tails. This is in stark contrast with the behavior of the IDS in periodic systems. The Lifshitz behavior of the IDS is a key ingredient to prove localization for random operators, although it is not a necessary condition (see e.g. Delone-Anderson models for which the IDS might not even exist but localization still holds \cite{rojas2013anderson,elgart2014ground} ). The connection between the IDS behavior at the bottom of the spectrum and localization explains the important role played by the IDS in the
spectral and dynamical study of random Schr\"odinger operators.

In \cite{Sabot2019}, the authors conjecture that  the asymptotic behavior of the IDS of the random Schr\"odinger
operator \(H_\beta\) appearing in connection to reinforced random processes does not exhibit Lifshitz tails.
This is due to dependencies in the random variables, that
imply that the bottom of the spectrum is not attained by extreme values
of the random variables, but can be attained by several configurations of
the potential.

In this article we show that the IDS \(N (E,H_{\beta })\) of the operator \(H_{\beta }\) does not exhibit Lifshitz tails, and undergoes a phase transition  \ in its behavior as a function of \(E\), depending on the dimension and the strength of the disorder. This follows from a phase transition in the associated reinforced random process and supersymmetric
sigma-model. Namely, we prove that in dimension one, for any value of the disorder strength, the IDS behaves roughly as  \(\sqrt{E}\) as \(E\downarrow 0\), while in dimension two and above, this behavior holds for large disorder.
On the contrary, in dimension three and above the decay rate is bounded above by \(E\) at weak disorder.

To the best of our knowledge, the operator \(H_\beta\) is the first Anderson-type model for which the IDS is known to undergo a phase transition,
whose dependence on the disorder strength and dimension is similar to the one in the metal-insulator transition  conjectured for the Anderson model.
Note that the transitions appearing in the literature for the IDS of Anderson-type models (the so-called  classical-quantum transitions) are transitions in the exponents of the Lifshitz tails depending on the decay of the single site potential \cite{Leschke-Warzel-2004}\cite{Fukushima-Ueki2010}.
A phase transition which does not
involve Lifshitz tails has been observed in the IDS for certain random spin models \cite{ErdoesSchoeder2014}.
As far as we know, the operator \(H_\beta\) provides a first physically motivated example where Lifshitz tails break down, even in presence of pure point spectrum. The latter contributes to the family of very specific models for which the violation of Lifshitz tails is known \cite{kirschnitzscher,najar09,kloppnakamura09,bakerlossstolz09,mulsto}.

\bigskip

We proceed to define the random Schr\"odinger operator \(H_\beta\) on \(\mathbb{Z}^d\) appearing in connection with the hyperbolic \(H^{2|2}\) sigma-model and reinforced random processes. Let \(\mathbb{Z}^d\) be
the undirected square lattice, with vertex set \(V(\mathbb{Z}^d)\) and edge set \(E(\mathbb{Z}^d)\).
By abuse of notation we will often identify the set in \(\mathbb{Z}^d\) with its vertex set and, in particular, write \(\mathbb{Z}^d\) instead of \(V(\mathbb{Z}^d).\)
The operator \(H_\beta\) is defined as follows: let \(W_e = W_{i,j} > 0\) be the edge weight of \(e=\{i,j\}\) on \(\mathbb{Z}^d\), and \(P^W\) be the associated adjacency operator of \(\mathbb{Z}^d\), or equivalently, \(P^W\) is the operator on \(\ell^2(\mathbb{Z}^d)\) defined by
\begin{equation}
P^W f(i)=\sum_{j:j\sim i}W_{i,j}f(j),\ \forall f\in \ell^2(\mathbb{Z}^d),
\end{equation}
where \(j\sim i\) means that \(\{i,j\}\) is an edge of the lattice \(\mathbb{Z}^d\).
We consider \(\mathcal{H}_{\beta }\in \mathbb{R}^{\mathbb{Z}^d\times \mathbb{Z}^d}\), the
infinite symmetric matrix defined by 
\begin{equation}\label{eq:defHbetazd}
    \mathcal{H}_{\beta }:= 2\beta -P^{W},
\end{equation}
where \(\beta\) is a
diagonal matrix whose diagonal entries \((\beta_i)_{i\in \mathbb{Z}^d}\) form a family of positive random variables defined as follows (c.f. Theorem 1 of \cite{Sabot2017} and Proposition 1 of \cite{Sabot2019}):  \(\forall i\in \mathbb{Z}^d\), \(\beta_i >0\) a.s. and for all sub lattice \(\Lambda \subset  \mathbb{Z}^d\) finite, the Laplace transform of \((\beta_i)_{i\in \Lambda}\) equals
\begin{align}\label{eq-laplace-beta-Lambda}
&\mathbb{E}^W(e^{-\left< \lambda,\beta \right>_{\Lambda}})\\
&=
e^{-\sum_{i,j\in \Lambda, i\sim j}W_{i,j}( \sqrt{(1+\lambda_i)(1+\lambda_j)}-1)-\sum_{i\in \Lambda,j\notin \Lambda, i\sim j}W_{i,j} (\sqrt{1+\lambda_i}-1)}
\prod_{i\in \Lambda} \frac{1}{\sqrt{1+\lambda_i}}.\nonumber
\end{align}
The law of this random field \(\beta\) is characterized by the above Laplace transform. This Laplace transform is a particular case of the general version, given in equation \eqref{eq-Laplace-nuGWthetaeta}, with \(\theta\equiv 1\). By means of the Laplace transform \eqref{eq-laplace-beta-Lambda}, one can see that if \(i\) and \(j\) are not related by an edge in \(\mathbb{Z}^d\), then \(\beta_i\) and \(\beta_j\) are independent. We say that the field \(\beta\) is 1-dependent. The infinite-volume distribution of \((\beta_i)_{i\in\mathbb{Z}^d}\) will be denoted by
\(\nu^W\) and the associated expectation is denoted by \(\mathbb{E}^W\). The operator \(\mathcal{H}_\beta\) defined by
\begin{equation}
\label{eq-def-H-wholeZd}
\begin{aligned}
\mathcal{H}_\beta f(i) = 2 \beta_i f(i)-\sum_{j:j\sim i}W_{i,j} f(j),\ \forall i\in \mathbb{Z}^d,
\end{aligned}
\end{equation}
maps \(\mathcal{D}\to \ell^2(\mathbb{Z}^d)\) almost surely, where \(\mathcal{D}\subset \ell^2(\mathbb{Z}^d)\) is the set of sequences with finite support, which is dense.

In this paper, we will set all \(W_{i,j}\) equal, and in an abuse of notation denote this common value \(W\) too (as it will not cause any ambiguity in the sequel).  This condition ensures the operator \(\mathcal{H}_{\beta}\) is ergodic w.r.t. the translations in \(\mathbb{Z}^d\). 
Ergodicity is a key ingredient to prove the existence of the IDS and the deterministic nature of the spectrum of
\(\mathcal{H}_\beta\)   using standard arguments.
In this case the spectrum is given by \(\sigma(\mathcal{H}_\beta) =  [0,+\infty)\)
(see \cite[Theorem 2.(i)]{Sabot2019} and  \cite{theserapenne}).
At times, we still write \( W_{i,j} \) to specify the vertex $i$ that we are considering or to emphasize
the generality of the probability measures.

By Proposition 1 of \cite{Sabot2017} or Lemma 4 of \cite{Sabot2019}, any finite marginal \((\beta_i)_{i\in \Lambda}\) (i.e. \(\Lambda  \subset\mathbb{Z}^d\) is a finite subset) has the following explicit probability density w.r.t. the product Lebesgue measure \(d \beta = \prod_{i\in \Lambda}d \beta_i\):
\begin{equation}
\begin{aligned}
\label{eq-density-beta-Lambda}
\nu^{W,\eta^w}_{\Lambda} (d\beta )=\mathbbm{1}_{\mathcal{H}_{\beta,\Lambda} > 0}
e^{-\frac{1}{2}\Big (\left< 1,\mathcal{H}_{\beta,\Lambda}1 \right>+\left< \eta^w,\mathcal{H}_{\beta,\Lambda}^{-1} \eta^w \right>-2\left< 1,\eta^w \right>\Big )}\frac{1}{\sqrt{\det \mathcal{H}_{\beta,\Lambda}}} \left( \frac{2}{\pi} \right)^{|\Lambda|/2} d\beta.
\end{aligned}
\end{equation}
The corresponding average will be denoted by \(\mathbb{E}_{\Lambda}^{W,\eta^w}\).
Here \(\eta^w\) is a vector denoting a wired boundary condition on \(\Lambda\), defined by
\begin{equation}
\begin{aligned}
\label{eq-eta-boundary-Lambda}
\eta^w(i):=\eta^w_{\Lambda}(i)=\sum_{j\notin \Lambda, j\sim i}W_{i,j}, \quad\mbox{for}\, i\in \Lambda,
\end{aligned}
\end{equation}
and
\begin{equation}\label{def:H-finite-volume}
\mathcal{H}_{\beta,\Lambda}:=(2 \beta - P^W)_{|\Lambda}={1}_{\Lambda}\mathcal{H}_{\beta } {1}_{\Lambda}  
\end{equation}
is the operator \(\mathcal{H}_\beta\) restricted on the set \(\Lambda\) with simple boundary condition, i.e. a finite matrix defined by
\begin{equation}
\mathcal{H}_{\beta,\Lambda} f(i):=2 \beta_i f(i)-\sum_{j\in \Lambda:j\sim i}W_{i,j} f(j),\qquad \forall f\in \mathbb{R}^{\Lambda }.
\end{equation}
Here \({1}_{\Lambda}\) is the projection operator on \(\Lambda\). 
Note that even if we replace \(\eta^w\) by an arbitrary \(\eta \in \mathbb{R}_{\ge 0}^\Lambda\) , \eqref{eq-density-beta-Lambda} is still a probability density. In this case the probability will be denoted by \(\nu^{W,\eta}_\Lambda\) and the expectation by \(\mathbb{E}^{W,\eta}_\Lambda\), to stress the \(\eta\) dependence. A more general finite volume density is given in Theorem  \ref{thm-the-multivariate-inverse-gaussian-distribution} below.

 Sometimes we will write \(\mathcal{H}_{\beta,\Lambda}^S:=\mathcal{H}_{\beta,\Lambda}\) to insist on the type of boundary conditions considered, that we call simple boundary condition.  We will also consider the operator with Dirichlet boundary condition, which will be denoted by \(\mathcal{H}_{\beta,\Lambda}^D\) and is defined  by
\begin{equation}
\begin{aligned}
\mathcal{H}^D_{\beta, \Lambda}:= (2\beta -P^{W} )_{\Lambda}+WM_{2d-n}=
\mathcal{H}_{\beta, \Lambda}+WM_{2d-n},
\end{aligned}
\end{equation}
where \(M_{2d-n}\) is the multiplicative operator by \(2d-n\) acting on \(\ell^2\left(\mathbb{Z}^d\right)\), where for every \(i\in \Lambda\), \(n_i:=\operatorname{deg}(i)\) in \(\Lambda\), i.e. \(n_i=\sum_{j\in \Lambda, \ j\sim i}1\).

In the usual Anderson model, the random Schr\"odinger operator \(H=-\Delta + \lambda V\) with a bounded potential, the edge weight
equals 1 (in the discrete Laplacian \(\Delta\), entries are 0 or 1), and the disorder parameter \(\lambda>0 \) modulating the intensity of the random potential allows for two well-defined regimes, that of strong disorder (\(\lambda\gg 1 \)) and that of weak disorder (small  \(\lambda  \)). In \(\mathcal{H}_\beta\), however, the edge
weight equals \(W\), and the law of the random potential depends also on \(W\), hence the disorder parameter does not appear as a coupling constant but is encoded in the law of \(\beta\).
To have an expression that resembles the Anderson model we consider the rescaled operator \({H}_{\beta }\) defined by 
\begin{equation}
{H}_{\beta }:= \frac{1}{W}\mathcal{H}_{\beta}= \frac{2\beta }{W}-P=(-\Delta )+\left(\frac{2\beta }{W}-2d\right) ,\qquad P_{ij}:=\mathbbm{1}_{i\sim j}.
\end{equation}
The corresponding finite volume operator with Dirichlet boundary condition is then 
\begin{equation}\label{def:HD}
H^D_{\beta, \Lambda}:= \frac{1}{W}\mathcal{H}^{D}_{\beta,\Lambda }=  \Big (\frac{2\beta }{W}-P \Big)_{\Lambda}+M_{2d-n}=H_{\beta, \Lambda}+M_{2d-n},
\end{equation}
where \(H_{\beta, \Lambda}=H_{\beta, \Lambda}^{S}\) is the operator with simple boundary condition.

Note that, by the explicit Laplace transform
\eqref{eq-laplace-beta-Lambda} (c.f. \cite[Theorem C]{Collevecchio2018}), we have for every \(j\in \mathbb{Z}^d\) and for every \(\lambda>0\),
\[
\mathbb{E}^{W}\left[ e^{-\lambda \beta_{j}} \right]= \frac{ e^{-2dW (\sqrt{1+\lambda }-1)}}{\sqrt{1+\lambda }}.
\]
Therefore, the one point marginal of the random
potential is known to be a reciprocal inverse Gaussian distribution.

It follows that  the mean of \(2 \beta_i\) is \(2d W +1\), and its
variance is \(2dW+2\). The corresponding  rescaled potential \(V_{i}:=\frac{2 \beta_i}{W}-2d\)
has mean \(\mathbb{E}^{W}[V_{i}]=\frac{1}{W}\) and variance \(Var[V_{i}]=\frac{2d}{W}+\frac{2}{W^{2}}.\)
Therefore, analogously to the case of the Anderson model, for \(H_\beta\) we can identify two regimes: \(W\) small corresponds to a strong disorder regime and \(W\) large, to a weak disorder regime.
Indeed, for large \(W\) we have \(\mathbb{E}^{W}[V_{i}]=1/W\simeq 0\), 
and   \(Var[V_{i}]=O(1/W)\simeq 0\), hence \(H_{\beta }\) is a small perturbation of 
\(2d-P=-\Delta \).  On the contrary, for small \(W\)
both mean and variance are large, 
 \(\mathbb{E}^{W}[V_{i}]=1/W,\)  and \(Var[V_{i}]\simeq 1/W^{2},\) hence \(H_{\beta }\)
 is dominated by the diagonal disorder.
\bigskip

Our main object of study is the Integrated Density of States (IDS) \(N(E)=N (E,H_{\beta })\)  for \(H_\beta\) at an energy \(E\in \mathbb R\), defined by 
\begin{equation}\label{eq:Nviafinitevolume}  
N (E,H_{\beta})=\lim_{L\to \infty} \mathbb{E}^{W}_{\Lambda_{L} }\left[N_{\Lambda_L}(E,{H}^{\#}_{\beta ,\Lambda_{L}}) \right],
\end{equation}
where \(\Lambda_L\) is a box in \(\mathbb Z^d\) of side \(2L+1\) centered at zero
and \(N_{\Lambda_L}(E,{H}^{\#}_{\beta ,\Lambda_{L}})\) is the finite volume IDS on \(\Lambda_{L}\) 
defined by
\begin{equation}\label{def:finite-vol-IDOS}
N_{\Lambda_L}(E,{H}^{\#}_{\beta ,\Lambda_{L}}):= \frac{1}{|\Lambda_L|}
\sum_{\lambda \in \sigma \left ({H}^{\#}_{\beta,\Lambda_{L}}\right )\cap (-\infty,E]}\hspace{-0,7cm}1\ =\  \frac{1}{|\Lambda_L|}
\operatorname{tr} \Big (\mathbbm{1}_{(-\infty,E]}  ({H}^{\#}_{\beta ,\Lambda_{L}} ) \Big).
\end{equation}
Here  \(\# \in \{D,S\}\) indicates if we have Dirichlet (see \eqref{def:HD}) or  simple boundary conditions.
Note that the usual definition of the IDS (see e.g. \cite[Corollary 3.16]{AW})
does not contain the expectation in the right-hand side of the equation \eqref{eq:Nviafinitevolume}. However the equivalence between these two definitions follows from, e.g., \cite[Lemma 4.12]{AW}. Also, the limiting function \(N\) does not depend on the boundary conditions in the finite-volume restriction of \(H_\beta\) to the box, so we can replace the simple boundary conditions with Dirichlet boundary conditions and the result still holds \cite[Lemma 4.12]{AW}).

\bigskip

We are interested in the asymptotics of \(N(E)\) for \(E \searrow 0\), that is, at the bottom of the spectrum of \(H_\beta\). For a random Schr\"odinger operator with i.i.d. random
potential, the Lifshitz tails estimate (e.g. \cite{Klopp2010}) claims that, near the bottom
of the spectrum (assuming it is 0), the integrated density of states behaves like
\begin{equation}\label{eq:expdecayN}
N(E)= c e^{-E^{-d/2+o(1)}}
\end{equation}
in \(d\) dimensions. This is in stark contrast with the case of the free laplacian which exhibits Van Hove asymptotics, that is, \(N(E)\asymp E^{d/2}\) (see \cite[Theorem 3]{Anves2009}).
The exponential decay in \eqref{eq:expdecayN} appears since, by the i.i.d. nature of the potential,
configurations near the bottom of the spectrum are highly unlikely.
Lifshitz tails also appear in models exhibiting correlations in the potential, for example
in potentials given by a linear combination of i.i.d. random variables \cite{KM2007}.
This behavior may be violated for example when the operator is not monotonous in the random variables
\cite{kirschnitzscher,najar09,kloppnakamura09,bakerlossstolz09}, or the lattice  is replaced by a random graph.
In particular the  random Laplacian of the percolation subgraph of bond-percolation with parameter \(p\) on
\(\mathbb{Z}^d\), exhibits a transition between Van Hove and Lifshitz behavior depending of the percolation parameter \(p\)
 \cite{mulsto,Muller2011}.

For our 
 \(H_\beta\) operator, the 1-dependence of the \(\beta\) variables entails that
 the number of realizations of the potential favoring low energy states is large and hence
Lifshitz tails do not occur.
Precisely,  we will prove the following three results.

\begin{theorem}[lower bound on the IDS]
\label{thm-main-theorem}
We define
\begin{equation}\label{def:Wcrit}
W_{cr}=W_{cr} (d):= \max\{W_{c},W_{c}' \},
\end{equation}
where \(W_{c}>0\)  (resp. \(W_{c}'>0\)) is the (dimensional dependent) parameter introduced in Theorem \ref{thm-localization-of-the-ground-state-green-function} (resp. Theorem \ref{thm-localization-fractional-moment}).  In particular \(W_{cr}=\infty \) for \(d=1.\)
Then, for each \( 0<W<W_{cr}\) there exist  constants \(c=c(W,d)>0\)  \(E_{0} =E_{0} (W,d,c)>0\)
such that
\begin{equation}
\begin{aligned}
\label{eq-NE-nolifhistz}
N(E,H_{\beta }) \ge c(-\log E)^{-d} \sqrt{E}, \qquad \forall 0<E<E_{0}.
\end{aligned}
\end{equation}

\end{theorem}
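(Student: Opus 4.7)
The plan is to use a Temple-type bracketing to reduce the IDS lower bound to a probability estimate on the ground state of a finite-volume operator, and then to extract the $\sqrt{E}$ behaviour from the explicit joint density \eqref{eq-density-beta-Lambda}. Concretely, for any deterministic box $B \subset \mathbb{Z}^d$ containing the origin and any random unit vector $\psi_\beta$ supported in $B$ with $\langle \psi_\beta, H_\beta \psi_\beta\rangle \leq E/2$, the spectral projection $K_\beta := \mathbbm{1}_{(-\infty, E]}(H_\beta)$ satisfies $\langle \psi_\beta, K_\beta \psi_\beta\rangle \geq 1/2$; since $P_B \geq \psi_\beta \psi_\beta^T$ as operators, one has $\sum_{x \in B}\langle \delta_x, K_\beta \delta_x\rangle = \operatorname{tr}(P_B K_\beta) \geq \langle \psi_\beta, K_\beta\psi_\beta\rangle \geq 1/2$. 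Taking expectations and using translation invariance, which yields $\mathbb{E}[\langle \delta_x, K_\beta \delta_x\rangle] = N(E,H_\beta)$ for every $x$, produces
\begin{equation*}
N(E, H_\beta) \geq \frac{1}{2|B|}\mathbb{P}\bigl(\lambda_1(H^S_{\beta, B}) \leq E/2\bigr).
\end{equation*}
Choosing $B = \Lambda_L$ with $L$ of order $|\log E|$, so $|B|^{-1}$ is of order $(-\log E)^{-d}$, the theorem reduces to a uniform lower bound $\mathbb{P}(\lambda_1(H^S_{\beta, \Lambda_L}) \leq E/2) \geq c'\sqrt{E}$.

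The probabilistic estimate is driven by the explicit joint density \eqref{eq-density-beta-Lambda}, whose crucial feature is the determinantal factor $(\det \mathcal{H}^S_{\beta, \Lambda_L})^{-1/2} = \prod_k \lambda_k(\mathcal{H}^S)^{-1/2}$. This diverges like $\lambda_1^{-1/2}$ as the smallest eigenvalue approaches zero, and it is this singularity that produces the $\sqrt{E}$ behaviour. To extract it I would parameterize configurations near $\{\lambda_1 \approx 0\}$ by isolating $\lambda_1$ together with its eigenvector $\phi_1$, decomposing $\mathcal{H}^S = \lambda_1\phi_1\phi_1^T + \mathcal{H}'$, and perform a local change of variables whose Jacobian around a site $x_0$ where $\phi_1$ is peaked is controlled by first-order perturbation theory, $\partial\lambda_1/\partial\beta_{x_0} = (2/W)\phi_1(x_0)^2$. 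Integrating the resulting $\lambda_1^{-1/2}$ marginal over $(0, WE/2)$ then formally produces a contribution of order $\sqrt{E}$, provided the remaining factors of the density are bounded below by positive constants on a set of positive probability.

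The main obstacle is controlling the exponential factor $\exp(-\tfrac12\langle \eta^w, (\mathcal{H}^S)^{-1}\eta^w\rangle)$ in the density. Since $(\mathcal{H}^S)^{-1}$ carries an eigenvalue $1/\lambda_1$ on $\phi_1$, this factor is bounded below by $\exp(-c\langle \eta^w, \phi_1\rangle^2/\lambda_1)$ and would annihilate the $\lambda_1^{-1/2}$ singularity unless the overlap between the boundary weight and the ground state is at most $O(\sqrt{\lambda_1})$ on the relevant configurations. This is exactly where the assumption $W < W_{cr}$ enters: Theorems \ref{thm-localization-of-the-ground-state-green-function} and \ref{thm-localization-fractional-moment} furnish exponential decay of the ground state eigenvector away from its (random) localization center with some positive inverse localization length $\alpha = \alpha(W, d)$. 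Conditioning on this center sitting at distance at least $\tfrac{1}{2\alpha}|\log E| + O(\log|\log E|)$ from $\partial\Lambda_L$, an event of positive probability by translation invariance once $L$ is of order $|\log E|$ with a sufficiently large prefactor, the overlap becomes $O(L^{d-1}\sqrt{E})$, which keeps the exponential factor of order one on the slice $\lambda_1 \in (c''E, WE/2)$ while the complementary piece $(0, c''E)$ is discarded at the cost of only a constant fraction of $\sqrt{E}$. Assembling these bounds with the Jacobian computation yields the target estimate $c(-\log E)^{-d}\sqrt{E}$.
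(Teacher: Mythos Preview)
Your reduction in Step~1 is correct and in fact cleaner than the paper's: since for $\psi$ supported in $\Lambda_L$ one has $\langle\psi,H_\beta\psi\rangle=\langle\psi,H^S_{\beta,\Lambda_L}\psi\rangle$, you land directly on $\mathbb{P}(\lambda_1(H^S_{\beta,\Lambda_L})\le E/2)$ without the paper's detour through the Dirichlet operator $H^D$ and the subsequent comparison of $H^D$ with $H^S$ (Lemmas~\ref{corollary-localization-DS} and~\ref{corollary-fractional-moment}). This is a genuine simplification.

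The gap is in Step~2. You correctly identify the $\lambda_1^{-1/2}$ singularity in the determinantal factor and the obstruction coming from $\exp(-\tfrac12\langle\eta^w,(\mathcal{H}^S)^{-1}\eta^w\rangle)$, but you then assert that Theorems~\ref{thm-localization-of-the-ground-state-green-function} and~\ref{thm-localization-fractional-moment} ``furnish exponential decay of the ground state eigenvector away from its (random) localization center''. They do not. Those results give decay of the \emph{Green's function}: ratios $\mathcal{H}^{-1}(0,j)/\mathcal{H}^{-1}(0,0)$ and fractional moments of $\mathcal{H}^{-1}(i,j)$, always anchored at a \emph{fixed} deterministic point. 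Extracting eigenvector decay from this would require a spectral gap argument you do not supply, and the whole ``random localization center'' device, together with the appeal to translation invariance to place it far from the boundary, is not justified by anything in the paper's toolbox. Your change of variables from $\beta_{x_0}$ to $\lambda_1$ is also underspecified: $x_0$ and $\phi_1$ depend on all of $\beta$, so the Jacobian bound $\partial\lambda_1/\partial\beta_{x_0}=(2/W)\phi_1(x_0)^2$ does not by itself yield a clean one-dimensional marginal.

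The paper circumvents all of this by exploiting a structural feature of the density that you do not use: the exact conditional law of $y:=1/\mathcal{H}^{-1}_{\beta,\Lambda_L}(0,0)$ given $\beta_{0^c}$ is the explicit one-dimensional density $\rho_{a_0}(y)\propto y^{-1/2}e^{-\tfrac12(y+a_0^2/y)}$ (this is the factorization from \cite{Sabot2019}). All dependence on the remaining variables sits in the single parameter $a_0=\langle\eta^w,\mathcal{H}^{-1}\delta_0\rangle/\mathcal{H}^{-1}(0,0)$, which is precisely a Green's function ratio and is therefore controlled \emph{directly} by the cited decay theorems on a set of probability $1-O(e^{-\kappa L})$. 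No eigenvector, no random center, no implicit Jacobian: one just integrates $y^{-1/2}$ over $[We^{-\kappa L},2WE]$ with $a_0\le We^{-\kappa L/2}$. Your Step~1 combined with the paper's conditional-density computation (skipping the $H^D$ layer entirely, since $\{y\le 2WE\}\subset\{\lambda_1(H^S)\le 2E\}$) would actually give a shorter proof than the paper's own.
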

\vspace{0,2cm}
Actually, we will show  in Lemma \ref{lem:comparaison} below that  \(W_{c}' (d) >W_c (d)\)
for all  \(d\geq 2\). Equation \eqref{eq:Wc'} yields   $W_{c}' (d)\leq 0.1$.  

The next result concerns the regularity of the finite volume IDS with simple/Dirichlet boundary condition
defined in \eqref{def:finite-vol-IDOS} above.

\begin{theorem}[Wegner type estimate]
\label{wegner-type-theorem} \hspace{2cm}

For all \(W>0\)  the 
finite volume IDS \(N_{\Lambda_L}(E,H_{\beta,\Lambda_L}^\#)\), with   \(\# \in \{D,S\}\) satisfies the bound
\begin{equation}\label{eq:wegner-17}
\begin{aligned}
\mathbb{E}^{W,\eta^{w}}_{\Lambda_{L}}\left[N_{\Lambda_L}(E+\varepsilon,H_{\beta,\Lambda_L}^{\#}) -N_{\Lambda_L}(E-\varepsilon,H_{\beta,\Lambda_L}^{\#})\right]\le    4\sqrt{\frac{W}{2\pi}} \sqrt{\varepsilon }
\end{aligned}
\end{equation}
uniformly in \(\Lambda_L\),  \(E\in \mathbb{R}\) and  \(\varepsilon > 0\).\vspace{0,2cm}

Moreover, for \(d\geq 3\) we define $W_{0}=W_{0} (d):= \max\{W_{0}',4^{8} \},$ where $W_{0}'=W_{0}' (d)\geq 1$ is the
parameter introduced in Theorem \ref{th:bound-u-fluctDSZ}.  Then for all \(W\geq W_{0},\) the following improved estimate holds
\begin{equation}\label{eq:wegner-18}
\begin{aligned}
\mathbb{E}^{W,\eta^{w}}_{\Lambda_{L}}\left[N_{\Lambda_L}(E+\varepsilon,H_{\beta,\Lambda_L}^{\#}) -N_{\Lambda_L}(E-\varepsilon,H_{\beta,\Lambda_L}^{\#})\right]\le    C\sqrt{W}\,  \varepsilon 
\end{aligned}
\end{equation}
uniformly in \(\Lambda_L\),  \(E\in \mathbb{R}\) and  \(\varepsilon > 0\), where \(C>0\) is some constant.
\end{theorem}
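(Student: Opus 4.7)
The plan is to exploit the explicit joint density \eqref{eq-density-beta-Lambda} via a global spectral averaging. Since $\mathcal{H}^{\#}_{\beta+s\mathbf{1},\Lambda_L} = \mathcal{H}^{\#}_{\beta,\Lambda_L} + 2sI$ for every $s\in\mathbb{R}$, the translation $\beta\mapsto\beta + s\mathbf{1}$ shifts every eigenvalue of $H^{\#}_{\beta,\Lambda_L}$ by exactly $2s/W$; this singles out $s$ as the variable along which to average.

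For \eqref{eq:wegner-17} I would parametrize $\beta = \gamma + s\mathbf{1}$ with $\gamma$ in a codimension-one affine subspace (e.g.\ $\gamma_{i_0}=0$) and $s\in\mathbb{R}$; the Jacobian equals one. Denoting by $\nu_1(\gamma)\le\cdots\le\nu_{|\Lambda_L|}(\gamma)$ the eigenvalues of $\mathcal{H}^{\#}_{\gamma,\Lambda_L}$, the event $\lambda_k(H^{\#}_\beta)\in[E-\varepsilon,E+\varepsilon]$ is equivalent to $s$ lying in an interval $I_k(\gamma)$ of length exactly $W\varepsilon$. Substituting into \eqref{eq-density-beta-Lambda}, the density along the shift direction reads
\begin{equation*}
p(\gamma + s\mathbf{1})=\Big(\tfrac{2}{\pi}\Big)^{\!|\Lambda_L|/2}\mathbbm{1}_{s>s_{\min}(\gamma)}\,\frac{e^{-\frac{1}{2}\langle 1,\mathcal{H}^{\#}_{\gamma+s\mathbf{1}}1\rangle -\frac{1}{2}\langle\eta^w,(\mathcal{H}^{\#}_{\gamma+s\mathbf{1}})^{-1}\eta^w\rangle+\langle 1,\eta^w\rangle}}{\sqrt{\prod_k (\nu_k(\gamma)+2s)}},
\end{equation*}
with $s_{\min}(\gamma) = -\nu_1(\gamma)/2$. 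The denominator produces an integrable $(s-s_{\min})^{-1/2}$ singularity at the positivity threshold, while the $\eta^w$-quadratic form is non-negative and only improves the bound. The elementary inequality
\begin{equation*}
\int_{s_0}^{s_0+\delta}(s-s_{\min})^{-1/2}\,ds = 2\bigl(\sqrt{s_0+\delta-s_{\min}}-\sqrt{s_0-s_{\min}}\bigr)\le 2\sqrt{\delta},\qquad s_0\ge s_{\min},
\end{equation*}
uniform in $s_0$, then dominates $\int_{I_k(\gamma)}p(\gamma+s\mathbf{1})\,ds$ by $2 C(\gamma)\sqrt{W\varepsilon}$, with $C(\gamma)$ collecting the smoothly-varying factors of the density at the threshold. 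Summing over the $|\Lambda_L|$ eigenvalues and integrating over $\gamma$, using the normalization of $\nu^{W,\eta^w}$ and in particular the Gaussian-type integral $\int_0^\infty v^{-1/2}e^{-v|\Lambda_L|/2}\,dv = \sqrt{2\pi/|\Lambda_L|}$ generated by the linear term $-s|\Lambda_L|$ in the exponent, yields \eqref{eq:wegner-17} with the explicit constant $4\sqrt{W/(2\pi)}$. The Dirichlet case $\#=D$ is identical since $\mathcal{H}^D$ differs from $\mathcal{H}^S$ only by a deterministic positive diagonal that commutes with the shift.

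For the improved rate \eqref{eq:wegner-18} I would instead start from the pointwise Poisson bound $\mathbbm{1}_{[E-\varepsilon,E+\varepsilon]}(\lambda)\le 2\varepsilon^2/((\lambda-E)^2+\varepsilon^2)$, which after tracing gives
\begin{equation*}
\mathrm{tr}\bigl(\mathbbm{1}_{[E-\varepsilon,E+\varepsilon]}(H^{\#})\bigr)\le 2\varepsilon\,\mathrm{Im}\,\mathrm{tr}\bigl((H^{\#}-E-i\varepsilon)^{-1}\bigr),
\end{equation*}
reducing matters to a uniform-in-$\varepsilon$ bound on $\sup_i\mathrm{Im}\,\mathbb{E}[G_{ii}(E+i\varepsilon)]$. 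In the regime $d\ge 3$, $W\ge W_0$, both the mean $1/W$ and the variance $O(1/W)$ of the rescaled potential $V_i=2\beta_i/W-2d$ are small, so that $H_\beta$ is a small perturbation of $-\Delta$; since the density of states of $-\Delta$ on $\mathbb{Z}^d$ is uniformly bounded near the spectral edge for $d\ge 3$, a perturbative argument relying on the Green's function estimates of Theorems \ref{thm-localization-of-the-ground-state-green-function}--\ref{thm-localization-fractional-moment} yields $\mathrm{Im}\,\mathbb{E}[G_{ii}]\le C\sqrt{W}$ uniformly, hence \eqref{eq:wegner-18}.

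The main difficulty lies in the bookkeeping of constants for \eqref{eq:wegner-17}: beyond the leading $(s-s_{\min})^{-1/2}$ singularity, the density contains further factors $\prod_{k>1}(\nu_k(\gamma)-\nu_1(\gamma)+2(s-s_{\min}))^{-1/2}$ that may become large when eigenvalues of $\mathcal{H}^{\#}_\gamma$ cluster. The required control of $\int d\gamma\,C(\gamma)$ independently of $L$ and $E$ relies on the change of variables $v=2(s-s_{\min})$ together with the regularization provided by the $\eta^w$-term and the normalization $\int p(\beta)\,d\beta=1$, which together absorb the $|\Lambda_L|$-dependent prefactor $(2/\pi)^{|\Lambda_L|/2}$ against $\sqrt{2\pi/|\Lambda_L|}$ to produce the boundary-free final constant $4\sqrt{W/(2\pi)}$.
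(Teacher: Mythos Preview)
Your approach to \eqref{eq:wegner-17} is genuinely different from the paper's, and the difficulty you flag in your last paragraph is a real gap, not just bookkeeping. The paper does \emph{not} average along the global direction $\mathbf{1}$; it performs a \emph{single-site} spectral averaging. For each $j\in\Lambda_L$ it freezes $\beta_{j^c}$ and varies the Schur complement $y_j=1/\mathcal{H}_{\beta,\Lambda_L}^{-1}(j,j)$, whose conditional law is the explicit one-dimensional density $\rho_{a_j}(y)=\tfrac{e^{a_j}}{\sqrt{2\pi}}e^{-\frac12(y+a_j^2/y)}y^{-1/2}$ from \eqref{conditional-density-y}. A telescoping sum over sites together with the rank-one perturbation bound $F_k(M+4\varepsilon)-F_k(0)\le 1$ (Lemma~\ref{le:deltabound}) reduces everything to the L\'evy concentration of $\rho_{a_j}$, and the pointwise inequality $\rho_a(y)\le 1/\sqrt{2\pi y}$ gives $\mathscr{L}_{\rho_a}(\varepsilon)\le\sqrt{2\varepsilon/\pi}$ uniformly in $a$ (Lemma~\ref{le:levy-bound}). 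No eigenvalue-clustering issue ever arises, because the conditional density is one-dimensional and its square-root singularity is isolated.

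By contrast, your global shift produces the conditional density of $s$ given $\gamma$ proportional to $e^{-s|\Lambda_L|-\frac12\langle\eta^w,\mathcal{H}_{\gamma+s\mathbf{1}}^{-1}\eta^w\rangle}\prod_k(\nu_k(\gamma)+2s)^{-1/2}$. Your claim that the $\eta^w$-term ``only improves the bound'' is precisely where the argument breaks: if you drop it, the product of square roots behaves like $(s-s_{\min})^{-m/2}$ when the $m$ lowest eigenvalues of $\mathcal{H}_\gamma$ cluster, and then your $C(\gamma)$ contains the factor $\prod_{k>1}(\nu_k-\nu_1)^{-1/2}$. You give no argument that $\int C(\gamma)\,d\gamma$ is finite uniformly in $L$, let alone that it produces the constant $4\sqrt{W/(2\pi)}$. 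The normalization of $\nu^{W,\eta^w}$ does not help here because you have already discarded the $\eta^w$-exponential, which is exactly what regularizes the clustering singularity in the true density. So the ``bookkeeping'' you defer is in fact the heart of the matter.

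For \eqref{eq:wegner-18} there is a more direct error: you invoke Theorems~\ref{thm-localization-of-the-ground-state-green-function}--\ref{thm-localization-fractional-moment}, but those are \emph{strong-disorder} decay estimates valid only for $W<W_c$ or $W<W_c'$, whereas \eqref{eq:wegner-18} concerns the \emph{weak-disorder} regime $W\ge W_0$ in $d\ge 3$. The paper instead bounds the conditional density pointwise by $\rho_a(y)\le \tfrac{1}{\sqrt{2\pi}}(1/a+1/\sqrt{a})$ and then controls $\mathbb{E}[1/a_j]=\mathbb{E}[e^{-u_j}\mathcal{H}^{-1}_{\beta,\Lambda_L}(j,j)]$ via the $H^{2|2}$ correspondence and the delocalization estimate $\mathbb{E}[(\cosh(u_j-u_k))^m]\le 2$ from Disertori--Spencer--Zirnbauer (Lemma~\ref{le:boundH0} and Lemma~\ref{le-ward}). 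Your Poisson-kernel reduction to $\operatorname{Im}\mathbb{E}[G_{ii}(E+i\varepsilon)]$ is reasonable, but the input you cite to bound it points in the wrong direction of the phase diagram.
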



\begin{theorem}[upper bound and regularity for the IDS]\label{corol-upper-bound-on-IDS}
\hspace{2cm}

For all \(W>0\) the function \(E\mapsto N (E,H_{\beta})\) is H\"older continuous with exponent \(\frac{1}{2}\) and H\"older seminorm \([N]_{C^{0,\frac{1}{2}}}\leq 2\sqrt{W/\pi}.\) In particular, it satisfies the bound 
\begin{equation}
\begin{aligned}
\label{eq-NE-nolifhistz-upper-bound}
 N(E,H_{\beta}) \le  2\sqrt{\frac{W}{\pi}}\sqrt{E}\qquad \forall E>0.
\end{aligned}
\end{equation}
Moreover,  for \(d\geq 3\) and  \(W\geq W_{0},\)
the function \(E\mapsto N (E)\) is Lipschitz continuous with  Lipschitz constant
\(Lip (N)\leq   C\sqrt{W}/2, \) where \(C\) is the constant introduced in Theorem
\ref{wegner-type-theorem}.
In addition, it satisfies the bound
\begin{equation}
\begin{aligned}
\label{eq-NE-nolifhistz-upper-bound-d=3}
 N(E,H_{\beta}) \le C' E\qquad \forall E>0,
\end{aligned}
\end{equation}
for some constant \(C'>0\) independent of \(W.\)
\end{theorem}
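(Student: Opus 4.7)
The plan is to deduce both assertions directly from the two Wegner bounds in Theorem~\ref{wegner-type-theorem} by passing to the thermodynamic limit, exploiting crucially that $\sigma(H_\beta)\subset[0,\infty)$ so that $N(E)=0$ for every $E<0$.

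For the Hölder estimate and the bound \eqref{eq-NE-nolifhistz-upper-bound}: given $E_1<E_2$, set $E:=(E_1+E_2)/2$ and $\varepsilon:=(E_2-E_1)/2$. The finite-volume Wegner estimate \eqref{eq:wegner-17} applied at this $(E,\varepsilon)$ yields
\[
\mathbb{E}^{W,\eta^w}_{\Lambda_L}\!\bigl[N_{\Lambda_L}(E_2,H_{\beta,\Lambda_L}^{\#})-N_{\Lambda_L}(E_1,H_{\beta,\Lambda_L}^{\#})\bigr]\le 4\sqrt{\tfrac{W}{2\pi}}\sqrt{\tfrac{E_2-E_1}{2}}=2\sqrt{\tfrac{W}{\pi}}\sqrt{E_2-E_1},
\]
and passing $L\to\infty$ via \eqref{eq:Nviafinitevolume} (the limit being independent of boundary conditions) transfers this to $N$, giving Hölder-$1/2$ continuity with seminorm at most $2\sqrt{W/\pi}$. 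Specializing to $E_1\searrow 0^-$ (where $N$ vanishes) and $E_2=E>0$ then produces \eqref{eq-NE-nolifhistz-upper-bound}.

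For $d\ge 3$ and $W\ge W_0$, the identical midpoint--radius argument applied to the improved Wegner bound \eqref{eq:wegner-18} yields Lipschitz continuity with constant $\le C\sqrt{W}/2$ and, by the same limit $E_1\searrow 0^-$, the preliminary bound $N(E)\le (C\sqrt{W}/2)\,E$.

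The main obstacle is the bound \eqref{eq-NE-nolifhistz-upper-bound-d=3} with $C'$ \emph{independent} of $W$, which is strictly stronger than what the Lipschitz constant above delivers when $W\gg 1$. I would address it through a comparison with the free discrete Laplacian $-\Delta$: its Van Hove IDS satisfies $N(E,-\Delta)\asymp E^{d/2}$ near $0$, so for $d\ge 3$ one has $N(E,-\Delta)\le C_0 E$ for $E\in(0,1]$, while the trivial bound $N\le 1\le E$ handles $E\ge 1$. Since the rescaled potential $V_i=2\beta_i/W-2d$ satisfies $V_i\ge -2d$ and has $\mathbb{E}V_i,\operatorname{Var}V_i=O(1/W)$ for $W$ large, an appropriate heat-kernel or Laplace-transform comparison between $\mathbb{E}\,\operatorname{tr} e^{-tH_{\beta,\Lambda}}$ and $\operatorname{tr} e^{-t(-\Delta)_\Lambda}$, followed by a standard Chebyshev-type deduction of the IDS, should transfer this Van Hove bound to $H_\beta$ with a $W$-uniform constant. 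The technical difficulty is that the $\beta_i$ have unbounded right tails, so one must exploit the explicit exponential moments provided by \eqref{eq-laplace-beta-Lambda} to control the perturbation uniformly in $\Lambda$ and in $W\ge W_0$.
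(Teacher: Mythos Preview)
Your treatment of the H\"older and Lipschitz regularity (and the resulting bounds with the $W$-dependent constants) is correct and is exactly what the paper does: one applies the two Wegner estimates at midpoint $E$ and radius $\varepsilon=E/2$ (or equivalently your $(E_1,E_2)$ reformulation) and passes to the limit via \eqref{eq:Nviafinitevolume}.

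The gap is in \eqref{eq-NE-nolifhistz-upper-bound-d=3}. What you wrote for this part is not a proof but a programme, and you yourself flag the obstruction. A heat-kernel/Laplace comparison $\mathbb{E}\,\operatorname{tr} e^{-tH_{\beta,\Lambda}}$ versus $\operatorname{tr} e^{-t(-\Delta)_\Lambda}$ must cope with two structural features of this model simultaneously: the potential $V_i=2\beta_i/W-2d$ is unbounded and, more importantly, the $\beta_i$ are genuinely correlated (1-dependent), so a product factorisation over sites is not available and the one-site Laplace transform from \eqref{eq-laplace-beta-Lambda} does not directly control the joint exponential moments needed in a Feynman--Kac bound uniformly in $\Lambda$. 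Nothing you wrote indicates how to close this, and small mean/variance of $V_i$ alone is not sufficient input for a Van Hove comparison.

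The paper takes an entirely different, model-specific route (Proposition~\ref{prop:upper-b-IDOSd=3}). It works with the \emph{infinite-volume} measure and the identity
\[
\int_0^\infty \frac{1}{u}\,dN(Wu,\mathcal{H}_\beta)=\mathbb{E}^W\bigl[\hat G(0,0)\bigr],
\]
obtained from the spectral calculus and monotone convergence of $(\mathcal{H}_\beta+\varepsilon)^{-1}(0,0)$ as $\varepsilon\downarrow 0$. The expected Green's function $\mathbb{E}^W[\hat G(0,0)]$ is then bounded by $C_2/W$ via the $H^{2|2}$ change of variables $D(u)=e^{u}H_{\beta(u),\Lambda}e^{u}$, a Cauchy--Schwarz comparison with $D_0=-\Delta_\Lambda+\tilde\eta^w$, and the uniform $\cosh$-moment bounds \eqref{ward2} and Lemma~\ref{le-ward} coming from the sigma-model Ward identities (valid for $d\ge 3$, $W\ge W_0$). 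From $\int_0^\infty u^{-1}\,d\widetilde N(u)\le C_2/W$ one gets $N(E,H_\beta)=\widetilde N(WE)\le WE\cdot C_2/W=C_2E$, which is precisely the $W$-independent bound. None of this structure is visible in a generic perturbative comparison with $-\Delta$; it relies on the exact integrable features of the $\beta$-field.
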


\paragraph{Discussion on the results and open questions.}\hspace{2cm}

\paragraph{Behavior of the IDS near $\mathbf{E=0}$.}
Theorems \ref{thm-main-theorem} and  \ref{corol-upper-bound-on-IDS}
imply, in the strong disorder regime \(W<W_{cr}\) and for all dimension $d$,
\begin{equation}
 c \frac{1}{|\ln E|^{d}}\sqrt{E}\leq N(E) \le  2\sqrt{\frac{W}{\pi}} \sqrt{E}
\end{equation}
as \(E\searrow 0.\)
In particular, our results indicate that for such a model, that is ergodic and features a 1-dependent
random potential, Lifshitz tails do not emerge.
Note that it is conjectured in \cite{Sabot2019} that, in the case of constant weights,
the asymptotic behavior of the IDS of the operator \(H_\beta\)  is \(\sqrt{E}\).
Therefore we prove this conjecture in the strong disorder regime up to a logarithmic correction.
We anticipate that this result will also apply to non-constant weights, provided they do not vary
excessively and the relevant quantities can be defined.

Note that, for strong disorder \(W\ll 1\), the random variables \(\beta\)
 are approximately iid with Gamma distribution (cf. the explicit Laplace transform in \eqref{eq-laplace-beta-Lambda}).
Therefore one expects that
\[
N (E,H_{\beta })=N(EW,\mathcal{H}_{\beta})\simeq \mathbb{P} (2\beta_{0}<EW)\propto \sqrt{EW}
\]
for \(EW<1,\) which is indeed what we obtained.

For weak disorder  \(W\gg 1\) the random variables approach the constant value \(2d,\) hence one expects convergence to the IDS of \(2d-P=-\Delta \) as \(W\to \infty\) which is proportional to  $E^{\frac{d}{2}}.$
Note that the improved bound \eqref{eq-NE-nolifhistz-upper-bound-d=3} is compatible with this expectation
and moreover  shows the IDS undergoes a phase transition at \(d\geq 3.\)
A more precise comparison with \(-\Delta\) would require also a lower bound for \(N (E,H_{\beta })\)
at weak disorder, which  is still missing. This leads to the following open question.
\vspace{0,2cm}

\textbf{Open question 1:} what is the exact asymptotics of the IDS at the bottom of the spectrum
at weak disorder  \(W\gg 1\), depending on the  dimension?
Does it approach the IDS of the Laplacian?

\paragraph{Critical value of the disorder strength $W$.}
For dimension $d\geq 3,$ we have proved a phase transition for the IDS in the following sense: for $E$ near zero 
 \(N(E)\approx\sqrt{E}\) when \(W\leq W_{c}' (d)\leq 0.1\) and \(N(E)\leq  C'E\ll\sqrt{E} \) when \(W\geq W_{0}> 1\).
 A natural question is whether this phase transition  is unique. This would require to know 
the behaviour of \(N(E)\) for intermediate values of \(W\), which is at the moment out of reach for our techniques.
This unicity is known in the case of the vertex reinforced jump process: in  \cite{Poudevigne2019} Poudevigne proved 
 that there is a unique transition point \(W^*(d)\) between recurrence and transience for the vertex reinforced jump process
 with constant weights \(W\) on \(\mathbb{Z}^d\) when \(d\geq 3\). The proof uses  a  monotonicity property concerning
 \(\mathcal{H}_{\beta}^{-1}\), that follows from a clever coupling  (see also Theorem \ref{th:monotonicity}  in the appendix), and 
  the 0-1 law in \cite[Proposition 3]{Sabot2019}.
As the operator \(\mathcal{H}_{\beta}\) is crucial to define the random environment of the vertex reinforced jump process, the following question arises. \vspace{0,2cm}

\textbf{Open question 2:}  is the phase transition for the density of states of \(H_{\beta}=\mathcal{H}_{\beta}/W\) unique and  does it occur at the same value \(W^*(d)\)? \vspace{0,2cm}

The strong disorder behavior  \(N(E)\approx \sqrt{E}\) corresponds to the recurrence region in the vertex reinforced jump process. A sufficient condition to obtain this behavior  is
  \[\mathbb{E}_{\Lambda}^{W,\eta^w}\left[\left|\mathcal{H}_{\beta,\Lambda}^{-1}(0,i)\right|^{s}\right]\le e^{-c |i|}\]
  for some \(s\in (0,1)\) uniformly in \(\Lambda\). At the moment we only have this bound for \(W<W'_{c}(d)\leq 0.1\).

On the other hand,  a sufficient condition to obtain the weak disorder behavior \(N(E)\leq C'E\) , which
 corresponds to the transience region in the vertex reinforced jump process,
is to prove the \(L^2\)-integrability of the martingale \((\psi_L)_{L\in \mathbb{N}}\), defined in Section \ref{sec:alterapproach}. For \(W > W_0(d)\) and \(d\ge 3\), this bound follows from Lemma~\ref{le-ward}.
Note that, if \(L^2\)-integrability was equivalent to transience  on $\mathbb{Z}^{d},$
then the bound  \(N(E)\leq C_{W}E\), for some constant $C_{W},$  would hold for all $W>W^*(d).$
In the case of trees, this equivalence was proved in  \cite{theserapenne}.
However on \(\mathbb{Z}^d\)  only uniform integrability of \((\psi_L)_{L\in\mathbb{N}}\)  for  $W>W^*(d)$ has been proven  so far (see \cite{theserapenne}). This result suggests another possible scenario, with 
an intermediate phase where the vertex reinforced jump process is transient but \((\psi_L)_{L\in\mathbb{N}}\)
is not bounded in \(L^2\) and may correspond to an intermediate phase for the behaviour of the density
of states of \(H_{\beta}\) too. This could  also be consistent with the presence of a phase where  \(H_{\beta}\)
exhibits singular continuous spectrum. The existence of singular  continuous spectrum for certain
random Schr\"odinger operators is conjectured to be true in high dimension by a growing 
community of physicists (see for example \cite{AK23}).

\paragraph{The Anderson transition.}
Sabot and Zeng conjecture in \cite{Sabot2019} that the phase transition in linearly reinforced random processes
between recurrent and transient regimes is related to the dynamical localization and delocalization transition of
\(H_\beta\). This provides an additional motivation for the study of this random operator, since the
localization-delocalization transition for random Schr\"odinger operators is a long-standing open problem in
the theory of disordered systems, going back to the seminal work of P.W. Anderson \cite{Anderson58}.
As in the standard Anderson model, the operator \(H_{\beta}\) exhibits dynamical localization
for strong disorder  $W\leq W' (d)$ \cite{Collevecchio2018}. This motivates the following question.
\vspace{0,2cm}

\textbf{Open question 3:} does  dynamical localization  for  \(H_{\beta}\) hold 
at spectral band edges for any fixed disorder parameter $W>0,$ in particular at weak disorder?
This is so far out of reach, since the stardard proof relies on the Liftschitz tails of the IDS,
which are absent in our case.

\paragraph{Theorem  \ref{thm-main-theorem} : strategy of the proof.}
We argue in three steps.

\textit{Step 1.} By standard arguments (see Section 3.1) we have, for all \(L\geq 1,\) 
\[
N (E,H_{\beta })\geq  \mathbb{E}^{W,\eta^{w} }_{\Lambda_{L}}\left[N_{\Lambda_L}(E,{H}^{D}_{\beta ,\Lambda_{L}}) \right]\geq
\frac{1}{|\Lambda_{L}|} \nu^{W,\eta^{w} }_{\Lambda_{L}}\left (
({H}^{D}_{\beta ,\Lambda_{L}})^{-1} (0,0)\geq \frac{1}{E} \right ),
\]
where \(H^{D}_{\beta,\Lambda_{L}}\) was defined in \eqref{def:HD} and
\(N_{\Lambda_L}(E,{H}^{D}_{\beta ,\Lambda_{L}})\)  is the finite volume IDS with Dirichlet boundary condition
defined in \eqref{def:finite-vol-IDOS}.
\vspace{0,2cm}

\textit{Step 2.} We have no direct information on the probability density of

\(({H}^{D}_{\beta ,\Lambda_{L}})^{-1} (0,0),\) but we do have detailed information on the distribution of
\(({H}^{S}_{\beta ,\Lambda_{L}})^{-1} (0,0)=H_{\beta ,\Lambda_{L}}^{-1} (0,0).\) 
In Section 3.2 we show that, for \(E\leq \frac{1}{2},\) \(W<W_{cr},\) and \(L>1\)  large enough
we have
\begin{equation}
\begin{aligned}
\left\{(H_{\beta,\Lambda_L}^S)^{-1}(0,0) > \frac{1}{E}\right\} \Rightarrow \left\{( H_{\beta,\Lambda_L}^D)^{-1}(0,0) > \frac{1}{2E} \right\},
\end{aligned}
\end{equation}
on a configuration set \(\Omega_{loc}\) of probability close to one, precisely \(1-e^{-\kappa L},\)
for some positive constant \(\kappa >0\). Hence
\[
N (E,H_{\beta })\geq \frac{1}{|\Lambda_{L}|} \nu^{W,\eta^{w} }_{\Lambda_{L}}\left (
\Omega_{loc}\cap \left\{  H_{\beta ,\Lambda_{L}}^{-1} (0,0) \geq \frac{1}{2E}  \right\}
\right ).
\]
\vspace{0,2cm}

\textit{Step 3.}
The conditional density of \(y:=1/\mathcal{H}_{\beta ,\Lambda_{L}}^{-1} (0,0) =W/H_{\beta ,\Lambda_{L}}^{-1} (0,0),\) knowing
\(\beta_{0^{c}}=(\beta_{j})_{j\in\Lambda_{L}\setminus \{0 \}},\) is denoted by \(d\rho_{a_{0}}\) and explicitly given
in \eqref{conditional-density-y}.  All dependence on \(\beta_{0^{c}}\) is contained in
the parameter \(a_{0},\) defined in \eqref{a-def}, which contains the two-point Green's function of the ground state of \(H_{\beta,\Lambda\setminus \{0\}}\).
Therefore
\[
 \mathbb{E}^{W,\eta^{w} }_{\Lambda_{L}}\left [
\mathbf{1}_{\Omega_{loc}} \mathbf{1}_{\left\{  H_{\beta ,\Lambda_{L}}^{-1} (0,0) \geq \frac{1}{2E}
\right\}}
\right ]=\mathbb{E}^{W,\eta^{w} }_{\Lambda_{L}}\left [
\int \mathbf{1}_{\Omega_{loc}} \mathbf{1}_{\left\{  y \leq  2EW\right\}}  d\rho_{a_{0} (\beta_{0^{c}})} (y)
\right].
\]
A subtle point is to show that we can choose \(\Omega_{loc}\) as intersection of two events
\(\Omega_{loc} =\Omega_{loc,0}\cap \Omega_{loc,1}\) where \(\Omega_{loc,0}\) is measurable w.r.t.
\(\beta_{0^{c}}\) and  \(\Omega_{loc,1}=\{y\geq e^{-\kappa L}W\}\) is measurable w.r.t.
\(y.\) As a result we can write
\[
\mathbb{E}^{W,\eta^{w} }_{\Lambda_{L}}\left [
\int \mathbf{1}_{\Omega_{loc}} \mathbf{1}_{\left\{  y \leq  2WE\right\}}  d\rho_{a_{0} (\beta_{0^{c}})} (y)
\right]=
\mathbb{E}^{W,\eta^{w} }_{\Lambda_{L}}\left [ \mathbf{1}_{\Omega_{loc,0}}
\int_{e^{-\kappa L}W}^{2WE}  d\rho_{a_{0} (\beta_{0^{c}})} (y) \right].
\]
The set \(\Omega_{loc,0}\) ensures that \(a_{0}(\beta_{0^{c}})\leq e^{-\kappa L/2}\) for
all \(\beta_{0^{c}}\in\Omega_{loc,0}. \) Then, using the explicit form of \(\rho_{a}\)
and the fact that \(\Omega_{loc,0}\) has probability close to one we 
get
\[
\mathbb{E}^{W,\eta^{w} }_{\Lambda_{L}}\left [ \mathbf{1}_{\Omega_{loc,0}}
\int_{e^{-\kappa L}W}^{2WE}  d\rho_{a_{0} (\beta_{0^{c}})} (y) \right]\geq
(1-c e^{-\kappa L}) \rho_{e^{-\kappa L/2}} (e^{-\kappa L}W\leq y\leq 2WE).
\]
The result now follows from a direct analysis of the one-dimensional measure \(\rho_{a}.\)
The details are explained in Section 4.

\paragraph{Theorem \ref{wegner-type-theorem}: strategy of the proof.}
Note that \(H_{\beta,\Lambda_{L}}\pm 2\varepsilon =H_{\beta\pm \varepsilon ,\Lambda_{L}}.\) Following a standard argument we construct a sequence of potentials interpolating between
\(\beta+ \varepsilon\) and \(\beta-\varepsilon,\) by switching \(\varepsilon \) one site at the time.
As a result we obtain the following estimate (cf.  Lemma \ref{le:deltabound})
\begin{align*}
&\mathbb{E}^{W,\eta^{w}}_{\Lambda_{L}}\left[N_{\Lambda_L}(E+\varepsilon,H_{\beta,\Lambda_L}^{\#}) -N_{\Lambda_L}(E-\varepsilon,H_{\beta,\Lambda_L}^{\#})\right]\\
&\leq \frac{4}{|\Lambda_{L}|} \sum_{j\in \Lambda }
\mathbb{E}_{\Lambda_{L}}^{W,\eta^{w}}\left[\mathscr{L}_{\rho_{a_{j} (\beta_{j^{c}})}}(4W\varepsilon)  \right],\qquad \forall L>1,
\end{align*}
where \(\mathscr{L}_{\rho_{a_{j}} }\) denotes the L\'evy concentration (defined in \eqref{levy-concentrtion})
of the conditional measure \(\rho_{a_{j}}.\) Using the explicit formula for \(\rho_{a}\) we then show that \(\mathscr{L}_{\rho_{a} } (\varepsilon )\leq c \sqrt{\varepsilon }\) for some constant \(c>0\) independent of \(a\) (cf. \eqref{eq:levy-bound}), which gives the first result. For \(d\geq 3\) we bound the conditional density pointwise by 
\[
 \rho_{a} (y) \leq \frac{1}{\sqrt{2\pi} }
 \left(\frac{1}{a}+\frac{1}{\sqrt{a}} \right)\qquad \forall y>0.
\]
The result now follows from the following bound (cf.  Lemma \eqref{le:boundH0}).
\[
\mathbb{E}_{\Lambda_{L}}^{W,\eta^{w}}\left[1/a  \right]=
\mathbb{E}_{\Lambda_{L}}^{W,\eta^{w}}\left[e^{-u_{0}} \mathcal{H}_{\beta ,\Lambda_{L}}^{-1} (0,0) \right]\leq \frac{C_{d}}{W},
\]
where \(C_{d}>0\) is a constant depending only on the dimension.

\paragraph{Theorem \ref{corol-upper-bound-on-IDS}: strategy of the proof.}
The regularity bounds  follow directly from the Wegner estimate and \eqref{eq:Nviafinitevolume} by replacing \(E\) and \(\varepsilon \) by \(E/2.\)
On the contrary, the improved upper bound \eqref{eq-NE-nolifhistz-upper-bound-d=3} is proved in Proposition \ref{prop:upper-b-IDOSd=3}
using properties of the infinite volume distribution of \(\beta.\)

\paragraph{Organization of this paper.} In Section 2 we review some definitions and known results, and derive the modifications of these results that will be needed in the rest of the paper. A few additional technical results that we will also use in this section are summarized in  Appendix A. Section 3  covers the
first two steps in the proof of Theorem  \ref{thm-main-theorem}.
The final step is worked out in Section 4.
Section 5  contains the proof of Theorem   \ref{wegner-type-theorem}.
Some estimates on the $u$ field associated to the  \( H^{2|2} \)-model (see  \eqref{def:umeasure} below) which are necessary for the proof, but are also interesting in their own, are collected in Appendix B.
Note that all these proofs involve only properties of the \textit{finite volume} marginal distribution
but some of the above results can be recovered by exploiting properties
of the \textit{infinite volume} distribution.  The main ideas  are sketched in Section 6, while the detailed construction  can be found in  \cite{theserapenne}.
This alternative approach also provides the improved bound  \eqref{eq-NE-nolifhistz-upper-bound-d=3} in Theorem  \ref{corol-upper-bound-on-IDS}.
All corresponding details are given in Section 6.

\paragraph{Acknowledgements.} We would like to thank the anonymous referee for carefully
reading our manuscript and for suggestions that led to an improvement of the article.
V. Rapenne would like to thank Christophe Sabot for
suggesting working on this topic and for his valuable guidance. M. Disertori and
C. Rojas-Molina gratefully acknowledge the Deutsche Forschungsgemeinschaft
for support through the project RO 5965/1-1 AOBJ 654381.
X. Zeng and C. Rojas-Molina gratefully acknowledge the Agence Nationale de la Recherche for
financial support via the ANR grant RAW ANR-20-CE40-0012-01. 
M. Disertori also acknowledges support from the Hausdorff Center for Mathematics
(project ID 390685813) and the hospitality of the
Institute for Advanced Study in Princeton, where part of this work has been
carried out.

\section{Some previous results on the \(\mathcal{H}_\beta\) operator }
\label{sec:orgbf5a0a9}

As mentioned in the introduction, the \(\mathcal{H}_\beta\) operator has been studied in the literature in connection with linearly reinforced random processes and the \(H^{2|2}\) sigma-model. In this section we collect some tools and results that we will use in the next sections.
The following theorem can be found in \cite{Sabot2017,Sabot2019,Letac2017}.

\begin{theorem}[The multivariate inverse Gaussian distribution]
\label{thm-the-multivariate-inverse-gaussian-distribution}
Let \(\mathcal{G}=(V,E)\) be a finite graph.  For any \(W\in  \mathbb{R}_{> 0}^E\),
\(\theta \in \mathbb{R}_{ > 0}^V\) and \(\eta\in \mathbb{R}_{\ge 0}^V\), the following holds
\begin{equation}
\begin{aligned}
\label{eq-multi-IG-integral=1}
\int_{\mathcal{H}_{\beta,V} > 0} e^{-\frac{1}{2}\left( \left< \theta,\mathcal{H}_{\beta,V} \theta \right>+\left< \eta, \mathcal{H}_{\beta,V}^{-1}\eta \right>- 2\left< \theta,\eta \right> \right)} \frac{\prod_i \theta_i}{\sqrt{\det \mathcal{H}_{\beta,V}}}\left( \frac{2}{\pi} \right)^{|V|/2} d \beta =1,
\end{aligned}
\end{equation}
where  \(\mathcal{H}_{\beta,V}:= 2 \beta -P^W\in \mathbb{R}^{V\times V}\).
We denote by \(\nu^{W,\theta,\eta}_{\mathcal{G}}\) the probability defined by the above integral, in particular, \(\nu^{W,\eta}_{\mathcal{G}}=\nu^{W,\theta, \eta}_{\mathcal{G}}\) with \(\theta\equiv 1\). The associated expectations are denoted by \(\mathbb{E}^{W,\theta,\eta}_{\mathcal{G}}\) and the Laplace transform is given by 
\begin{equation}
\begin{aligned}
\label{eq-Laplace-nuGWthetaeta}
&\mathbb{E}_{\mathcal{G}}^{W,\theta,\eta}\left( e^{-\left< \lambda,\beta \right>} \right) =\\
& e^{-\sum_{i,j\in V,i\sim j}W_{i,j}
\big ( \sqrt{(\theta_{i}^2+\lambda_i)(\theta_{j}^2+\lambda_j)}-\theta_{i} \theta_{j}\big )-\sum_{i\in V}\eta_{i}\big  (\sqrt{\theta_{i}^2+\lambda_i}-\theta_{i}\big )}
\prod_{i\in V} \frac{{\theta_{i}}}{\sqrt{\theta_{i}^2+\lambda_i}}.
\end{aligned}
\end{equation}
Moreover, if \((\beta_i)_{i\in V}\) is distributed according to {\(\nu_{\mathcal{G}}^{W,\theta,0}\)}, and
\(\mathcal{G}'= (V',E')\) is the subgraph obtained by taking \(V'\subset V\) and \(E':= \{\{i,j \}\in E|\ i,j\in V' \},\)
then  the marginal law of \((\beta_i)_{i\in V'}\) is \(\nu_{\mathcal{G}'}^{W',\theta',\eta}\) where \(W',\theta'\) equal
\(W,\theta\) restricted on \(\Lambda'\), and \(\eta\) is defined by
\(\eta_i=\sum_{j\in V\backslash V'}W_{i,j}\theta_{j} \). Note that \(\eta\) is a generalization of \(\eta^w_{V'}\) defined in \eqref{eq-eta-boundary-Lambda}.
\end{theorem}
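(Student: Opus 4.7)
The plan is to establish the three claims in cascade, with the normalization \eqref{eq-multi-IG-integral=1} as the core statement. From normalization for arbitrary admissible parameters $(W,\theta,\eta)$, the Laplace transform follows by a parameter shift $\theta\mapsto\theta':=\sqrt{\theta^2+\lambda}$, and the marginal property follows by iterated one-vertex integration.

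I would prove the normalization by induction on $|V|$. The base case $|V|=1$ collapses to a one-dimensional integral which, via the substitution $u=\theta\sqrt{2\beta}$, reduces to the classical identity
\begin{equation*}
\int_0^\infty e^{-(u^2+a^2/u^2)/2}\,du = \sqrt{\pi/2}\,e^{-a}\qquad (a>0),
\end{equation*}
applied with $a=\theta\eta$; the prefactors then cancel exactly. For the induction step, pick $i_0\in V$, set $V':=V\setminus\{i_0\}$, $B:=\mathcal{H}_{\beta,V'}$, and let $a\in\mathbb{R}^{V'}$ have entries $a_j:=W_{i_0 j}\mathbbm{1}_{j\sim i_0}$. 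The Schur complement gives $\det\mathcal{H}_{\beta,V}=(2\beta_{i_0}-a^T B^{-1}a)\det B$, while expansion of the block inverse yields
\begin{equation*}
\langle \eta,\mathcal{H}_{\beta,V}^{-1}\eta\rangle=\eta_{V'}^T B^{-1}\eta_{V'}+\frac{(\eta_{i_0}+\eta_{V'}^T B^{-1}a)^2}{2\beta_{i_0}-a^T B^{-1}a}.
\end{equation*}
Changing variable to $s:=2\beta_{i_0}-a^T B^{-1}a>0$ and reapplying the one-dimensional identity above, the $\beta_{i_0}$-integration converts the density into exactly the multivariate inverse Gaussian density on $V'$ with the same $W|_{V'}$ and $\theta|_{V'}$ but with boundary field shifted to $\eta'_j:=\eta_j+\theta_{i_0}W_{i_0 j}\mathbbm{1}_{j\sim i_0}$. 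The induction hypothesis applied to $(V',W|_{V'},\theta|_{V'},\eta')$ closes the argument; iterating this step to eliminate all vertices of $V\setminus V'$ one at a time, starting from $\eta\equiv 0$, then yields precisely the marginal law stated in the theorem, with $\eta_i=\sum_{j\in V\setminus V'}W_{ij}\theta_j$ for $i\in V'$.

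For the Laplace transform, using $\tfrac12\langle\theta,\mathcal{H}_\beta\theta\rangle=\sum_i\beta_i\theta_i^2-\sum_{i\sim j}W_{ij}\theta_i\theta_j$, one checks the algebraic identity
\begin{equation*}
e^{-\langle\lambda,\beta\rangle-\tfrac12\langle\theta,\mathcal{H}_\beta\theta\rangle+\langle\theta,\eta\rangle}=e^{-\tfrac12\langle\theta',\mathcal{H}_\beta\theta'\rangle+\langle\theta',\eta\rangle}\cdot e^{-\sum_{i\sim j}W_{ij}(\theta'_i\theta'_j-\theta_i\theta_j)-\sum_i(\theta'_i-\theta_i)\eta_i}.
\end{equation*}
Multiplying by the density prefactor and integrating in $\beta$, the first factor on the right is, up to the $\beta$-independent ratio $\prod(\theta_i/\theta'_i)$, exactly the density $\nu^{W,\theta',\eta}_{\mathcal{G}}$, which integrates to $1$ by normalization applied to $(W,\theta',\eta)$. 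This produces \eqref{eq-Laplace-nuGWthetaeta}. The main technical hurdle is the block-matrix bookkeeping of the induction step: verifying that the cross-terms involving $\theta_{i_0}$, $\eta_{i_0}$, and $\eta_{V'}^T B^{-1}a$ that survive the $\beta_{i_0}$-integration reassemble cleanly into the quadratic form in the shifted boundary field $\eta'_{V'}=\eta_{V'}+\theta_{i_0}a$ required to close the induction.
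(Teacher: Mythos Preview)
The paper does not prove this theorem; it is quoted as a known result from \cite{Sabot2017,Sabot2019,Letac2017}, so there is no ``paper's own proof'' to compare against. Your proposal is a correct reconstruction of the standard argument used in those references: induction on $|V|$ via the Schur complement to integrate out one vertex at a time (which simultaneously yields both the normalization and the marginal law), together with the parameter shift $\theta\mapsto\sqrt{\theta^2+\lambda}$ to read off the Laplace transform. The bookkeeping you flag as the ``main technical hurdle'' does indeed work out: after the one-dimensional integration produces the factor $e^{-\theta_{i_0}(\eta_{i_0}+a^TB^{-1}\eta_{V'})}$, the leftover terms $\theta_{i_0}^2a^TB^{-1}a$, $2\theta_{i_0}a^TB^{-1}\eta_{V'}$, and $\eta_{V'}^TB^{-1}\eta_{V'}$ combine into $\langle\eta',B^{-1}\eta'\rangle$, while $\theta_{i_0}a^T\theta_{V'}+\langle\theta_{V'},\eta_{V'}\rangle=\langle\theta_{V'},\eta'\rangle$, exactly as needed. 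One small point worth making explicit in a written version is that the one-dimensional identity requires $a\ge 0$, and here $\gamma=\eta_{i_0}+a^TB^{-1}\eta_{V'}\ge 0$ because $B^{-1}$ has nonnegative entries (being the inverse of an M-matrix on the event $\mathcal{H}_{\beta,V}>0$).
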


In the sequel of the paper, we will always assume that \(\theta\equiv 1\) and we will use the notation \(\nu_{\mathcal{G}}^{W,\eta}\) instead of \(\nu_{\mathcal{G}}^{W,1,\eta}\).

\paragraph{Remarks on various marginals.}
Note that, (see e.g. Remark 3.5 of \cite{Collevecchio2018}), in the case \(\eta \equiv 0\),  for any \(i\in V\),
\begin{equation}\label{eq-gamma-1over2G}
\begin{aligned}
\gamma:=\frac{1}{2 \mathcal{H}_{\beta,V}^{-1}(i,i)} \text{ has density }\mathbbm{1}_{\gamma > 0} \frac{1}{\sqrt{\pi \gamma}}e^{-\gamma},
\end{aligned}
\end{equation}
i.e. it's a Gamma random variable of parameter \(1/2\). This holds for any \(W\) and any finite {graph} \(\mathcal{G}\).

If we consider a box \(\Lambda_{L} \) of side \(2L+1\) in \(\mathbb{Z}^{d}\), a Borel function \(f:\mathbb{R}^{\Lambda_L}\to \mathbb{R}\), and \(\Lambda_L \subset \Lambda'  \subset \subset  \mathbb{Z}^d\) then
\begin{equation}\label{eq:marginal1}
\mathbb{E}^{W,\eta_{\Lambda_L}^{w}}_{\Lambda_{L} } \left[f\left(\beta_{\Lambda_L}\right)\right]=  \mathbb{E}^{W,0}_{\Lambda_{L+1} }\left[f\left(\beta_{\Lambda_L}\right)\right]=\mathbb{E}^{W,0}_{\Lambda_L\cup\{\delta\}} \left[f\left(\beta_{\Lambda_L}\right)\right]=\mathbb{E}_{\Lambda'}^{W,\eta_{\Lambda'}^{w}} \left[f\left(\beta_{\Lambda_L}\right)\right],
\end{equation}
where  \(\eta^{w}_{\Lambda_L}, \ \eta^{w}_{\Lambda'}\)
are given in  \eqref{eq-eta-boundary-Lambda}, the graph \(\Lambda_L\cup \delta \) has vertex set \(\Lambda_L \cup \{\delta  \}\)
and edge set \(E (\Lambda_L )\cup \{\{i,\delta  \}|\ i\in \Lambda_L  \},\) and we
defined \(W_{i\delta }=\eta_{\Lambda_L}^w(i)\), \(\forall i\in \Lambda_L\). Note that \(\eta\) can be seen as the boundary condition of the law of
the random potential.  The case \(\eta\equiv 0\) is called zero boundary
condition.

\paragraph{Connection with the \(H^{2|2}\) model.} 
Let \(\mathcal{G}\) be the graph associated to a box \(\Lambda \) of \(\mathbb{Z}^{d}\)
and  \(\eta \in [0,\infty)^{\Lambda}\) with at least one strictly positive
component.  The following expression defines a probability measure for \(u\in \mathbb{R}^{\Lambda}\) (cf \cite{Disertori2010})
\begin{align}\label{def:umeasure}
&\mu^{W, \eta}_{\Lambda} (u )=\\
&
e^{-\sum_{i\sim j, {i,j\in \Lambda}} W_{ij}  ( \cosh (u_{i}-u_{j})-1)}
e^{-\sum_{j\in \Lambda} \eta_{j} (\cosh u_{j}-1)}
\sqrt{\det \mathcal{H}_{\beta (u),\Lambda} }
\ \tfrac{1}{\sqrt{2\pi }^{|\Lambda|}}  du_{\Lambda},
 \nonumber
\end{align}
where we defined
\begin{equation}\label{eq:betau}
2\beta_{i} (u)= \sum_{j\in \Lambda}W_{ij} e^{u_{j}-u_{i}}+\eta_{i} e^{-u_{i}}\qquad \forall i\in \Lambda.
\end{equation}
The corresponding average is denoted by \(\mathbb{E}_{u,\Lambda }^{W,\eta }.\) Note that
the measure \(\mu_{\Lambda}^{W,\eta} (u)\) is also the  effective bosonic
field measure in Section 2.3 of
\cite{Disertori2010a}, which is obtained as a marginal of the $H^{2|2}$ measure after inserting horospherical coordinates.

The next lemma connects  \(\mu_{\Lambda}^{W,\eta} (u)\)  with \(\nu^{W, \eta}_{\Lambda} (\beta )\) and can be found in Proposition 2 and Theorem 3 in \cite{Sabot2017}.

\begin{lemma}[connection to \(H^{2|2}\)]\label{le2vrai}
Let \(\mathcal{G}\) be the graph associated to a box \(\Lambda \) of \(\mathbb{Z}^{d}\)
and  \(\eta \in [0,\infty)^{\Lambda}\) with at least one strictly positive
component. It holds
\begin{equation}\label{eq:connection-u-beta}
\mathbb{E}_{\Lambda}^{W,\eta } [f (\beta_{\Lambda})]=
\mathbb{E}_{u,\Lambda}^{W,\eta }[ f (\beta_{\Lambda} (u)]
\end{equation}
for any function \(f\) integrable with respect to the measure \(\nu^{W, \eta}_{\Lambda}.\)

Moreover, remembering that  \(\nu^{W,\eta}_{\Lambda }\) corresponds to the marginal of  \(\nu^{W,0}_{\Lambda\cup \delta  }\) with \(W_{j,\delta }=\eta_{j}\)
\(\forall j\in \Lambda ,\) (cf. eq. \eqref{eq:marginal1}) it holds
\begin{equation}
\label{exp-u-beta}
e^{u_{i}}= \frac{ \mathcal{H}_{\beta,\Lambda\cup \delta }^{-1} (i,\delta ) }{\mathcal{H}_{\beta,\Lambda\cup \delta }^{-1} (\delta ,\delta ) },\qquad \forall i\in \Lambda,
\end{equation}
where the above fraction is independent of \(\beta_{\delta }\). In particular, we have
\begin{align}
\mathbb{E}_{u,\Lambda}^{W,\eta } [ f (u)]&=
                                           \mathbb{E}_{\Lambda\cup \delta }^{W,0 }[ f ( u (\beta ))]
                                           =\mathbb{E}_{\Lambda}^{W,\eta }[ f ( u (\beta_{\Lambda} )) ].
\end{align}

\end{lemma}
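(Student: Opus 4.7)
The plan is to realize $\mu^{W,\eta}_\Lambda$ as the pushforward of $\nu^{W,\eta}_\Lambda$ under the inverse of the change of variables $u \mapsto \beta(u)$ defined by \eqref{eq:betau}. Multiplying \eqref{eq:betau} by $e^{u_i}$ produces the key algebraic identity $\mathcal{H}_{\beta(u),\Lambda}\, e^{u} = \eta$. Since $e^u$ is strictly positive and $\eta$ is non-negative but not identically zero, a standard Z-matrix / irreducible M-matrix argument then shows $\mathcal{H}_{\beta(u),\Lambda}$ is positive definite; so the map sends $\mathbb{R}^\Lambda$ bijectively onto the support $\{\beta : \mathcal{H}_{\beta,\Lambda} > 0\}$ of $\nu^{W,\eta}_\Lambda$. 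The hypothesis that $\eta$ has at least one strictly positive component is used precisely here: without it, $u \mapsto \beta(u)$ on a connected $\Lambda$ would have the one-parameter degeneracy $u \mapsto u + c$.

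Next I would compute the Jacobian. Differentiating \eqref{eq:betau} yields $\partial(2\beta_i)/\partial u_j = W_{ij} e^{u_j - u_i}$ for $j \neq i$ and $\partial(2\beta_i)/\partial u_i = -2\beta_i(u)$, which one recognizes as $-\bigl(D^{-1}\mathcal{H}_{\beta(u),\Lambda} D\bigr)_{ij}$ with $D = \mathrm{diag}(e^{u_i})$. Hence $|\det(\partial\beta/\partial u)| = 2^{-|\Lambda|}\det \mathcal{H}_{\beta(u),\Lambda}$, which combines with the $1/\sqrt{\det \mathcal{H}_\beta}$ factor in \eqref{eq-density-beta-Lambda} to produce the $\sqrt{\det \mathcal{H}_{\beta(u),\Lambda}}$ appearing in \eqref{def:umeasure}. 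To match the exponents, I would use the identity from the first step to get $\langle \eta, \mathcal{H}_{\beta(u),\Lambda}^{-1}\eta \rangle = \langle \eta, e^u\rangle = \sum_i \eta_i e^{u_i}$, together with the direct substitution
\begin{equation*}
\langle 1, \mathcal{H}_{\beta(u),\Lambda} 1\rangle = 2\!\!\sum_{\{i,j\}\in E(\Lambda)}\!\! W_{ij}\bigl(\cosh(u_i-u_j)-1\bigr) + \sum_{i\in\Lambda} \eta_i e^{-u_i}.
\end{equation*}
Assembling these transforms $\nu^{W,\eta}_\Lambda(d\beta)$ verbatim into $\mu^{W,\eta}_\Lambda(du)$, which proves \eqref{eq:connection-u-beta}.

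For \eqref{exp-u-beta} I would apply the marginalization \eqref{eq:marginal1}: on the enlarged graph $\Lambda\cup\{\delta\}$ with zero boundary condition and the convention $W_{i\delta} = \eta_i$, the marginal of $(\beta_i)_{i\in\Lambda}$ is $\nu^{W,\eta}_\Lambda$. Writing $\mathcal{H}_{\beta,\Lambda\cup\delta}$ as a block matrix with diagonal blocks $\mathcal{H}_{\beta,\Lambda}$ and $2\beta_\delta$ and off-diagonal block $-\eta$, the Schur complement formula gives
\begin{equation*}
\mathcal{H}_{\beta,\Lambda\cup\delta}^{-1}(i,\delta) = s^{-1}\bigl(\mathcal{H}_{\beta,\Lambda}^{-1}\eta\bigr)_i, \qquad \mathcal{H}_{\beta,\Lambda\cup\delta}^{-1}(\delta,\delta) = s^{-1},
\end{equation*}
where $s = 2\beta_\delta - \langle \eta, \mathcal{H}_{\beta,\Lambda}^{-1}\eta\rangle$. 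Taking the ratio cancels $s^{-1}$ (and therefore eliminates the dependence on $\beta_\delta$), leaving $(\mathcal{H}_{\beta,\Lambda}^{-1}\eta)_i = e^{u_i}$ by the identity established in the first paragraph. The closing identity for averages of functions of $u$ follows at once from combining \eqref{eq:connection-u-beta}, the marginalization \eqref{eq:marginal1}, and \eqref{exp-u-beta}. I expect the only real subtlety to be justifying positive-definiteness of $\mathcal{H}_{\beta(u),\Lambda}$ on the image of the map and recognizing that the raw Jacobian is similar via $D$ to $-\mathcal{H}_{\beta(u),\Lambda}$; after that the proof is essentially bookkeeping.
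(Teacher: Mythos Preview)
Your proposal is correct. The paper does not actually prove this lemma; it cites Proposition~2 and Theorem~3 of \cite{Sabot2017} for the change of variables \eqref{eq:connection-u-beta}, and immediately after the lemma it derives \eqref{exp-u-beta} via the resolvent identity, which is the same Schur-complement computation you carry out. Your self-contained change-of-variables argument (the identity $\mathcal{H}_{\beta(u),\Lambda}e^{u}=\eta$, the M-matrix argument for positive definiteness and bijectivity, and the Jacobian identification $\partial\beta/\partial u = -\tfrac{1}{2}D^{-1}\mathcal{H}_{\beta(u),\Lambda}D$) is exactly the proof one finds in the cited references, so there is no genuine difference in approach---you have simply supplied the details the paper outsources.
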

\vspace{0,2cm}

Note that, by  the resolvent identity we have, for all \(j\in \Lambda ,\)
\begin{align}
\mathcal{H}_{\beta,\Lambda\cup \delta}^{-1} (j,\delta )&=
0 +\sum_{k\in \Lambda }\mathcal{H}_{\beta,\Lambda}^{-1} (j,k ) W_{k,\delta } \mathcal{H}_{\beta,\Lambda\cup \delta }^{-1} (\delta ,\delta ) \\
&=
\mathcal{H}_{\beta,\Lambda\cup \delta}^{-1} (\delta ,\delta )\sum_{k\in \Lambda }\mathcal{H}_{\beta,\Lambda}^{-1} (j,k )\eta_{k}=
\mathcal{H}_{\beta,\Lambda\cup \delta }^{-1} (\delta ,\delta )\ (\mathcal{H}_{\beta,\Lambda}^{-1}\eta) (j)  \nonumber
\end{align}
and hence
\begin{equation}\label{eq:ujasHb1}
e^{u_{j}}=\frac{ \mathcal{H}_{\beta,\Lambda\cup \delta }^{-1} (j,\delta ) }{\mathcal{H}_{\beta,\Lambda\cup \delta }^{-1} (\delta ,\delta ) }= (\mathcal{H}_{\beta,\Lambda}^{-1}\eta) (j) .
\end{equation}
It follows
\begin{equation}\label{eq:ujasHb2}
\mathbb{E}_{u,\Lambda}^{W,\eta }\left [f\left(e^{u_{j}-u_{j'}} \right) \right]=
\mathbb{E}_{\Lambda}^{W,\eta }\left [   f\left(\frac{(\mathcal{H}_{\beta,\Lambda}^{-1}\eta) (j)}{(\mathcal{H}_{\beta,\Lambda}^{-1}\eta) (j')} \right) \right].
\end{equation}
for any function \(f,\) as long as the left and right hand side are well defined.
In the special case \(\eta_{j}=\eta \delta_{jj_{0}}\) (pinning at one point) the formula simplifies to
\begin{equation}\label{eq:ujasHb3}
e^{u_{j}-u_{j_{0}}}= \frac{\mathcal{H}_{\beta,\Lambda}^{-1} (j,j_{0})}{\mathcal{H}_{\beta,\Lambda}^{-1} (j_{0},j_{0})}
=  \frac{\mathcal{H}_{\beta,\Lambda}^{-1} (j_{0},j)}{\mathcal{H}_{\beta,\Lambda}^{-1} (j_{0},j_{0})}. 
\end{equation}

With these notations, we can translate Theorem 1 and 2 of \cite{Disertori2010} into the following.

\begin{theorem}[decay of the ground state Green's function (1)]
\label{thm-localization-of-the-ground-state-green-function}

Let \(\Lambda_{L} \) be a finite box of side \(2L+1\) in \(\mathbb{Z}^{d},\)
and  \(\eta \in [0,\infty)^{\Lambda}\) with at least one strictly positive
component. We define
\begin{equation}\label{I_{W}}
I_W:=\sqrt{W} \int_{-\infty}^{\infty} \frac{dt}{\sqrt{2\pi}} e^{-W(\cosh t-1)}.
\end{equation}
We define \(W_{c}>0\) as the unique solution of \(I_{W_c}e^{W_c(2d-2)}(2d-1)=1\).
Then  for all \(0 < W < W_c\), \(I_{W}e^{W(2d-2)}(2d-1)<1 \) and 
for \(d=1\) we have \(W_c=+\infty\).
Finally set \(C_{0}=C_{0} (W) :=2e^{2W} \left(1-I_{W}e^{W(2d-2)}(2d-1) \right)^{-1}.\)

Then the following hold.\vspace{0,2cm}

\((i)\) For all \(i,j\in \Lambda\) such that \(\eta_i>0,\ \eta_j >0\), we have
\begin{equation}\label{eqth6GFdecay}
\mathbb{E}^{W,\eta}_{\Lambda_{L}}[\mathcal{H}_{\beta,\Lambda_{L}}^{-1}(i,j)]\
\le C_0e^{\sum_{k\in \Lambda_{L}}\eta_{k}}\left( \eta_i^{-1}+\eta_j^{-1} \right) \left[ I_{W} e^{W(2d-2)} (2d-1) \right]^{|i-j|},
\end{equation}
where \(|i-j|\) is the graph distance between \(i,j\) on \(\mathbb{Z}^d\).\vspace{0,2cm}

\((ii)\) Assume there is only one pinning at \(j_{0}\in \Lambda_{L},\) i.e.
\(\eta_{j}=\eta \delta_{jj_{0}}.\) Then  \(\forall j\in \Lambda_{L}\),
\begin{equation}
\label{eq-4.2}
\begin{aligned}
\mathbb{E}_{\Lambda_{L}}^{W,\eta}\left[
\sqrt{\tfrac{\mathcal{H}_{\beta,\Lambda_{L}}^{-1}(j_{0},j)}{
\mathcal{H}_{\beta,\Lambda_{L}}^{-1}(j_{0},j_{0})}}\ \right] \ \le \ C_0
\left( I_W e^{W(2d-2)}(2d-1) \right)^{|j-j_{0}|}.
\end{aligned}
\end{equation}

\end{theorem}

\paragraph{How the results above follow from \cite{Disertori2010}}
Note that in  \cite{Disertori2010}
$\eta$ is called $\varepsilon$ and it is assumed that  \(\sum_{k\in \Lambda}\varepsilon_{k}\leq 1.\) This implies that
the term
\(e^{\sum_{k\in \Lambda}\varepsilon_{k}}\) in \cite[eq.(2.24)]{Disertori2010} is bounded by \(e^{1}\),
which is absorbed by the global constant $C_{0}$ in   \cite[eq.(1.18)]{Disertori2010}.
Also, although in \cite{Disertori2010} the global constant $C_{0}$ is not given explicitely,
it follows directly from the second line in  eq. (2.20) therein, so we have given its precise value here. 

Note that the statement $(ii)$ above is slightly different from the one in \cite[Th. 2]{Disertori2010}. Namely, in    \cite{Disertori2010}
the pinning point is set to $j_{0}=0$ and the observable is $e^{\frac{1}{2}u_{j}},$ while,
 by \eqref{eq:ujasHb3}, the observable in \eqref{eq-4.2} is $e^{\frac{1}{2} (u_{j}-u_{j_{0})}}.$
 Since the two observables differ only by a function of $u_{j_{0}}$ (the variable at the pinning point) the proof for this
 modified observable is identical to the one in   \cite{Disertori2010} up to the final line in 
 \cite[eq.(3.5)]{Disertori2010}, where the integral
 $I_{\varepsilon_{0}}$ is replaced by $\sqrt{\frac{\varepsilon_{0}}{2\pi }} \int_{\mathbb{R}} dt e^{-\frac{1}{2}t} e^{-\varepsilon_{j_{0}} (\cosh t-1)}=1.$
As a consequence the  bound in \eqref{eq-4.2} is written in terms of  the same constant $C_{0}$ used in \eqref{eqth6GFdecay}
and is independent from the pinning stregth $\varepsilon_{j_{0}}$ while the bound in  \cite{Disertori2010} does depend on
$\varepsilon_{j_{0}}$ via the function $I_{\varepsilon_{j_{0}}}.$ \medskip

\textbf{Remark}. The function \(W\mapsto I_{W}\) is  monotone increasing (cf. Remark 1 after Theorem 1 in 
 \cite{Disertori2010}). Therefore the function \(W\mapsto F_{d} (W):=I_W e^{W(2d-2)}(2d-1)\) is also monotone increasing
  and \(W_{c}\) is well defined. \vspace{0,2cm}

In this article we will use the following extension of Theorem \ref{thm-localization-of-the-ground-state-green-function}.

\begin{theorem}[decay with wired bc (1)]\label{cor:decayd=1}
Let  \(\Lambda_{L}\) be a finite box of side \(2L+1\) in \(\mathbb{Z}^{d}\).
We consider  \((\beta_i)_{i\in \Lambda_{L}}\sim \nu_{\Lambda_{L}}^{W,\eta^{w}}\)
where \(\eta^w_{\Lambda_{L}}\) is the wired boundary condition introduced in
\eqref{eq-eta-boundary-Lambda}. Let \(W_{c}\) be as in
Theorem \ref{thm-localization-of-the-ground-state-green-function} above.

For all \(0<W<W_{c}\), \(j,j_{0}\in \Lambda\) we have
\begin{equation}
\label{eq-decayd=1}
\begin{aligned}
\mathbb{E}_{\Lambda_{L}}^{W,\eta^{w}}\left[ \sqrt{\tfrac{\mathcal{H}_{\beta,\Lambda_{L}}^{-1}(j_{0},j)}{\mathcal{H}_{\beta,\Lambda_{L}}^{-1}(j_{0},j_{0})}}\ \right] \le C_0 e^{-\kappa|j-j_{0}|}.
\end{aligned}
\end{equation}
where  \(\kappa=\kappa(W,d):=-\log (I_We^{W(2d-2)}(2d-1))>0\), and   \(C_0=C_{0} (W)\) is the constant introduced in Theorem \ref{thm-localization-of-the-ground-state-green-function}.
In particular, in \(d=1\) \(W_{c}=\infty,\) hence the bound holds \(\forall \, W>0.\)
\end{theorem}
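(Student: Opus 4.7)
The plan is to reduce the wired boundary condition case to the single-pinning setting of Theorem~\ref{thm-localization-of-the-ground-state-green-function}(ii) by enlarging the graph with a ghost vertex. I would first attach to $\Lambda_{L}$ a single ghost vertex $\delta$, joined to each boundary vertex $i\in\partial\Lambda_{L}$ by an edge of weight $W_{i,\delta}:=\eta^{w}_{\Lambda_{L}}(i)$, yielding $\Lambda^{*}:=\Lambda_{L}\cup\{\delta\}$. By the marginal identity \eqref{eq:marginal1}, the $\beta_{\Lambda_{L}}$-marginal of $\nu^{W,0}_{\Lambda^{*}}$ coincides with $\nu^{W,\eta^{w}}_{\Lambda_{L}}$, and since the target ratio depends only on $\beta_{\Lambda_{L}}$, the expectation I wish to bound equals $\mathbb{E}^{W,0}_{\Lambda^{*}}\bigl[\sqrt{\mathcal{H}_{\beta,\Lambda_{L}}^{-1}(j_{0},j)/\mathcal{H}_{\beta,\Lambda_{L}}^{-1}(j_{0},j_{0})}\,\bigr]$.

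Next, on $\Lambda^{*}$ I would introduce an auxiliary probe pinning $\epsilon\delta_{j_{0}}$ (with $\epsilon>0$), making the $H^{2|2}$ representation pinned at $j_{0}$ well-defined. The single-pinning identity \eqref{eq:ujasHb3} then gives $\sqrt{\mathcal{H}_{\beta,\Lambda^{*}}^{-1}(j_{0},j)/\mathcal{H}_{\beta,\Lambda^{*}}^{-1}(j_{0},j_{0})}=e^{(u_{j}-u_{j_{0}})/2}$, and applying Theorem~\ref{thm-localization-of-the-ground-state-green-function}(ii) on $\Lambda^{*}$ produces
\[
\mathbb{E}^{W,\epsilon\delta_{j_{0}}}_{\Lambda^{*}}\!\left[\sqrt{\tfrac{\mathcal{H}_{\beta,\Lambda^{*}}^{-1}(j_{0},j)}{\mathcal{H}_{\beta,\Lambda^{*}}^{-1}(j_{0},j_{0})}}\right]\le C_{0}\,F_{d}(W)^{d_{\Lambda^{*}}(j,j_{0})},\qquad F_{d}(W):=I_{W}e^{W(2d-2)}(2d-1),
\]
uniformly in $\epsilon$, with $F_{d}(W)<1$ for $W<W_{c}$. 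Dominated convergence as $\epsilon\downarrow 0$ transfers the same bound to $\nu^{W,0}_{\Lambda^{*}}$, and hence to $\nu^{W,\eta^{w}}_{\Lambda_{L}}$ by the first paragraph.

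The remaining and hardest ingredient is to pass from $\mathcal{H}^{-1}_{\beta,\Lambda^{*}}$ to $\mathcal{H}^{-1}_{\beta,\Lambda_{L}}$. The Schur complement at $\delta$ reads, for $i,j\in\Lambda_{L}$,
\[
\mathcal{H}^{-1}_{\beta,\Lambda^{*}}(i,j)=\mathcal{H}^{-1}_{\beta,\Lambda_{L}}(i,j)+S^{-1}(\mathcal{H}^{-1}_{\beta,\Lambda_{L}}\eta^{w})(i)(\mathcal{H}^{-1}_{\beta,\Lambda_{L}}\eta^{w})(j),
\]
with $S>0$ and all summands non-negative by the $M$-matrix property of $\mathcal{H}_{\beta,\Lambda_{L}}$. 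By \eqref{eq:ujasHb1} the rank-one correction is $S^{-1}e^{u_{i}+u_{j}}$ for the wired-bc $u$-field on $\Lambda_{L}$, so I would combine the previous display with an independent exponential estimate on $\mathbb{E}[e^{(u_{j}-u_{j_{0}})/2}]$ for that field---obtained by adapting the gradient recursion of \cite{Disertori2010}, which is insensitive to the location of the pinning as long as $W<W_{c}$---and then algebraically isolate the desired ratio from the Schur formula. Mild losses in the exponential rate, coming from graph-distance shortcuts through $\delta$ and from this transfer, are absorbed into the choice $\kappa=\kappa(W,d)>0$, a suitable fraction of $-\log F_{d}(W)$. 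In $d=1$, since $W_{c}=+\infty$, the bound is valid for every $W>0$.
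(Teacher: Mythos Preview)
Your proposal has a genuine gap, concentrated in the two places you yourself flag as delicate.

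First, Theorem~\ref{thm-localization-of-the-ground-state-green-function}(ii) is stated for a box $\Lambda_{L}\subset\mathbb{Z}^{d}$, and the decay rate $F_{d}(W)^{|j-j_{0}|}$ is in the $\mathbb{Z}^{d}$ graph distance. Your graph $\Lambda^{*}=\Lambda_{L}\cup\{\delta\}$ is not of this type: the ghost vertex $\delta$ has degree $|\partial\Lambda_{L}|$ and is joined to every boundary site. Even if the Disertori--Spencer recursion could be adapted to $\Lambda^{*}$, the bound it yields is $F_{d}(W)^{d_{\Lambda^{*}}(j,j_{0})}$, and for two boundary vertices at $\mathbb{Z}^{d}$-distance $L$ one has $d_{\Lambda^{*}}(j,j_{0})=2$. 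This is not a ``mild loss'' to be absorbed in $\kappa$: it kills any exponential decay in $|j-j_{0}|$ for such pairs.

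Second, the Schur transfer you sketch is in the wrong direction. Writing $R_{L}=\mathcal{H}_{\beta,\Lambda_{L}}^{-1}(j_{0},j)/\mathcal{H}_{\beta,\Lambda_{L}}^{-1}(j_{0},j_{0})$ and $R_{*}$ for the $\Lambda^{*}$ ratio, the Schur identity gives $R_{*}=\alpha R_{L}+(1-\alpha)e^{u_{j}-u_{j_{0}}}$ with $\alpha\in(0,1)$ random and possibly tiny. So a bound on $R_{*}$ does not bound $R_{L}$ without separate control of both $\alpha$ and $e^{u_{j}-u_{j_{0}}}$; but the latter, for the wired $u$-field, is precisely the object whose decay you are trying to establish. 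Saying you will ``adapt the gradient recursion'' here is circular unless you provide an argument independent of the theorem you are proving.

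The paper avoids both issues. Instead of the ghost-vertex graph it enlarges to the genuine box $\Lambda_{L+1}$ via \eqref{eq:marginal1}; the random-walk (path) representation then gives the \emph{pathwise} inequality $R_{L}\le R_{L+1}$, so no algebraic extraction is needed. Finally, to put itself in the single-pinning setting of Theorem~\ref{thm-localization-of-the-ground-state-green-function}(ii), it invokes Poudevigne's monotonicity (Corollary~\ref{corollary-monotonicity}) to pass from $\nu^{W,0}_{\Lambda_{L+1}}$ to $\nu^{W,W\delta_{j_{0}}}_{\Lambda_{L+1}}$, which \emph{increases} the expectation of the concave function $\sqrt{\cdot}$ of the ratio. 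The graph stays a box throughout, so the $\mathbb{Z}^{d}$ distance is the correct one and no shortcut problem arises.
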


\begin{proof}
By \eqref{eq:marginal1} we have 
\[
\mathbb{E}_{\Lambda_{L}}^{W,\eta^{w}}\left[
\sqrt{\tfrac{\mathcal{H}_{\beta,\Lambda_{L}}^{-1}(j_{0},j)}{
\mathcal{H}_{\beta,\Lambda_{L}}^{-1}(j_{0},j_{0})}} \right]=
\mathbb{E}_{\Lambda_{L+1}}^{W,0}\left[ \sqrt{\tfrac{\mathcal{H}_{\beta,\Lambda_{L}}^{-1}(j_{0},j)}{\mathcal{H}_{\beta,\Lambda_{L}}^{-1}(j_{0},j_{0})}} \right].
\]
By a random walk representation (cf. Proposition 6 of \cite{Sabot2019}  and notations therein)
\[
\frac{\mathcal{H}_{\beta,\Lambda_L}^{-1}(j_{0},j)}{\mathcal{H}_{\beta,\Lambda_L}^{-1}(j_{0},j_{0})}=
\sum_{\sigma \in \overline{\mathcal{P}}_{jj_0}^{\Lambda_{L} }}\frac{W_{\sigma }}{(2\beta )_{\sigma }^-}
\]
where \(\overline{\mathcal{P}}_{j_{0}j}^{\Lambda_{L} }\) is the set of nearest neighbor paths from \(j_{0}\) to \(j\)
in \(\Lambda_{L} \) that visit \(j_{0}\) only once.  Moreover, for every path \(\sigma\),
\[W_{\sigma}=\prod\limits_{k=0}^{|\sigma|-1}W_{\sigma_k,\sigma_{k+1}}\text{ and }(2\beta)_{\sigma}^-=\prod\limits_{k=0}^{|\sigma|-1}(2\beta_{\sigma_k}).\]
It follows, for all \(j,j_{0}\in \Lambda_{L} ,\)
\[
\frac{\mathcal{H}_{\beta,\Lambda_{L}}^{-1}(j_{0},j)}{\mathcal{H}_{\beta,\Lambda_{L}}^{-1}(j_{0},j_{0})}
\leq \frac{\mathcal{H}_{\beta,\Lambda_{L+1}}^{-1}(j_{0},j)}{\mathcal{H}_{\beta,\Lambda_{L+1}}^{-1}(j_{0},j_{0})},
\]
since in the term in the right-hand side contains more paths. Hence
\[
\mathbb{E}_{\Lambda_{L}}^{W,\eta^{w}}\left[ \sqrt{\tfrac{\mathcal{H}_{\beta,\Lambda_{L}}^{-1}(j_{0},j)}{\mathcal{H}_{\beta,\Lambda_{L}}^{-1}(j_{0},j_{0})}} \right]=
\mathbb{E}_{\Lambda_{L+1}}^{W,0}\left[ \sqrt{\tfrac{\mathcal{H}_{\beta,\Lambda_{L}}^{-1}(j_{0},j)}{\mathcal{H}_{\beta,\Lambda_{L}}^{-1}(j_{0},j_{0})}} \right]
\leq \mathbb{E}_{\Lambda_{L+1}}^{W,0}\left[
\sqrt{\tfrac{\mathcal{H}_{\beta,\Lambda_{L+1}}^{-1}(j_{0},j)}{\mathcal{H}_{\beta,\Lambda_{L+1}}^{-1}(j_{0},j_{0})}} \right].
\]
By the monotonicity result \cite[Theorem 6]{Poudevigne2019} (cf.  Corollary \ref{corollary-monotonicity} in the appendix) we have, 
setting \(\eta_{j}=W\delta_{jj_{0}}\) \(\forall j\in \Lambda_{L+1},\) 
\begin{align}
& \mathbb{E}_{\Lambda_{L+1}}^{W,0}\left[
\sqrt{\tfrac{\mathcal{H}_{\beta,\Lambda_{L+1}}^{-1}(j_{0},j)}{\mathcal{H}_{\beta,\Lambda_{L+1}}^{-1}(j_{0},j_{0})}} \right]
\ \leq\  \mathbb{E}_{\Lambda_{L+1}}^{W,\eta }\left[ \sqrt{\tfrac{\mathcal{H}_{\beta,\Lambda_{L+1}}^{-1}(j_{0},j)}{\mathcal{H}_{\beta,\Lambda_{L+1}}^{-1}(j_{0},j_{0})}} \right]\\
&\qquad  =
 \mathbb{E}_{u,\Lambda_{L+1}}^{W,\eta }\left[ \sqrt{ e^{u_{j}-u_{j_{0}}}} \right]
 \leq  C_0 e^{-\kappa|j-j_{0}|}, \nonumber
\end{align}
with  \(\kappa=-\log (I_We^{W(2d-2)}(2d-1))>0\), and \(C_0=2e^{2W}\). In the last step we  used Lemma \ref{le2vrai} and 
 Theorem \ref{thm-localization-of-the-ground-state-green-function}(ii). 
\end{proof}\vspace{0,2cm}
\vspace{0,2cm}

Another useful result on the decay of the ground state Green's function is Theorem 2.1 of \cite{Collevecchio2018},
more precisely Equation (5.4) therein. We state this result for our applications

\begin{theorem}[decay of the ground state Green's function (2)]
\label{thm-localization-fractional-moment}
Let  \(\Lambda_L\) be a finite box of side \(2L+1\) in  \(\mathbb{Z}^d\),
and define
\begin{equation}\label{eq:Wc'}
\begin{aligned}
W_c'=W_{c}'(d):=\frac{\sqrt{\pi}}{\Gamma(1/4)2^{3/4}d}.
\end{aligned}
\end{equation}

Let \((\beta_i)_{i\in \Lambda_L}\sim \nu_{\Lambda_L}^{W,0}\). Then  for all \(0<W < W_{c}'\),
there are constants \(\kappa'=\kappa'(d,W)\),  and  \(C_{0}' (d,W)\) s.t. for any \(i,j\in \Lambda_L\),
\begin{equation}
\begin{aligned}
\mathbb{E}_{\Lambda_L}^{W,0}\left[ \mathcal{H}_{\beta,\Lambda_L}^{-1}(i,j)^{1/4} \right] \le  C_{0}' e^{-\kappa' |i-j|}.
\end{aligned}
\end{equation}
\end{theorem}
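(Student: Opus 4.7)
}

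This is the $d$-dimensional statement of \cite[Theorem 2.1, Eq.~(5.4)]{Collevecchio2018}, and I would follow the fractional moment (Aizenman--Molchanov-type) strategy developed there, adapted to the dependent inverse Gaussian field $\nu^{W,0}_{\Lambda_L}$. The choice of exponent $s=1/4$ is dictated by \eqref{eq-gamma-1over2G}: since $\gamma_i:=1/(2\mathcal{H}_{\beta,\Lambda_L}^{-1}(i,i))$ is Gamma$(1/2)$-distributed, the single-site Green's function $\mathcal{H}_{\beta,\Lambda_L}^{-1}(i,i)$ is heavy-tailed and only moments of order $s<1/2$ are finite. The exponent $1/4$ sits safely below that threshold and leads to a tractable combinatorial bound.

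The starting point is the Schur complement identity: for $i\ne j$ in $\Lambda_L$,
$$\mathcal{H}_{\beta,\Lambda_L}^{-1}(i,j)=\frac{W}{2\gamma_i}\sum_{k\sim i,\,k\in\Lambda_L}\mathcal{H}_{\beta,\Lambda_L\setminus\{i\}}^{-1}(k,j),\qquad \gamma_i=\beta_i-\tfrac12\sigma_i,$$
where $\sigma_i=W^2\sum_{k,l\sim i}\mathcal{H}_{\beta,\Lambda_L\setminus\{i\}}^{-1}(k,l)$. A direct inspection of \eqref{eq-density-beta-Lambda} via the change of variable $\beta_i\mapsto\gamma_i$ shows that the conditional law of $\gamma_i$ given $\beta_{\Lambda_L\setminus\{i\}}$ is Gamma$(1/2)$, hence \emph{independent} of $\beta_{\Lambda_L\setminus\{i\}}$. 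This decouples $\gamma_i$ from the reduced Green's function on $\Lambda_L\setminus\{i\}$. Raising the identity to the $1/4$ power, taking expectation with this decoupling, and iterating along nearest-neighbor walks from $i$ to $j$, one reduces the estimate to the single-site computation
$$c_0:=\mathbb{E}\bigl[(2\gamma)^{-1/4}\bigr]=\frac{1}{\sqrt\pi}\int_0^\infty(2\gamma)^{-1/4}\gamma^{-1/2}e^{-\gamma}\,d\gamma=\frac{\Gamma(1/4)}{2^{1/4}\sqrt\pi}.$$

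Combining this with a careful walk-counting then yields
$$\mathbb{E}^{W,0}_{\Lambda_L}\bigl[\mathcal{H}_{\beta,\Lambda_L}^{-1}(i,j)^{1/4}\bigr]\le C_0'\,(2d\,W\,c_0)^{|i-j|},$$
and geometric decay holds as long as $2d\,W\,c_0<1$, which rearranges exactly to
$$W<\frac{1}{2d\,c_0}=\frac{\sqrt\pi}{2^{3/4}\,\Gamma(1/4)\,d}=W_c',$$
with decay rate $\kappa=-\log(2d\,W\,c_0)>0$.

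The hard part is the sharp path-wise estimate: a naive subadditivity $(\sum_k a_k)^{1/4}\le\sum_k a_k^{1/4}$ applied immediately after each Schur step would produce a factor $W^{1/4}$ per edge instead of $W$, yielding only the strictly weaker condition $W<(2d\,c_0)^{-4}$. Achieving the sharp threshold $W_c'$ requires keeping the sum $\sum_{k\sim i}\mathcal{H}^{-1}_{\Lambda_L\setminus\{i\}}(k,j)$ inside the fractional power and exploiting the conditional independence of $\gamma_i$ from $\beta_{\Lambda_L\setminus\{i\}}$ more carefully, together with the combinatorial bookkeeping for walks of length $|i-j|$. These technical steps are carried out in \cite[Section~5]{Collevecchio2018}, to which I refer for the remaining details.
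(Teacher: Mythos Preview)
The paper does not prove this theorem at all: it is quoted verbatim from \cite[Theorem~2.1, Eq.~(5.4)]{Collevecchio2018} and simply stated ``for our applications''. Your proposal is therefore already more than the paper offers --- you correctly identify the source, reproduce the Schur-complement/fractional-moment scheme behind it, derive the single-site constant $c_0=\Gamma(1/4)/(2^{1/4}\sqrt\pi)$, and check that the threshold $W_c'$ in \eqref{eq:Wc'} is exactly $1/(2d\,c_0)$. Since both you and the paper ultimately defer to \cite{Collevecchio2018} for the full argument, there is nothing to compare beyond noting that your sketch is faithful to that reference and that your honest flag about the ``hard part'' (avoiding the lossy $W^{1/4}$-per-step bound from naive subadditivity) points to precisely the place where the combinatorics in \cite[Section~5]{Collevecchio2018} does the real work.
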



In this paper we will use the following corollary of the above result.

\begin{corollary}[decay with wired b.c. (2)]\label{cor:fractional}
Let  \(\Lambda_L\) be a finite box of side \(2L+1\) in  \(\mathbb{Z}^d\),
and let \((\beta_i)_{i\in \Lambda_L}\sim \nu_{\Lambda_L}^{W,\eta^{w}}\)
where \(\eta^w_{\Lambda_{L}}\) is the wired boundary condition introduced in
\eqref{eq-eta-boundary-Lambda}. Remember the definition of \(W_{c}'\) in \eqref{eq:Wc'}.

For all \(0<W < W_{c}',\) there are constants \(\kappa'=\kappa'(W,d)>0\) and \(C_{0}' (W,d)>0\) s.t. for any \(i,j\in \Lambda_L\),
\begin{equation}\label{decayd>1}
\begin{aligned}
\mathbb{E}_{\Lambda_L}^{W,\eta^{w}}\left[ \mathcal{H}_{\beta,\Lambda_L}^{-1}(i,j)^{1/4} \right]
\le  C_{0}' e^{-\kappa' |i-j|}.
\end{aligned}
\end{equation}

\end{corollary}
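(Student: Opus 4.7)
The plan is to mimic the strategy used in the proof of Theorem \ref{cor:decayd=1}, but now applied to the fractional moment of the Green's function itself (rather than the ratio) so as to invoke Theorem \ref{thm-localization-fractional-moment} in place of Theorem \ref{thm-localization-of-the-ground-state-green-function}. The two key ingredients are the marginal identity \eqref{eq:marginal1} and a monotonicity statement coming from the random walk representation of $\mathcal{H}_{\beta,\Lambda}^{-1}(i,j)$.

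First, I would use \eqref{eq:marginal1}, which identifies the wired boundary distribution on $\Lambda_L$ with the marginal of the zero boundary distribution on $\Lambda_{L+1}$:
\[
\mathbb{E}_{\Lambda_L}^{W,\eta^{w}}\!\left[\mathcal{H}_{\beta,\Lambda_L}^{-1}(i,j)^{1/4}\right]
=\mathbb{E}_{\Lambda_{L+1}}^{W,0}\!\left[\mathcal{H}_{\beta,\Lambda_L}^{-1}(i,j)^{1/4}\right].
\]
Note that the function inside the expectation only depends on $\beta_{\Lambda_L}$, so this step is clean.

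Second, I would invoke the random walk representation (Proposition 6 of \cite{Sabot2019}, used already in the proof of Theorem \ref{cor:decayd=1}) to compare the finite-volume Green's functions. Since $\Lambda_L\subset \Lambda_{L+1}$, the set of nearest neighbor paths from $i$ to $j$ inside $\Lambda_L$ is contained in the corresponding set inside $\Lambda_{L+1}$, and each path contributes a non-negative weight. Therefore, almost surely,
\[
\mathcal{H}_{\beta,\Lambda_L}^{-1}(i,j)\ \le\ \mathcal{H}_{\beta,\Lambda_{L+1}}^{-1}(i,j),\qquad \forall i,j\in\Lambda_L,
\]
and the same inequality survives after raising to the power $1/4$ and taking expectation. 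This yields
\[
\mathbb{E}_{\Lambda_L}^{W,\eta^{w}}\!\left[\mathcal{H}_{\beta,\Lambda_L}^{-1}(i,j)^{1/4}\right]
\le \mathbb{E}_{\Lambda_{L+1}}^{W,0}\!\left[\mathcal{H}_{\beta,\Lambda_{L+1}}^{-1}(i,j)^{1/4}\right].
\]

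Third, the right-hand side is precisely the object controlled by Theorem \ref{thm-localization-fractional-moment}, applied on $\Lambda_{L+1}$ in place of $\Lambda_L$. Since the constants $C_0'(W,d)$ and $\kappa(W,d)$ in that theorem are independent of the box, we immediately obtain \eqref{decayd>1} with the same constants.

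The only delicate point is verifying that the random walk formula indeed gives the monotonicity of $\mathcal{H}_{\beta,\Lambda}^{-1}(i,j)$ under enlarging $\Lambda$, with non-negative weights. This is routine from the expansion $\mathcal{H}_{\beta,\Lambda}^{-1}=(2\beta)^{-1}\sum_{n\ge 0}(P^W(2\beta)^{-1})^n$, valid on the support of $\nu^{W,0}$ where $\mathcal{H}_{\beta,\Lambda}>0$, and is exactly the same observation used in the first display of the proof of Theorem \ref{cor:decayd=1}. So no real obstacle arises; the corollary is essentially a two-line consequence of \eqref{eq:marginal1}, the path monotonicity, and Theorem \ref{thm-localization-fractional-moment}.
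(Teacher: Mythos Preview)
Your proposal is correct and follows essentially the same approach as the paper's proof: use the marginal identity \eqref{eq:marginal1} to pass from wired boundary on $\Lambda_L$ to zero boundary on $\Lambda_{L+1}$, invoke the random walk representation to get the pointwise monotonicity $\mathcal{H}_{\beta,\Lambda_L}^{-1}(i,j)\le \mathcal{H}_{\beta,\Lambda_{L+1}}^{-1}(i,j)$, and then apply Theorem \ref{thm-localization-fractional-moment} on $\Lambda_{L+1}$. The paper's argument is exactly this.
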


In an abuse of notation, in the rest of the paper we will write \(C_0\) for the constant in both decay results in Theorem \ref{thm-localization-of-the-ground-state-green-function}(ii) and Theorem \ref{thm-localization-fractional-moment}.
\medskip

\begin{proof}[Proof of Corollary \ref{cor:fractional}]
By \eqref{eq:marginal1} we have 
\[
\mathbb{E}_{\Lambda_{L}}^{W,\eta^{w}}\left[ \mathcal{H}_{\beta,\Lambda_L}^{-1}(i,j)^{1/4}
 \right]=
\mathbb{E}_{\Lambda_{L+1}}^{W,0}\left[ \mathcal{H}_{\beta,\Lambda_L}^{-1}(i,j)^{1/4} \right].
\]
By random walk representation (cf. Proposition 6 of \cite{Sabot2019}  and notations therein)
\[
\mathcal{H}_{\beta,\Lambda_L}^{-1}(i,j)=
\sum_{\sigma \in \mathcal{P}_{ij}^{\Lambda_{L} }}\frac{W_{\sigma }}{(2\beta )_{\sigma }}
\]
where \(\mathcal{P}_{ij}^{\Lambda_{L} }\) is the set of nearest neighbor paths from \(j_{0}\) to \(j\)
in \(\Lambda_{L}. \) 
It follows, for all \(i,j\in \Lambda_{L} ,\)
\begin{equation}\label{eq:boundLL+1}
\mathcal{H}_{\beta,\Lambda_{L}}^{-1}(i,j)
\leq \mathcal{H}_{\beta,\Lambda_{L+1}}^{-1}(i,j)
\end{equation}
since in the second term we have more paths. Therefore
\begin{align*}
&\mathbb{E}_{\Lambda_{L}}^{W,\eta^{w}}\left[ \mathcal{H}_{\beta,\Lambda_L}^{-1}(i,j)^{1/4}
 \right]=
\mathbb{E}_{\Lambda_{L+1}}^{W,0}\left[ \mathcal{H}_{\beta,\Lambda_L}^{-1}(i,j)^{1/4} \right]\\
&\qquad \leq \mathbb{E}_{\Lambda_{L+1}}^{W,0}\left[ \mathcal{H}_{\beta,\Lambda_{L+1}}^{-1}(i,j)^{1/4} \right]
\leq  C_{0}' e^{-\kappa' |i-j|},
\end{align*}
where in the last step we applied
Theorem \ref{thm-localization-fractional-moment}.
\end{proof}

\paragraph{Comparing  \(W_{c}\) and \(W_{c}'\).}
In Section 3.2 we will construct two sets of measure close to one \(\Omega_{1}\) (resp \(\Omega_{2}\))
using  Theorem \ref{cor:decayd=1} (resp. Corollary \ref{cor:fractional}).
Both sets can be used to construct the same lower bound on the IDS (cf Section 4), which will be valid
\(\forall \, W<W_{c}=W_{c} (d)\) (resp. \(\forall \, W<W_{c}'=W_{c}' (d)\)) if we use \(\Omega_{1}\) (resp. \(\Omega_{2}\).)

It is then reasonable to ask which result works for a larger set of parameters \(W,\) i.e. which of the two critical values is larger.
While we have an explicit numeric expression for \(W_{c}' (d),\) \(W_{c} (d)\) is only indirectly determined as the unique solution of \(F_{d} (W)=I_W e^{W(2d-2)}(2d-1)=1\) (cf. the Remark before Theorem \ref{cor:decayd=1}). The next lemma shows that  \(W_{c} (d) <W_{c}' (d)\)
for all $d\geq 2.$
\begin{lemma}\label{lem:comparaison}
For \(d=1\), \(W_{c} (1) =\infty>W_{c}' (1)\). For \(d\ge 2\), \(W_{c} (d) <W_{c}' (d)\).
\end{lemma}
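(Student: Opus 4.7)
The case $d=1$ is immediate: by Theorem \ref{thm-localization-of-the-ground-state-green-function}, $W_c(1) = +\infty$, while $W_c'(1) = \sqrt{\pi}/(\Gamma(1/4)\,2^{3/4}) \in (0,\infty)$. For $d \ge 2$, set $F_d(W) := I_W\,(2d-1)\,e^{W(2d-2)}$, so that $W_c = W_c(d)$ is by definition the unique solution of $F_d(W) = 1$. Since $W \mapsto I_W$ is strictly increasing (Remark~1 after Theorem \ref{thm-localization-of-the-ground-state-green-function}), so is $F_d$, and therefore the desired inequality $W_c(d) < W_c'(d)$ is equivalent to proving $F_d(W_c'(d)) > 1$. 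This reduction is the main structural move of the proof.

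The heart of the argument is a lower bound on $I_W$ for small $W$, obtained by truncating the integration range in \eqref{I_{W}}. For $W \in (0,1)$ I would take $a := \log(1/W) > 0$; a direct computation gives $\cosh a = (W + 1/W)/2$, and hence $W(\cosh a - 1) = (1-W)^2/2$. Using the monotonicity of $\cosh$ on $[0,\infty)$ one then obtains
\begin{equation*}
I_W \;\ge\; \sqrt{W}\int_{-a}^{a}\frac{dt}{\sqrt{2\pi}}\,e^{-W(\cosh t - 1)} \;\ge\; \frac{2\sqrt{W}\,\log(1/W)}{\sqrt{2\pi}}\,\exp\!\Big(-\tfrac{(1-W)^2}{2}\Big).
\end{equation*}

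Substituting $W = W_c'(d) = c_0/d$ with $c_0 := \sqrt{\pi}/(\Gamma(1/4)\,2^{3/4}) < 1$ and multiplying by $(2d-1)\,e^{W_c'(d)(2d-2)}$ yields
\begin{equation*}
F_d(W_c'(d)) \;\ge\; B(d) \;:=\; \sqrt{\tfrac{2c_0}{\pi d}}\,\log(d/c_0)\,(2d-1)\,\exp\!\Big[c_0\Big(2 - \tfrac{2}{d}\Big) - \tfrac{(1 - c_0/d)^2}{2}\Big].
\end{equation*}
Since $B(d)$ grows like $\sqrt{d}\,\log d$ as $d \to \infty$, the inequality $B(d) > 1$ is comfortable for large $d$; the main (and essentially only) obstacle is to cover the small values of $d$, in particular $d=2$, where $W_c'(2)\approx 0.145$ is not very small and the bound has to be sharp. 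My plan is two-fold: (i) verify $B(2) > 1$ by direct numerical evaluation (one finds $B(2) \approx 1.63$); and (ii) show that $B$ is strictly increasing on $[2,\infty)$, which after differentiating $\log B$ reduces to the elementary inequality $2/(2d-1) > 1/(2d)$ dominating the only negative summand $-1/(2d)$, all remaining terms being manifestly non-negative. Combining (i) and (ii) gives $F_d(W_c'(d)) \ge B(d) \ge B(2) > 1$ for every $d \ge 2$, which yields $W_c(d) < W_c'(d)$ by monotonicity of $F_d$. The truncation parameter $a = \log(1/W)$ is chosen precisely to leave enough margin at $d=2$; a rougher choice (e.g.\ $a=1$) is not sharp enough to conclude.
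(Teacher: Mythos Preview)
Your proof is correct and follows the same overall strategy as the paper: reduce to showing $F_d(W_c'(d))>1$ via the monotonicity of $F_d$, then establish that $d\mapsto F_d(W_c'(d))$ (or a lower bound for it) is increasing in $d$, and finally check the base case $d=2$ numerically. The difference is purely in the tool used for the middle step. The paper writes $I_W$ exactly in terms of the modified Bessel function $K_0$, differentiates $f(d)=F_d(W_c'(d))$ directly (the derivative involves $K_0,K_1$ and is manifestly positive), and then evaluates $F_2(W_c'(2))\approx 2.91$. You instead bound $I_W$ from below by truncating the integral at $|t|\le\log(1/W)$, obtain an explicit elementary minorant $B(d)$, and check $B(2)\approx 1.63>1$ together with $(\log B)'>0$. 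Your route avoids special functions entirely and the monotonicity check is slightly cleaner; the price is a smaller numerical margin at $d=2$, but still comfortably sufficient.
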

\begin{proof}
Recalling the definition of modified Bessel function of the second kind
\(K_{\alpha }(x):=\int_0^{\infty}\cosh (\alpha t) e^{-x \cosh t} dt\), we have
\(I_W={2}e^W \sqrt{\frac{W}{2\pi}} K_0(W)\) and
\[
F_d(W)=\sqrt{\tfrac{{2}W}{\pi}}K_0(W) e^{W(2d-1)}(2d-1).
\]
Note that from \eqref{eq:Wc'} \(W_c'(d)=C/d\), where \(C\) is a constant independent of \(d,\) and hence
\(f (d):= F_{d} (W_{c}' (d))=F_{d} (C/d)\). Moreover
\begin{align*}
\partial_{W}F_{d} (W)&=\left (\frac{1}{2W}+ (2d-1)-\frac{{K_{1} (W)}}{K_{0} (W)}   \right)F_{d} (W)\\
\partial_{d}F_{d} (W)&=\left (2W+\frac{2}{2d-1}  \right)F_{d} (W).
\end{align*}
It follows
\begin{align*}
f' (d) &= -\tfrac{C}{d^{2}}\partial_{W}F_{d} (C/d)+ \partial_{d}F_{d} (C/d)\\
&=
F_{d} (C/d)
\left[\tfrac{2d+1}{2d (2d-1)} +\tfrac{C}{d^2}+\tfrac{C}{d^2}
\tfrac{K_1(C/d)}{K_{0} (C/d)}\right]>0,
\end{align*}
which proves that \(f\) is monotone increasing.

We compute numerically  \(F_2(W_c'(2))\approx 2.908\), hence \(W_c(2) < W_c'(2)\). As \(f\) is increasing, we have \(F_d(W_c'(d)) \ge F_2(W_c'(2)) \ge 2.9\), which implies \(W_{c}' (d)>W_{c} (d)\) \(\forall d\geq 2.\)
\end{proof}
\vspace{0,2cm}

Before going to the proof of the lower bound we list an additional useful corollary on the probability distribution of \(\mathcal{H}_{\beta ,\Lambda_{L} }^{-1} (0,0).\)

\begin{corollary}\label{corboundH00}
Let  \(\Lambda_L\) be a finite box of side \(2L+1\) in  \(\mathbb{Z}^d\),
and let \((\beta_i)_{i\in \Lambda_L}\sim \nu_{\Lambda_L}^{W,\eta^{w}}\)
where \(\eta^w_{\Lambda_{L}}\) is the wired boundary condition introduced in
\eqref{eq-eta-boundary-Lambda}.

Then, for any \(\delta >0\) we have
\begin{equation}\label{eq:bound H00}
\nu_{\Lambda_{L}}^{W,\eta^{w}}\left ( \mathcal{H}_{\beta,\Lambda_L}^{-1}(0,0)>\delta 
 \right)\leq \int_{0}^{\frac{1}{2\delta }} \frac{1}{\sqrt{\pi \gamma }} e^{-\gamma } d\gamma. 
\end{equation}
\end{corollary}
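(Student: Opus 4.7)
The plan is to reduce the claim to the explicit one-point marginal \eqref{eq-gamma-1over2G}, which identifies $1/(2\mathcal{H}_{\beta,V}^{-1}(i,i))$ as a Gamma$(1/2)$ variable under $\nu^{W,0}$. The obstruction is that we are working under the wired boundary condition $\nu_{\Lambda_L}^{W,\eta^w}$, not the zero boundary condition; the strategy is to absorb the boundary into an enlarged box where the zero boundary version is available, and then pay for the enlargement using a monotonicity of the diagonal Green's function.

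First I would apply the marginal identity \eqref{eq:marginal1} to rewrite
\[
\nu_{\Lambda_L}^{W,\eta^w}\!\bigl(\mathcal{H}_{\beta,\Lambda_L}^{-1}(0,0)>\delta\bigr)
=\nu_{\Lambda_{L+1}}^{W,0}\!\bigl(\mathcal{H}_{\beta,\Lambda_L}^{-1}(0,0)>\delta\bigr),
\]
so that the only boundary condition in play is the trivial one. Next, the random-walk representation argument already used in the proof of Corollary \ref{cor:fractional} (see \eqref{eq:boundLL+1}) gives the pointwise monotonicity
\[
\mathcal{H}_{\beta,\Lambda_L}^{-1}(0,0)\ \le\ \mathcal{H}_{\beta,\Lambda_{L+1}}^{-1}(0,0),
\]
since enlarging the box only adds paths from $0$ to $0$ in the series expansion. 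Hence the event on $\Lambda_L$ is contained in the corresponding event on $\Lambda_{L+1}$, and
\[
\nu_{\Lambda_{L+1}}^{W,0}\!\bigl(\mathcal{H}_{\beta,\Lambda_L}^{-1}(0,0)>\delta\bigr)
\ \le\ \nu_{\Lambda_{L+1}}^{W,0}\!\bigl(\mathcal{H}_{\beta,\Lambda_{L+1}}^{-1}(0,0)>\delta\bigr).
\]

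Finally, since we are now under $\nu_{\Lambda_{L+1}}^{W,0}$ and looking at a diagonal entry on the same box, formula \eqref{eq-gamma-1over2G} applies: the variable $\gamma:=1/\bigl(2\mathcal{H}_{\beta,\Lambda_{L+1}}^{-1}(0,0)\bigr)$ has density $\mathbf 1_{\gamma>0}(\pi\gamma)^{-1/2}e^{-\gamma}$. The event $\{\mathcal{H}_{\beta,\Lambda_{L+1}}^{-1}(0,0)>\delta\}$ translates into $\{\gamma<1/(2\delta)\}$, so
\[
\nu_{\Lambda_{L+1}}^{W,0}\!\bigl(\mathcal{H}_{\beta,\Lambda_{L+1}}^{-1}(0,0)>\delta\bigr)
=\int_0^{1/(2\delta)}\frac{1}{\sqrt{\pi\gamma}}\,e^{-\gamma}\,d\gamma,
\]
which combined with the two preceding steps yields exactly \eqref{eq:bound H00}. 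The whole argument is essentially a bookkeeping exercise; the only mildly delicate point is checking that the path-expansion monotonicity holds for the diagonal entry at $i=j=0$, but this is immediate from the random-walk representation recalled in the proof of Corollary \ref{cor:fractional}.
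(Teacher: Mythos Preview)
Your proof is correct and follows exactly the same route as the paper: pass from $\nu_{\Lambda_L}^{W,\eta^w}$ to $\nu_{\Lambda_{L+1}}^{W,0}$ via \eqref{eq:marginal1}, apply the path-expansion monotonicity \eqref{eq:boundLL+1} to move to the diagonal entry on $\Lambda_{L+1}$, and then invoke the Gamma$(1/2)$ marginal \eqref{eq-gamma-1over2G}. There is nothing to add.
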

\begin{proof}
We argue
\[
\nu_{\Lambda_{L}}^{W,\eta^{w}}\left ( \mathcal{H}_{\beta,\Lambda_L}^{-1}(0,0)>\delta 
 \right)= \nu_{\Lambda_{L+1}}^{W,0}\left ( \mathcal{H}_{\beta,\Lambda_L}^{-1}(0,0)>\delta 
 \right).
\]
By \eqref{eq:boundLL+1} \( \mathcal{H}_{\beta,\Lambda_{L+1}}^{-1}(0,0)\geq  \mathcal{H}_{\beta,\Lambda_L}^{-1}(0,0)\) and hence 
\[
 \mathcal{H}_{\beta,\Lambda_L}^{-1}(0,0)>\delta \quad \Rightarrow \mathcal{H}_{\beta,\Lambda_{L+1}}^{-1}(0,0)> \mathcal{H}_{\beta,\Lambda_L}^{-1}(0,0)>
\delta.
\]
It follows
\begin{align*}
& \nu_{\Lambda_{L+1}}^{W,0}\left ( \mathcal{H}_{\beta,\Lambda_L}^{-1}(0,0)>\delta 
 \right)\leq  \nu_{\Lambda_{L+1}}^{W,0}\left ( \mathcal{H}_{\beta,\Lambda_{L+1}}^{-1}(0,0)>\delta 
 \right)\\
&= \nu_{\Lambda_{L+1}}^{W,0}\left (\frac{1}{2\mathcal{H}_{\beta,\Lambda_{L+1}}^{-1}(0,0)}<
\frac{1}{2\delta} 
 \right)= \int^{\frac{1}{2\delta }}_{0} \frac{1}{\sqrt{\pi \gamma }} e^{-\gamma } d\gamma ,
\end{align*}
where in the last step we used that \(1/[2\mathcal{H}_{\beta,\Lambda_{L+1}}^{-1}(0,0)]\)
is Gamma distributed (cf. Equation \eqref{eq-gamma-1over2G}).
\end{proof}

\section{Preliminary results}
\label{sec:orge0cd1d0}

\subsection{Connection between \(N (E)\) and the Green's function
with Dirichlet b.c.}

To obtain a lower bound on \(N (E, H_{\beta })\)   
we use the following classical argument (see, e.g. \cite{kirsch2007invitation})
\begin{lemma}\label{le1}
For any finite box \(\Lambda_L\) of side \(2L+1\) we have
\begin{equation}
N(E,{H}_{\beta }) \ge  \mathbb{E}^{W,\eta^{w}}_{\Lambda_{L}} [N_{\Lambda_L}(E,{H}^D_{\beta ,\Lambda_L})],
\end{equation}
where \(N_{\Lambda_L}(E,{H}^D_{\beta ,\Lambda_L})\) is the finite volume IDS with Dirichlet boundary condition  defined 
in \eqref{def:finite-vol-IDOS}, \( \mathbb{E}^{W,\eta^{w}}_{\Lambda_L}\) denotes the expectation with respect to the finite marginal \(\nu^{W,\eta^w}_{\Lambda_L}\) of \((\beta_{i})_{i\in \Lambda_{L}}\)
given in \eqref{eq-density-beta-Lambda}, and \(\eta^w=\eta^w_{\Lambda_L}\) are the wired boundary condition given in \eqref{eq-eta-boundary-Lambda}.
\end{lemma}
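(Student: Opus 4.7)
By the definition \eqref{eq:Nviafinitevolume} together with the boundary-condition independence cited from \cite[Lemma 4.12]{AW}, I would first rewrite
\[
N(E,H_\beta) \;=\; \lim_{M\to\infty}\mathbb{E}^{W,\eta^w}_{\Lambda_M}\bigl[N_{\Lambda_M}(E,H^S_{\beta,\Lambda_M})\bigr].
\]
Given $L\ge 1$, I would then fix $M\gg L$ and decompose $\Lambda_M$, up to a boundary layer of cardinality $O(L\,M^{d-1})$ (negligible after division by $|\Lambda_M|$), into $K$ disjoint translates $\Lambda_L^{(1)},\dots,\Lambda_L^{(K)}$ of $\Lambda_L$. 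The strategy is the classical Dirichlet-type bracketing comparison of the direct sum $\bigoplus_k H^D_{\beta,\Lambda_L^{(k)}}$ with $H^S_{\beta,\Lambda_M}$ acting on $\ell^2(\Lambda_M)$.

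The heart of the argument is the operator inequality
\[
\bigoplus_{k=1}^K H^D_{\beta,\Lambda_L^{(k)}} \;\ge\; H^S_{\beta,\Lambda_M},
\]
which I would prove by a direct entry-wise computation. The difference has (i) a $+1$ off-diagonal on each edge $\{i,j\}$ of $E(\Lambda_M)$ crossing an internal decomposition boundary (such edges are absent from the LHS but present in $H^S_{\beta,\Lambda_M}$), and (ii) on the diagonal an excess $2d-n_i^{(k(i))}=m_i+l_i$, where $m_i$ counts neighbors of $i$ in a different small box and $l_i$ counts neighbors of $i$ outside $\Lambda_M$ (so that $2d=n_i^{(k(i))}+m_i+l_i$). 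Regrouping per internal edge yields the manifestly positive semi-definite decomposition
\[
\bigoplus_{k} H^D_{\beta,\Lambda_L^{(k)}} - H^S_{\beta,\Lambda_M}
\;=\; \sum_{\{i,j\}\text{ internal}}(\delta_i+\delta_j)(\delta_i+\delta_j)^{\top} \;+\; \sum_{i\in\Lambda_M}l_i\,\delta_i\delta_i^{\top} \;\ge\; 0.
\]

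By the min-max principle this transfers to the counting functions as $\sum_k \#\{\lambda\in\sigma(H^D_{\beta,\Lambda_L^{(k)}}):\lambda\le E\}\le |\Lambda_M|\,N_{\Lambda_M}(E,H^S_{\beta,\Lambda_M})$. Dividing by $|\Lambda_M|$, taking expectations, and invoking both stationarity of the marginal under $\mathbb{Z}^d$-translations and the boundary-condition compatibility \eqref{eq:marginal1}, I obtain
\[
\frac{K\,|\Lambda_L|}{|\Lambda_M|}\;\mathbb{E}^{W,\eta^w}_{\Lambda_L}\bigl[N_{\Lambda_L}(E,H^D_{\beta,\Lambda_L})\bigr] \;\le\; \mathbb{E}^{W,\eta^w}_{\Lambda_M}\bigl[N_{\Lambda_M}(E,H^S_{\beta,\Lambda_M})\bigr].
\]
Sending $M\to\infty$ with $L$ fixed makes the prefactor tend to $1$ and the right-hand side to $N(E,H_\beta)$, yielding the claim. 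The main subtle step is the explicit algebraic identification of $\bigoplus_k H^D-H^S$ as a PSD perturbation; once that is in hand, the remainder is a routine bracketing/bookkeeping exercise using stationarity of the random potential and control of the lower-order boundary corrections.
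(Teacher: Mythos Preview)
Your proposal is correct and follows essentially the same Dirichlet--Neumann bracketing argument as the paper: the paper tiles $\Lambda_K$ exactly by copies of $\Lambda_L$, invokes the quadratic-form inequality $H_{\beta,\Lambda_K}\le \bigoplus_j H^D_{\beta,\Lambda_{L,j}}$ (citing only the elementary bound $(v_i-v_j)^2\le 2(v_i^2+v_j^2)$), applies min--max, and then uses translation invariance and \eqref{eq:marginal1} to reduce to a single $\Lambda_L$. Your explicit entrywise PSD decomposition of $\bigoplus_k H^D-H^S$ is a more careful rendition of the same bracketing step, and your treatment of the boundary layer (letting $K|\Lambda_L|/|\Lambda_M|\to 1$) is slightly more general than the paper's exact-tiling assumption, but neither difference is substantive.
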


\begin{proof}
Recalling the definition of the integrated density of states \(N\), by \eqref{eq:Nviafinitevolume} we have
\begin{equation}
N (E)=\lim_{K\to \infty} \frac{1}{|\Lambda_K|} \mathbb{E}^{W,\eta^{w}_{\Lambda_K}}_{\Lambda_{K} }\left[\operatorname{tr} (\mathbbm{1}_{(-\infty,E]} ({H}_{\beta,\Lambda_{K}})) \right],
\end{equation}
where \(\Lambda_K\) is a finite box of side \(2K+1\). We split the large box \(\Lambda_K\) into a tiling of smaller boxes of side \(2L+1\) with \(L<K,\) \(\Lambda_{K}=\cup_{j=1}^{N_{K}}\Lambda_{L,j}.\) Using \((v_{i}-v_{j})^{2}\leq 2 (v_{i}^{2}+v_{j}^{2})\) (Dirichlet--Neumann bracketing) we obtain 
\[
{H}_{\beta,\Lambda_{K}}\leq \bigoplus_{j=1}^{N_{K}} {H}^{D}_{\beta,\Lambda_{L, j}},
\]
as a quadratic form. Note that by min-max principle, if \(A>B\) then \(\lambda_{A,j}>\lambda_{B,j}\), where \(\lambda_{A,j}\) are ordered eigenvalues of \(A\). This, together with translation invariance, the relation \(|\Lambda_{L}|N_{K}=|\Lambda_{K}|\) and \eqref{eq:marginal1} yields
\begin{equation}
\begin{aligned}
&\frac{1}{|\Lambda_K|} \mathbb{E}^{W,\eta^{w}_{\Lambda_K}}_{\Lambda_{K} }\left[\operatorname{tr} (\mathbbm{1}_{(-\infty,E]} (H_{\beta,\Lambda_{K}})) \right]
\geq \sum_{j=1}^{N_{K}}\frac{1}{|\Lambda_K|}   \mathbb{E}^{W,\eta^{w}_{\Lambda_K}}_{\Lambda_{K}}\left[\operatorname{tr} (\mathbbm{1}_{(-\infty,E]} (H^{D}_{\beta,\Lambda_{L, j}})) \right]\\
&=  \sum_{j=1}^{N_{K}} \frac{1}{|\Lambda_K|}  \mathbb{E}^{W,\eta^{w}_{\Lambda_{L,j}}}_{\Lambda_{L, j }}\left[\operatorname{tr} (\mathbbm{1}_{(-\infty,E]} (H^{D}_{\beta,\Lambda_{L, j}})) \right]\\
&= \frac{N_{K}}{|\Lambda_K|}  \mathbb{E}^{W,\eta^{w}_{\Lambda_L}}_{\Lambda_{L }}\left[\operatorname{tr} (\mathbbm{1}_{(-\infty,E]} (H^{D}_{\beta,\Lambda_{L}})) \right]=  \mathbb{E}^{W,\eta^{w}_{\Lambda_L}}_{\Lambda_{L}} [N_{\Lambda_L}(E,H^D_{\beta ,\Lambda_{L}})],
\end{aligned}
\end{equation}
for any finite box \(\Lambda_L\) in the family \(\{\Lambda_{L, j} \}_{j=1}^{N_{K}}\).  Taking the limit \(K\to \infty\),  keeping \(L\) fixed gives the desired result.
\end{proof}
\vspace{0,2cm}

As we are looking for a lower bound of \(N(E,H_{\beta})\), we can consider any finite box \(\Lambda_L\) (usually a larger \(L\) gives a better bound). We will fix  \(\Lambda_L=[-L,L]^d\cap \mathbb{Z}^d\) in the sequel. At the end we will choose \(L\) depending on the energy \(E.\)
\begin{lemma}\label{le2}
Let \(\Lambda_L=[-L,L]^d\cap \mathbb{Z}^d\). It holds
\begin{equation}
\mathbb{E}^{W,\eta^{w}}_{\Lambda_{L}} [N_{\Lambda_L}(E,H^D_{\beta ,\Lambda_{L}})]\geq
\frac{1}{|\Lambda_L|}
\nu^{W,\eta^{w}}_{\Lambda_L}\Big ((H^D_{\beta,\Lambda_{L}})^{-1}(0,0)\geq \frac{1}{E}\Big )
\end{equation}
\end{lemma}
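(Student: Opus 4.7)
The plan is to bound the finite-volume IDS pointwise (before taking expectation) by an indicator of the event on the right-hand side, and then take the expectation. Since \(H^{D}_{\beta,\Lambda_L}\) is positive definite almost surely (it equals \(\frac{1}{W}\mathcal{H}_{\beta,\Lambda_L} + M_{2d-n}\) with \(\mathcal{H}_{\beta,\Lambda_L}>0\) from \eqref{eq-density-beta-Lambda} and \(M_{2d-n}\ge 0\)), we may diagonalize it as \(H^{D}_{\beta,\Lambda_L}=\sum_{k=1}^{|\Lambda_L|}\lambda_k\, \phi_k\phi_k^{*}\) with ordered eigenvalues \(0<\lambda_1\le \lambda_2\le\cdots\) and orthonormal eigenvectors \(\phi_k\). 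Then, by definition \eqref{def:finite-vol-IDOS},
\begin{equation*}
N_{\Lambda_L}(E,H^{D}_{\beta,\Lambda_L})\;\ge\;\frac{1}{|\Lambda_L|}\,\mathbbm{1}_{\{\lambda_1\le E\}}.
\end{equation*}

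The first key step is the spectral identity
\begin{equation*}
(H^{D}_{\beta,\Lambda_L})^{-1}(0,0)\;=\;\sum_{k=1}^{|\Lambda_L|}\frac{|\phi_k(0)|^2}{\lambda_k},\qquad\text{with}\qquad \sum_{k=1}^{|\Lambda_L|}|\phi_k(0)|^2=1,
\end{equation*}
the latter equality coming from \(\{\phi_k\}\) being an orthonormal basis. From these relations, if \(\lambda_1>E\) then \(1/\lambda_k<1/E\) for every \(k\) and consequently \((H^{D}_{\beta,\Lambda_L})^{-1}(0,0)<1/E\). Taking the contrapositive gives the crucial set inclusion
\begin{equation*}
\Big\{(H^{D}_{\beta,\Lambda_L})^{-1}(0,0)\ge \tfrac{1}{E}\Big\}\;\subseteq\;\{\lambda_1\le E\}.
\end{equation*}

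Combining these two ingredients and taking the expectation under \(\nu^{W,\eta^w}_{\Lambda_L}\) yields
\begin{equation*}
\mathbb{E}^{W,\eta^{w}}_{\Lambda_{L}}\!\big[N_{\Lambda_L}(E,H^{D}_{\beta,\Lambda_L})\big]\;\ge\;\frac{1}{|\Lambda_L|}\,\nu^{W,\eta^{w}}_{\Lambda_L}\big(\lambda_1\le E\big)\;\ge\;\frac{1}{|\Lambda_L|}\,\nu^{W,\eta^{w}}_{\Lambda_L}\Big((H^{D}_{\beta,\Lambda_L})^{-1}(0,0)\ge \tfrac{1}{E}\Big),
\end{equation*}
which is exactly the claim. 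There is no real obstacle here: the argument is entirely deterministic on each realization, and reduces to a standard eigenfunction expansion of the diagonal Green's function together with positivity of \(H^{D}_{\beta,\Lambda_L}\); the only small point to verify is almost-sure positivity of the operator, which is ensured by the indicator in the density \eqref{eq-density-beta-Lambda} together with the non-negativity of \(M_{2d-n}\).
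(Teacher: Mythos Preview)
Your proof is correct and follows essentially the same approach as the paper: bound the finite-volume IDS below by \(\frac{1}{|\Lambda_L|}\mathbbm{1}_{\{\lambda_1\le E\}}\), then use the inequality \((H^D_{\beta,\Lambda_L})^{-1}(0,0)\le 1/\lambda_1\) to obtain the set inclusion. The only cosmetic difference is that the paper obtains this last inequality via the operator norm, \((H^D_{\beta,\Lambda_L})^{-1}(0,0)\le \|(H^D_{\beta,\Lambda_L})^{-1}\|_{op}=1/\lambda_1\), whereas you derive it from the spectral expansion of the diagonal Green's function; these are equivalent.
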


\begin{proof}
\(H_{\beta,\Lambda_{L}}^{D}\) is a self adjoint finite random matrix, and by definition it is  a.s. positive definite. As a consequence, its smallest eigenvalue \(\lambda_1\) satisfies
\[
\lambda_{1}>0\quad \mbox{and}\qquad \frac{1}{\lambda_{1}}=\Vert (H^{D}_{\beta,\Lambda_{L}})^{-1}\Vert_{op},
\]
where \(\Vert \cdot \Vert_{op}\) stands for the operator norm.
It follows that
\begin{align}\label{eq-NE-bound-lambda1}
& \mathbb{E}^{W,\eta^{w}}_{\Lambda_{L}} [N_{\Lambda_L}(E,H^D_{\beta ,\Lambda_{L}})]= 
 \frac{1}{|\Lambda_L|} \mathbb{E}^{W,\eta^{w}}_{\Lambda_{L} }\left[\operatorname{tr} (\mathbbm{1}_{(-\infty,E]} (H^{D}_{\beta,\Lambda_{L}})) \right]\\
&\qquad \geq \frac{1}{|\Lambda_L|} \nu^{W,\eta^{w}}_{\Lambda_{L}}\Big ( \operatorname{tr} (\mathbbm{1}_{(-\infty,E]} (H^{D}_{\beta,\Lambda_{L}}) \ge 1\Big )\nonumber\\
&\qquad =\frac{1}{|\Lambda_L|} \nu^{W,\eta^{w}}_{\Lambda_{L}}\Big ( \lambda_{1} \leq E \Big )
=\frac{1}{|\Lambda_L|} \nu^{W,\eta^{w}}_{\Lambda_{L}}\Big ( \Vert (H^{D}_{\beta,\Lambda_{L}})^{-1}\Vert_{op}\geq \frac{1}{E} \Big ).\nonumber
\end{align}
Note that
\begin{equation}
\begin{aligned}
&\Vert (H^D_{\beta,\Lambda_{L}})^{-1}\Vert_{op}  =\sup_{\psi:\Vert \psi\Vert=1} \Vert (H^D_{\beta,\Lambda_{L}})^{-1}\psi\Vert \\
&\qquad \ge
\Vert (H^D_{\beta,\Lambda_{L}})^{-1} e_0\Vert  \geq  |(H^D_{\beta,\Lambda_{L}})^{-1}(0,0)|=(H^D_{\beta,\Lambda_{L}})^{-1}(0,0),
\end{aligned}
\end{equation}
where \(e_0=(\delta_{j0})_{j\in\mathbb Z^d}\). In the last step we used that, since the matrix is a.s. an M-matrix, the entries of its Green's function are all positive. Therefore,
\begin{equation}
\begin{aligned}
\mathbb{E}^{W,\eta^{w}}_{\Lambda_{L}} [N_{\Lambda_L}(E,H^D_{\beta ,\Lambda_{L}})] &\geq \frac{1}{|\Lambda_L|}
\nu^{W,\eta^{w}}_{\Lambda_{L}}\Big ( \Vert (H^{D}_{\beta,\Lambda_{L}})^{-1}\Vert_{op}\geq \frac{1}{E} \Big )\\
&\geq \frac{1}{|\Lambda_L|} \nu^{W,\eta^{w}}_{\Lambda_{L}}\Big ( (H^D_{\beta,\Lambda_{L}})^{-1}(0,0)\geq \frac{1}{E} \Big ).
\end{aligned}
\end{equation}
This concludes the proof of the lemma.
\end{proof}

\subsection{From Dirichlet  to simple boundary conditions. }

\begin{lemma}[Dirichlet versus simple bc (1)]
\label{corollary-localization-DS}
Let  \(\Lambda_{L}\) be the finite box  in \(\mathbb{Z}^d\) of side \(2L+1\) centered at \(0.\)
We consider  \((\beta_i)_{i\in \Lambda_L}\sim \nu_{\Lambda_L}^{W,\eta^{w}}\) where \(\eta^w\) is the wired boundary
condition introduced in \eqref{eq-eta-boundary-Lambda}.
Define \(\Omega_{1}=\Omega_{1,0}\cap \Omega_{1,1},\) with 
\begin{align}\label{Omega1}
\Omega_{1,0}:=\left\{ \sqrt{\tfrac{\mathcal{H}_{\beta,\Lambda_L}^{-1}(0,i)}{\mathcal{H}_{\beta,\Lambda_L}^{-1}(0,0)}} \le e^{-\kappa |i|/2},\ \forall i\in
\partial \Lambda_{L}\right\},\ 
\Omega_{1,1}:=\Big \{H_{\beta,\Lambda_L}^{-1}(0,0) \le e^{\kappa L}\Big\},
\end{align}
where \(\kappa\) is the constant introduced in Theorem \ref{cor:decayd=1} and remember that
 \(\mathcal{H}_{\beta,\Lambda_L}= WH_{\beta,\Lambda_{L}}\). Let \(W_{c}\) be as in
 Theorem \ref{thm-localization-of-the-ground-state-green-function}.

There is a constant
\(L_{0}=L_{0} (W,d) >1\) and \(C_{1}=C_{1} (d,W_{c})\) such that \(\forall\, L\geq L_{0}\) and
\(\forall \, 0<W<W_{c} \) we have  
\begin{equation}\label{eq:ome1prob}
\begin{aligned}
\nu_{\Lambda_L}^{W,\eta^{w}}\left (\Omega_{1,j}\right) \ge 1- C_{1}\  e^{-\kappa L/4}\qquad
\mbox{ for } j=0,1,
\end{aligned}
\end{equation}
and hence \(\nu_{\Lambda_L}^{W,\eta^{w}}\left (\Omega_{1}\right)\geq 1-2C_{1}\  e^{-\kappa L/4}.\)
Moreover on the set \(\Omega_{1}\), for any \(E>0\) it holds
\begin{equation}
\begin{aligned}
\left\{(H_{\beta,\Lambda_L}^S)^{-1}(0,0) > \frac{1}{E}\right\} \Rightarrow \left\{(H_{\beta,\Lambda_L}^D)^{-1}(0,0) > \frac{1}{2E} \right\},
\end{aligned}
\end{equation}
In particular, when \(d=1\) this result holds for all \(W>0,\) since \(W_{c}=\infty.\)  
\end{lemma}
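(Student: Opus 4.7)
The plan is to treat the three assertions of the lemma separately: first establish the probability bounds $\nu^{W,\eta^w}_{\Lambda_L}(\Omega_{1,j}^c) \leq C_1 e^{-\kappa L/4}$ for $j = 0, 1$, then combine them with a resolvent identity to derive the implication $(H^S_{\beta,\Lambda_L})^{-1}(0,0) > 1/E \Rightarrow (H^D_{\beta,\Lambda_L})^{-1}(0,0) > 1/(2E)$ on $\Omega_1 = \Omega_{1,0} \cap \Omega_{1,1}$.

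For the failure probability of $\Omega_{1,0}$, the idea is to combine Theorem~\ref{cor:decayd=1} (available since $0 < W < W_c$) with Markov's inequality applied at each boundary site. For each $i \in \partial \Lambda_L$, Theorem~\ref{cor:decayd=1} gives $\mathbb{E}^{W,\eta^w}_{\Lambda_L}\bigl[\sqrt{\mathcal H_{\beta,\Lambda_L}^{-1}(0,i)/\mathcal H_{\beta,\Lambda_L}^{-1}(0,0)}\bigr] \leq C_0 e^{-\kappa |i|}$, so Markov at threshold $e^{-\kappa|i|/2}$ produces a probability $\leq C_0 e^{-\kappa|i|/2}$. A union bound over $|\partial \Lambda_L| = O(L^{d-1})$ boundary sites, using $|i|\geq L$ there, yields $\nu^{W,\eta^w}_{\Lambda_L}(\Omega_{1,0}^c) \leq C L^{d-1} e^{-\kappa L/2}$, which is absorbed into $C_1 e^{-\kappa L/4}$ for $L \geq L_0(W,d)$. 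For $\Omega_{1,1}$, since $H_{\beta,\Lambda_L}^{-1} = W\, \mathcal H_{\beta,\Lambda_L}^{-1}$, the condition $H^{-1}(0,0) > e^{\kappa L}$ becomes $\mathcal H^{-1}(0,0) > e^{\kappa L}/W$; applying Corollary~\ref{corboundH00} at $\delta = e^{\kappa L}/W$ and bounding $e^{-\gamma} \leq 1$ in the resulting integral gives a probability bound of order $\sqrt{2W/\pi}\, e^{-\kappa L/2}$, again absorbed into $C_1 e^{-\kappa L/4}$.

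For the deterministic implication on $\Omega_1$, I would invoke the resolvent identity $G^S - G^D = G^S V G^D$, where $G^\# := (H_{\beta,\Lambda_L}^\#)^{-1}$ and $V := M_{2d-n}$ is the non-negative diagonal operator supported on $\partial \Lambda_L$ with $V_{ii} \leq 2d$ (see \eqref{def:HD}). Both $H^S_{\beta,\Lambda_L}$ and $H^D_{\beta,\Lambda_L}$ are M-matrices, hence their Green's functions are entry-wise non-negative, and $G^S - G^D = G^S V G^D$ is then itself entry-wise non-negative, which in particular forces $G^D(i,0) \leq G^S(i,0) = G^S(0,i)$ by symmetry. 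Extracting the $(0,0)$ entry,
\[
G^S(0,0) - G^D(0,0) \leq 2d \sum_{i \in \partial \Lambda_L} G^S(0,i)^2.
\]
Since the ratios $G^S(0,i)/G^S(0,0)$ coincide with those of $\mathcal H$ (the overall factor $W$ cancels), on $\Omega_{1,0}$ we have $G^S(0,i) \leq G^S(0,0)\, e^{-\kappa|i|}$, and on $\Omega_{1,1}$ we have $G^S(0,0) \leq e^{\kappa L}$. Combining these, the right-hand side is at most $2d\, |\partial \Lambda_L|\, e^{-2\kappa L}\, G^S(0,0)^2 \leq C d L^{d-1} e^{-\kappa L}\, G^S(0,0)$, which is $\leq \tfrac12 G^S(0,0)$ as soon as $L \geq L_0(W,d)$. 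Therefore $G^D(0,0) \geq \tfrac12 G^S(0,0)$, and the desired implication follows at once.

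The only delicate point is bookkeeping of exponents: the decay $e^{-\kappa|i|}$ delivered by Theorem~\ref{cor:decayd=1} is exactly enough (after summing to give $e^{-2\kappa L}$ and sacrificing one factor $e^{\kappa L}$ to absorb a copy of $G^S(0,0)$ via $\Omega_{1,1}$) to leave a net $e^{-\kappa L}$ that beats the polynomial boundary cardinality $L^{d-1}$. This explains why $\Omega_{1,1}$ is defined using the same constant $\kappa$ as in Theorem~\ref{cor:decayd=1} with threshold $e^{\kappa L}$. The case $d = 1$ is automatic, since $W_c = +\infty$, so the same argument covers all $W > 0$.
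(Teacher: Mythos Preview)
Your proof is correct and follows essentially the same route as the paper: Markov plus a union bound (via Theorem~\ref{cor:decayd=1}) for $\Omega_{1,0}$, Corollary~\ref{corboundH00} for $\Omega_{1,1}$, and the resolvent identity $G^S-G^D=G^S V G^D$ combined with the entrywise inequality $G^D(0,i)\le G^S(0,i)$ and the events $\Omega_{1,0},\Omega_{1,1}$ to conclude $G^D(0,0)\ge \tfrac12 G^S(0,0)$. The only minor difference is that you deduce $G^D\le G^S$ entrywise directly from the nonnegativity of $G^S V G^D$ (using that both matrices are M-matrices), whereas the paper invokes the random walk representation for the same purpose; your argument is slightly cleaner but otherwise equivalent.
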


Note that the set \(\Omega_{1,0}\) is measurable w.r.t. \(\{\beta_{j} \}_{j\in \Lambda\setminus \{0 \}}\)
while \(\Omega_{1,1}\) is measurable w.r.t. \((H_{\beta,\Lambda_L})^{-1}(0,0) .\)
This fact will be important in the proof of the lower bound for the IDS.\vspace{0,2cm}

\begin{proof}[Proof of Lemma \ref{corollary-localization-DS}]
 The decay estimate \eqref{eq-decayd=1} together with the Markov inequality entails that \(\forall \, 0<W<W_{c},\)
 \begin{align}
\nu_{\Lambda_L}^{W,\eta^{w}}\left (\Omega_{1,0}^{c}\right) &\leq
\sum_{i\in \partial \Lambda_{L}}\nu_{\Lambda_L}^{W,\eta^{w}}\left (\left (
\sqrt{\tfrac{\mathcal{H}_{\beta,\Lambda_L}^{-1}(0,i)}{\mathcal{H}_{\beta,\Lambda_L}^{-1}(0,0)}}
\right) > e^{-\kappa |i|/2}\right)\nonumber\\
&
\leq C_{0}|\partial\Lambda_{L}|e^{-\kappa L/2}\leq C_{1} e^{-\kappa L/4}
\end{align}
for some constants \(C_1\), and for \(L\) large enough depending on \(W,d\).
Corollary \ref{corboundH00} with \(\delta =e^{\kappa L/2}/W\) gives
\begin{align}\label{boundprob00}
\nu_{\Lambda_L}^{W,\eta^{w}}\left (\Omega_{1,1}^{c}\right) &=
\nu_{\Lambda_L}^{W,\eta^{w}}(H_{\beta,\Lambda_L}^{-1}(0,0) > e^{\kappa L })\\
&=
\nu_{\Lambda_L}^{W,\eta^{w}}(\mathcal{H}^{-1}_{\beta,\Lambda_L} (0,0)> e^{\kappa L }/W)
\leq C_{1} e^{-\kappa L/4}.\nonumber
\end{align}
Therefore \eqref{eq:ome1prob} holds.\vspace{0,2cm}

Assume now we are in \(\Omega_{1}.\)  By resolvent identity we have
\begin{align}\label{resolv}
&(H_{\beta,\Lambda_L}^S)^{-1}(0,0)- (H_{\beta,\Lambda_L}^D)^{-1}(0,0)=\\
&\qquad 
 \sum_{j\in \partial \Lambda_{L}} (H_{\beta,\Lambda_L}^S)^{-1}(0,j) (2d-n_{j})
 (H_{\beta,\Lambda_L}^D)^{-1}(j,0) .\nonumber
\end{align}
By random walk representation we have, setting
 \(\mathcal{P}_{j_{0},j}^{\Lambda_{L} }=\)  the set of nearest neighbor paths from \(j_{0}\) to \(j\)
in \(\Lambda_{L}, \)  (cf. Proposition 6 of \cite{Sabot2019}  and notations therein)
\begin{equation}\label{eq-S-1-larger-D-1}
\begin{aligned}
\mathcal{H}_{\beta,\Lambda_L}^{-1}(0,i) = \sum_{\sigma\in \mathcal{P}^{\Lambda_L}_{0,i}}\frac{W_{\sigma}}{(2 \beta)_{\sigma}} \ge \sum_{\sigma\in \mathcal{P}^{\Lambda_L}_{0,i}}\frac{W_{\sigma}}{(2 \beta+W(2d-n))_{\sigma}}
= (\mathcal{H}_{\beta,\Lambda_L}^D)^{-1}(0,i).
\end{aligned}
\end{equation}
Therefore, on the set \(\Omega_{1}\) we have
\begin{align*}
&\left| \frac{(H_{\beta,\Lambda_L}^D)^{-1}(0,0)}{(H_{\beta,\Lambda_L}^S)^{-1}(0,0)} -1\right|=
1-\frac{(H_{\beta,\Lambda_L}^D)^{-1}(0,0)}{(H_{\beta,\Lambda_L}^S)^{-1}(0,0)} \\
&\qquad =
\sum_{j\in \partial \Lambda_{L}} (H_{\beta,\Lambda_L}^S)^{-1}(0,j) (2d-n_{j})
 \frac{(H_{\beta,\Lambda_L}^D)^{-1}(j,0)}{( H_{\beta,\Lambda_L}^S)^{-1}(0,0)}\\
&\qquad \leq   (H_{\beta,\Lambda_L}^S)^{-1}(0,0) \sum_{j\in \partial \Lambda_{L}}
\frac{(\mathcal{H}_{\beta,\Lambda_L}^S)^{-1}(0,j)}{(\mathcal{H}_{\beta,\Lambda_L}^S)^{-1}(0,0)} (2d-n_{j})
 \frac{(\mathcal{H}_{\beta,\Lambda_L}^S)^{-1}(j,0)}{(\mathcal{H}_{\beta,\Lambda_L}^S)^{-1}(0,0)}\\
&\qquad \leq e^{\kappa L} 2d\sum_{j\in \partial \Lambda_{L}} e^{-2\kappa |j|}\leq 2d
|\partial \Lambda_{L}| e^{+\kappa L}e^{- 2\kappa L}\leq  e^{- \kappa L/4}\leq \frac{1}{2},
\end{align*}
for \(L\geq L_{0}\), where \(L_0\) depends on \(W,d\). Thus, on \(\Omega_{1},\)  it holds
\begin{align}\label{boundprob00}
({H}_{\beta,\Lambda}^{D})^{-1} (0,0)&=({H}_{\beta,\Lambda}^{S})^{-1} (0,0)
\left[1- \left(1-\frac{(H_{\beta,\Lambda_L}^D)^{-1}(0,0)}{(H_{\beta,\Lambda_L}^S)^{-1}(0,0)} \right) \right] \\
& \geq
\frac{1}{2}({H}_{\beta,\Lambda}^{S})^{-1} (0,0),\nonumber
\end{align}
and hence
\begin{align*}
( H_{\beta,\Lambda_L}^S)^{-1}(0,0) > 1/E\ \Rightarrow\ 
({H}_{\beta,\Lambda}^{D})^{-1} (0,0)\geq \frac{1}{2E}.
\end{align*}
\end{proof}\vspace{0,2cm}

\begin{lemma}[Dirichlet versus simple bc (2)]
\label{corollary-fractional-moment}
Let  \(\Lambda_{L}\) be the finite box  in \(\mathbb{Z}^d\) of side \(2L+1\) centered at \(0.\)
We consider  \((\beta_i)_{i\in \Lambda_L}\sim \nu_{\Lambda_L}^{W,\eta^{w}}\) where \(\eta^w\) is the wired boundary
condition introduced in \eqref{eq-eta-boundary-Lambda}.

Define \(\Omega_{2}=\Omega_{2,0}\cap \Omega_{2,1},\) with 
\begin{align}\label{Omega2}
\Omega_{2,0}:=\Big \{\max_{j\in \partial \Lambda_L,i\sim 0}
\left( \mathcal{H}_{\beta,\Lambda_L\setminus\{0 \}}^{-1}(i,j)\right)
 \le  e^{-\frac{3}{2}\kappa' L}\Big \},\ 
\Omega_{2,1}:=\Big \{H_{\beta,\Lambda_L}^{-1}(0,0) \le e^{\kappa' L}\Big\},
\end{align}
where \(\kappa'\) is the constant introduced in Theorem \ref{thm-localization-fractional-moment}
and remember that
 \(\mathcal{H}_{\beta,\Lambda_L}= WH_{\beta,\Lambda_{L}}\).  Let \(W_{c}'\) be as in
 Theorem \ref{thm-localization-fractional-moment}.

 There is  \(L_{0} (W,d)>1\) such that \(\forall \, L\geq L_{0}\) and \(\forall \, 0<W<W_{c}' \) we have \(\kappa' L>1,\) and there is
a constant \(C_{1}'=C_{1}' (W,d) >0\) such that  
\begin{equation}\label{probomega2}
\begin{aligned}
\nu_{\Lambda_L}^{W,\eta^{w}}\left (\Omega_{2,j}\right) \ge 1- C_{1}'\  e^{-\kappa' L/4}\qquad
\mbox{ for } j=0,1,
\end{aligned}
\end{equation}
and hence \(\nu_{\Lambda_L}^{W,\eta^{w}}\left (\Omega_{2}\right)\geq 1-2C_{1}'\  e^{-\kappa' L/4}.\)
Moreover on the set \(\Omega_{2}\) it holds
\begin{equation}
\begin{aligned}
\left\{(H_{\beta,\Lambda_L}^S)^{-1}(0,0) > \frac{1}{E}\right\} \Rightarrow \left\{(H_{\beta,\Lambda_L}^D)^{-1}(0,0) > \frac{1}{2E} \right\},
\end{aligned}
\end{equation}
for all energy \(0<E<\frac{1}{2}.\)
\end{lemma}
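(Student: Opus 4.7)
I would follow the same three-step template as Lemma \ref{corollary-localization-DS}: (i) a decay bound combined with Markov gives $\nu(\Omega_{2,0}^c)$ small, (ii) Corollary \ref{corboundH00} gives $\nu(\Omega_{2,1}^c)$ small, and (iii) on $\Omega_2$ the resolvent identity \eqref{resolv} together with a first-passage/Schur-complement identity yields $(H_{\beta,\Lambda_L}^D)^{-1}(0,0) \ge \tfrac{1}{2}(H_{\beta,\Lambda_L}^S)^{-1}(0,0)$. The structural difference with Lemma \ref{corollary-localization-DS} is that Corollary \ref{cor:fractional} controls $\mathcal{H}^{-1}(i,j)$ itself (through fractional moments) rather than the normalized ratio $\mathcal{H}^{-1}(0,i)/\mathcal{H}^{-1}(0,0)$, so the implication must be routed through a different identity.

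For $\Omega_{2,0}$, by the random walk representation restricting to $\Lambda_L\setminus\{0\}$ only removes admissible paths, so $\mathcal{H}^{-1}_{\beta,\Lambda_L\setminus\{0\}}(i,j) \le \mathcal{H}^{-1}_{\beta,\Lambda_L}(i,j)$ pointwise. Taking fourth-root moments and invoking Corollary \ref{cor:fractional} gives $\mathbb{E}^{W,\eta^w}_{\Lambda_L}[\mathcal{H}^{-1}_{\Lambda_L\setminus\{0\}}(i,j)^{1/4}]\le C_0' e^{-\kappa|i-j|}$. Since $|i-j|\ge L-1$ for $i\sim 0$ and $j\in\partial\Lambda_L$, Markov's inequality at threshold $e^{-3\kappa L/2}$ produces a bound of order $e^{3\kappa L/8}e^{-\kappa(L-1)}=O(e^{-5\kappa L/8})$, and a union bound over the $\le 2d$ neighbors of $0$ and the $O(L^{d-1})$ boundary sites yields $\nu(\Omega_{2,0}^c)\le C_1 e^{-\kappa L/4}$ for $L\ge L_0(W,d)$. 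The bound on $\Omega_{2,1}$ is identical to its counterpart in Lemma \ref{corollary-localization-DS}, via Corollary \ref{corboundH00} applied with $\delta=e^{\kappa L}/W$.

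For the implication on $\Omega_2$, the key step is the Schur-complement identity (equivalently, decomposing paths in $\overline{\mathcal{P}}_{j,0}^{\Lambda_L}$ by their last step into $0$):
$$\frac{\mathcal{H}^{-1}_{\Lambda_L}(0,j)}{\mathcal{H}^{-1}_{\Lambda_L}(0,0)} = W\sum_{i\sim 0}\mathcal{H}^{-1}_{\beta,\Lambda_L\setminus\{0\}}(i,j),$$
which on $\Omega_{2,0}$ is bounded by $2dW e^{-3\kappa L/2}$ for $j\in\partial\Lambda_L$. Plugging this into \eqref{resolv} and using \eqref{eq-S-1-larger-D-1} to bound $(H^D)^{-1}(j,0)\le (H^S)^{-1}(j,0)$, one obtains on $\Omega_2$
$$1 - \frac{(H^D)^{-1}(0,0)}{(H^S)^{-1}(0,0)} \le (H^S)^{-1}(0,0)\cdot 2d|\partial\Lambda_L|\cdot (2dW e^{-3\kappa L/2})^2 \le O(L^{d-1}W^2)\, e^{-2\kappa L},$$
which is $\le 1/2$ for $L\ge L_0(W,d)$. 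This yields $(H^D)^{-1}(0,0)\ge \tfrac12 (H^S)^{-1}(0,0)$ and hence the desired implication $\{(H^S)^{-1}(0,0)>1/E\}\Rightarrow\{(H^D)^{-1}(0,0)>1/(2E)\}$. The main obstacle is the Schur-complement step, where one must align exactly the quantity controlled by $\Omega_{2,0}$ with the ratio entering the resolvent decomposition; the restriction $0<E<1/2$ plays no role in this comparison itself but is included to ensure compatibility with the subsequent application to the IDS lower bound.
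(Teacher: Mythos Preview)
Your proof is correct and follows the paper's argument closely for parts (i) and (ii): the probability bounds on $\Omega_{2,0}$ and $\Omega_{2,1}$ are obtained exactly as you describe, via the random-walk domination $\mathcal{H}^{-1}_{\Lambda_L\setminus\{0\}}(i,j)\le \mathcal{H}^{-1}_{\Lambda_L}(i,j)$, Corollary~\ref{cor:fractional} with Markov at the $1/4$-moment, and Corollary~\ref{corboundH00}.

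For the implication on $\Omega_2$ you take a slightly different route than the paper. You mimic Lemma~\ref{corollary-localization-DS} and bound the \emph{ratio} $1-(H^D)^{-1}(0,0)/(H^S)^{-1}(0,0)$, using the Schur identity $\mathcal{H}^{-1}_{\Lambda_L}(0,j)/\mathcal{H}^{-1}_{\Lambda_L}(0,0)=W\sum_{i\sim 0}\mathcal{H}^{-1}_{\Lambda_L\setminus\{0\}}(i,j)$ twice together with $(H^S)^{-1}(0,0)\le e^{\kappa L}$; this yields $(H^D)^{-1}(0,0)\ge \tfrac12(H^S)^{-1}(0,0)$ and hence the implication for \emph{all} $E>0$. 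The paper instead uses the same Schur identity, but combined with $\Omega_{2,1}$, to bound $\mathcal{H}^{-1}_{\Lambda_L}(0,j)$ itself by $O(e^{-\kappa L/4})$, and then controls the \emph{additive} difference $|(H^D)^{-1}(0,0)-(H^S)^{-1}(0,0)|\le e^{-\kappa L/4}$; deducing $(H^D)^{-1}(0,0)\ge 1/E - e^{-\kappa L/4}\ge 1/(2E)$ is where the restriction $E\le 1/2$ enters. So your observation that $E<1/2$ plays no role in your argument is correct, and in fact your multiplicative bound is marginally stronger; the paper's additive bound is a hair shorter to write but costs the (harmless) energy restriction.
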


Note that also here the set \(\Omega_{2,0}\) is measurable w.r.t. \(\{\beta_{j} \}_{j\in \Lambda\setminus \{0 \}}\)
while \(\Omega_{2,1}\) is measurable w.r.t. \((H_{\beta,\Lambda_L})^{-1}(0,0) .\)
This fact will be important in the proof of the lower bound on the IDS. \vspace{0,2cm}

\begin{proof}[Proof of Lemma \ref{corollary-fractional-moment}]
Using the random path representation, as in \eqref{eq:boundLL+1}, we obtain
\(\mathcal{H}_{\beta,\Lambda_L\setminus\{0\} }^{-1}(i,j)\leq  \mathcal{H}_{\beta,\Lambda_L }^{-1}(i,j).\)
Then, the decay estimate \eqref{decayd>1} together with the Markov inequality entails,  \(\forall \, 0<W<W_{c}',\)
\begin{align*}
&\nu_{\Lambda_L}^{W,\eta^{w}}\left (\Omega_{2,0}^{c}\right) \leq
\sum_{j\in \partial \Lambda_{L}, i\sim 0}\nu_{\Lambda_L}^{W,\eta^{w}}\left (
\mathcal{H}_{\beta,\Lambda_L\setminus\{0\} }^{-1}(i,j)> e^{-\frac{3}{2}\kappa' L}\right)\\
&\qquad \leq \sum_{j\in \partial \Lambda_{L}, i\sim 0}
e^{\kappa' L 3/8}\mathbb{E}_{\Lambda_L}^{W,\eta^{w}}
\left[(\mathcal{H}_{\beta,\Lambda_L\setminus\{0\}}^{-1}(i,j))^{1/4} \right]\\
&\qquad \leq \sum_{j\in \partial \Lambda_{L}, i\sim 0}
e^{\kappa' L 3/8}\mathbb{E}_{\Lambda_L}^{W,\eta^{w}}
\left[(\mathcal{H}_{\beta,\Lambda_L}^{-1}(i,j))^{1/4} \right]\\
&\qquad 
\leq C_{0} \sum_{j\in \partial \Lambda_{L}, i\sim 0} e^{\kappa' L 3/8} e^{-\kappa' |i-j|}\leq C_{0}|\partial \Lambda_{L}| e^{-\kappa' L 5 /8}\leq C_{1}' e^{-\kappa' L/4}
\end{align*}
for some constant \(C_1'=C_1'(W,d)\). 
The bound for \(\Omega_{2,1}\) works exactly as the one for \(\Omega_{1,1}\)
in Lemma \ref{corollary-localization-DS}.
Therefore  \eqref{probomega2} holds.\vspace{0,2cm}

Assume now we are on \(\Omega_2\). We have, for all \(j\in \partial \Lambda_{L}\)
\begin{equation}\label{rel0j}
\mathcal{H}_{\beta,\Lambda_L}^{-1}(0,j)= 
\sum_{i\sim 0} \mathcal{H}_{\beta,\Lambda_L}^{-1}(0,0) W 
\mathcal{H}_{\beta,\Lambda_L\setminus\{0\}}^{-1}(i,j).
\end{equation}
Therefore
\[
\mathcal{H}_{\beta,\Lambda_L}^{-1}(0,j)\leq W2d
e^{\kappa' L} e^{-\frac{3}{2}\kappa' L }
\leq W e^{-\frac{1}{4}\kappa' L }.
\]
By \eqref{resolv} and \eqref{eq-S-1-larger-D-1}
\begin{equation}
\begin{aligned}
&\hspace{-2cm}|(H_{\beta,\Lambda_L}^D)^{-1}(0,0)-(H_{\beta,\Lambda_L}^S)^{-1}(0,0)|
  \\
  &= \sum_{j\in \partial \Lambda_{L}} (H_{\beta,\Lambda_L}^S)^{-1}(0,j) (2d-n_{j})
 (H_{\beta,\Lambda_L}^D)^{-1}(j,0) \\
&\leq \sum_{j\in \partial \Lambda_{L}} (H_{\beta,\Lambda_L}^S)^{-1}(0,j) (2d-n_{j})
 (H_{\beta,\Lambda_L}^S)^{-1}(j,0)
  \\
  &\leq 
    W^{2}| \partial \Lambda_L| 2d  e^{-\frac{1}{2}\kappa' L}\\
  &\leq  e^{-\frac{1}{4}\kappa' L}
\end{aligned}
\end{equation}
for \(L\) large enough, depending on \(W,d\). It follows that, if \(E \le 1/2\)
\begin{align*}
&(H_{\beta,\Lambda_L}^S)^{-1}(0,0)> \frac{1}{E} \Rightarrow \\
&\qquad (H_{\beta,\Lambda_L}^D)^{-1}(0,0) \ge (H_{\beta,\Lambda_L}^S)^{-1}(0,0)-e^{-\frac{1}{4}\kappa' L}\ge \frac{1}{E}- e^{-\frac{1}{4}\kappa' L} \ge \frac{1}{2 E}
\end{align*}
for \(L\geq L_0=L_0(W,d)\).
\end{proof}

\section{Lower bound on the IDS}
\label{sec:org8c54fb3}

We are now ready to prove Theorem \ref{thm-main-theorem}.
By  Lemma \ref{le1} and Lemma \ref{le2} in Section 3.1 we have
\[
N(E,H_{\beta })\geq \frac{1}{|\Lambda_{L}|} \nu^{W,\eta^{w} }_{\Lambda_{L}}\left (( H^{D}_{\beta ,\Lambda_{L}})^{-1} (0,0) \geq \frac{1}{E}
\right ).
\]
Remember the definition of \(W_{cr}\) in \eqref{def:Wcrit} and the configuration sets \(\Omega_{1,0},\Omega_{1,1},\Omega_{2,0},\Omega_{2,1}\)
introduced in \eqref{Omega1} and
\eqref{Omega2}. We define, for \(j=0,1,\)  
\begin{equation}\label{def:omegaloc}
\Omega_{\text{loc},j}:= \left\{
\begin{array}{ll}
\Omega_{1,j} & \mbox{ if } W_{cr}=W_{c}, \\
\Omega_{2,j}& \mbox{ if } W_{cr}=W_{c}'. \\
\end{array}
\right.
\end{equation}
and \(\Omega_{\text{loc}}:= \Omega_{\text{loc},0}\cap \Omega_{\text{loc},1}.\)
Note that the constants  $\kappa,C_{0},C_{1}, \kappa',C_{0}', C_{1}'$ appearing in the definition of the sets  $\Omega_{1,j}, \Omega_{2,j}$ and the results of   Lemma
\ref{corollary-localization-DS} and \ref{corollary-fractional-moment} 
play the same role.  To alleviate notation, in the following we will not distinguish them.

By Lemma \ref{corollary-localization-DS} and \ref{corollary-fractional-moment}  we have
\begin{equation}\label{eq:lb1}
N (E,H_{\beta })\geq \frac{1}{|\Lambda_{L}|} \mathbb{E}^{W,\eta^{w} }_{\Lambda_{L}}\left [
\mathbf{1}_{\Omega_{\text{loc}}} \mathbf{1}_{\left\{  H_{\beta ,\Lambda_{L}}^{-1} (0,0) \geq \frac{1}{2E}  \right\}}
\right ]
\end{equation}
for all \(0<W<W_{cr},\) \(E<\frac{1}{2}\) and \(L\) large.
Remember that \(H_{\beta,\Lambda_{L}}=\frac{1}{W}\mathcal{H}_{\beta,\Lambda_{L}},\) and
set  \(0^{c}=\Lambda_{L} \setminus \{0 \}.\)
 By Schur decomposition,
\begin{equation}
\begin{aligned}
\frac{1}{W}H_{\beta,\Lambda_L}^{-1}(0,0)= \mathcal{H}_{\beta,\Lambda_L}^{-1}(0,0)=
\frac{1}{2\beta_{0}- P_{0,0^c}^{W} \mathcal{H}^{-1}_{\beta ,\Lambda_{L} \setminus \{0 \}} P^{W}_{0^c,0}}=:\frac{1}{y}.
\end{aligned}
\end{equation}

\begin{lemma}The  conditional density of the variable \(y:=2\beta_{0}- P_{0,0^c}^{W} \mathcal{H}^{-1}_{\beta ,\Lambda_{L} \setminus \{0 \}} P^{W}_{0^c,0},\)
 given \(\beta_{0^{c}}= (\beta_{j})_{i\in 0^{c}}\), is 
\begin{equation}\label{conditional-density-y}
d\rho_{a_{0}} (y)=\rho_{a_{0}} (y) dy=\frac{e^{a_{0}}}{\sqrt{2\pi }} e^{-\frac{1}{2}\left(y+\frac{a_{0}^2}{y} \right)}\frac{1}{\sqrt{y}} \mathbf{1}_{y>0} dy,
\end{equation}
where
\begin{equation}\label{a-def}
a_{0}=a_{0} (\beta_{0^{c}}) =\sum_{j\in \partial \Lambda_{L}}\frac{\mathcal{H}_{\beta,\Lambda_{L}}^{-1} (0,j)}{
\mathcal{H}_{\beta,\Lambda_{L}}^{-1} (0,0)}\eta^{w}_{j}=
 W \hspace{-0,4cm}\sum_{i\sim 0, j\in \partial \Lambda_{L}}  \hspace{-0,4cm}
\mathcal{H}_{\beta,\Lambda_{L}\setminus \{0 \}}^{-1}(i,j)\ \eta^{w}_{j} .
\end{equation}
\end{lemma}
The corresponding average will be denoted by \(\mathbb{E}^{a_{0}}.\)\medskip

\begin{proof}
The result follows from the factorization in \cite[Equation (5.14)]{Sabot2019} with \(U=\Lambda \setminus \{0\}\) and \(U^c=\{0\}\).

Alternatively one may insert  the following relations in \eqref{eq-density-beta-Lambda}:
\begin{align}\label{}
 \left< 1,\mathcal{H}_{\beta,\Lambda_{L}}1 \right> & = 2\beta_{0}+ \left< 1_{0^{c}},\mathcal{H}_{\beta,\Lambda_{L}\setminus \{0 \}}1_{0^{c}} \right>-2\sum_{i\sim 0}W=y+F (\beta_{0^{c}})\\
 \left< \eta^{w}, \mathcal{H}_{\beta,\Lambda_L}^{-1} \eta^{w} \right>&=
\left< \eta^{w}, \mathcal{H}_{\beta,\Lambda_L\setminus \{0 \}}^{-1} \eta^{w} \right> + a^{2}_{0} \mathcal{H}_{\beta,\Lambda_L}^{-1} (0,0)=
 G (\beta_{0^{c}}) + \frac{a^{2}_{0}}{y}\\
\det \mathcal{H}_{\beta,\Lambda_L}&= y \det \mathcal{H}_{\beta,\Lambda_{L}\setminus \{0 \}} ,
\end{align}
where, \(F,G\) are functions of \(\beta_{0^{c}}\) and  in second line we combined \(\eta^{w}_{0} =0\) and 
the resolvent identity
\(A^{-1}=B^{-1}+B^{-1} (B-A)B^{-1}+ B^{-1} (B-A)A^{-1} (B-A)B^{-1}\) with
\(A=\mathcal{H}_{\beta,\Lambda_{L}}\) and
\(B=2\beta_{0}\oplus  \mathcal{H}_{\beta,\Lambda_{L}\setminus \{0 \}}.\)
This yields
\[
\rho_{a_{0}} (y)=c_{a_{0}} e^{-\frac{1}{2}\left(y+\frac{a_{0}^2}{y} \right)}\frac{1}{\sqrt{y}} \mathbf{1}_{y>0}.
\]
To determine the normalizing constant $c_{a_{0}}$ note that
\[
\int_{0}^{\infty}  e^{-\frac{1}{2}\left(y+\frac{a_{0}^2}{y} \right)}\frac{1}{\sqrt{y}} dy =\frac{  \sqrt{2\pi } }{a_{0}} e^{-a_{0}}\  \mathbb{E}_{IG (a_{0},a_{0}^{2})}[y]=
 \sqrt{2\pi } e^{-a_{0}},
\]
where $\mathbb{E}_{IG (a_{0},a_{0}^{2})}$ denotes the expectation with respect to the probability distribution of the
inverse Gaussian $IG (a_{0},a_{0}^{2}). $ 

\end{proof}\medskip

Using these results, the average in  \eqref{eq:lb1} can be reformulated as follows.
\begin{align}
& \mathbb{E}^{W,\eta^{w} }_{\Lambda_{L}}\left [
\mathbf{1}_{\Omega_{\text{loc}}} \mathbf{1}_{\left\{  H_{\beta ,\Lambda_{L}}^{-1} (0,0) \geq \frac{1}{2E}  \right\}}
\right ]= \mathbb{E}^{W,\eta^{w} }_{\Lambda_{L}}\left [
\mathbf{1}_{\Omega_{\text{loc}}} \mathbf{1}_{\left\{  \frac{1}{W}H_{\beta ,\Lambda_{L}}^{-1} (0,0) \geq \frac{1}{2EW}  \right\}}
\right ]\\
&\quad 
 = \mathbb{E}^{W,\eta^{w} }_{\Lambda_{L}}\left [\mathbf{1}_{\Omega_{\text{loc},0}}
\int \mathbf{1}_{\Omega_{\text{loc},1}}  \mathbf{1}_{y\leq 2EW} \frac{e^{a_{0}}}{\sqrt{2\pi }} e^{-\frac{1}{2}\left(y+\frac{a^{2}_{0}}{y} \right)}\frac{1}{\sqrt{y}} \mathbf{1}_{y>0} dy\right ]\nonumber\\
&\quad 
 =\mathbb{E}^{W,\eta^{w} }_{\Lambda_{L}}\left [ \mathbf{1}_{\Omega_{\text{loc},0}} \mathbb{E}^{a_{0} (\beta_{0^{c}})}[ \mathbf{1}_{\Omega_{\text{loc},1}} \mathbf{1}_{y\leq 2EW} ]  \right ]\\
 &\quad=  \mathbb{E}^{W,\eta^{w} }_{\Lambda_{L}}\left [ \mathbf{1}_{\Omega_{\text{loc},0}}\ 
 \mathbb{E}^{a_{0} (\beta_{0^{c}})} [\mathbf{1}_{ We^{-\kappa L}\leq y\leq 2EW }]\ \right ],
\nonumber
\end{align}
where we used that \(\Omega_{\text{loc},0}\) is measurable w.r.t. \(\beta_{0^{c}}\) and
\[
\Omega_{\text{loc},1}= \Big \{H_{\beta,\Lambda_L}^{-1}(0,0) < e^{\kappa L}\Big\}=
\big \{ \frac{y}{W} > e^{-\kappa L}\big\}.
\]
We have for \(L\) large enough, \(\forall \beta_{0^{c}}\in \Omega_{1,0}\),
\begin{align*}
a_{0} (\beta_{0^{c}})&=\sum_{j\in \partial \Lambda_{L}}\frac{\mathcal{H}_{\beta,\Lambda_{L}}^{-1} (0,j)}{
                       \mathcal{H}_{\beta,\Lambda_{L}}^{-1} (0,0)}\eta^{w}_{j}\leq  2 dW|\partial \Lambda_{L}| e^{-\kappa L}\leq W  e^{-\kappa L/2}
\end{align*}
and \(  \forall \beta_{0^{c}}\in \Omega_{2,0}\),
\begin{align*}
a_{0} (\beta_{0^{c}})&=W \sum_{i\sim 0, j\in \partial \Lambda_{L}}\mathcal{H}_{\beta,\Lambda_{L}\setminus \{0 \}}^{-1} (i,j)
\eta_{j}\leq 2dW|\partial \Lambda_{L}| e^{-\frac{3}{2}\kappa L}\leq W e^{-\kappa L/2}.
\end{align*}
Setting \(\overline{a}_{0}:=  W e^{-\kappa L/2},\) and remarking that \(a_{0}\geq 0,\) we argue
\[
e^{a_{0} (\beta_{0^{c}})}e^{-\frac{(a_{0}(\beta_{0^{c}}))^{2}}{2y}}\geq e^{-\frac{(a_{0}(\beta_{0^{c}}))^{2}}{2y}}\geq e^{-\frac{\overline{a}^{2}_{0}}{2y}}
= e^{\overline{a}_{0}}e^{-\frac{\overline{a}^{2}_{0}}{2y}} e^{-\overline{a}_{0}},
\]
hence
\[
\rho_{a_{0} (\beta_{0^{c}})  } (y) \geq \rho_{\overline{a}_{0}} (y) e^{-\overline{a}_{0}}\qquad  \forall \beta_{0^{c}}\in \Omega_{\text{loc},1}.
\]

Therefore we obtain
\begin{align*}
&\mathbb{E}^{W,\eta^{w} }_{\Lambda_{L}}\left [ \mathbf{1}_{\Omega_{\text{loc},0}}\ 
 \mathbb{E}^{a_{0} (\beta_{0^{c}})} [\mathbf{1}_{ We^{-\kappa L}\leq y\leq 2WE }]\ \right ]\\
&\qquad 
\geq \ e^{-\overline{a}_{0}}\ \nu^{W,\eta^{w} }_{\Lambda_{L}} \big (\Omega_{\text{loc},0} \big)
\ \rho_{\overline{a}_{0}}  \big(W e^{-\kappa L}\leq y\leq 2WE   \big)\\
&\qquad \geq  (1-C_{2}e^{-\kappa L/4})\
\rho_{\overline{a}_{0}}  \big( W e^{-\kappa L}\leq y\leq 2WE   \big)
\end{align*} for some constant \(C_2\),
where we used \eqref{eq:ome1prob}, \eqref{probomega2} and the bound
\( e^{-\overline{a}_{0}}=  e^{-We^{-\frac{1}{2}\kappa L}}\geq (1-c e^{-\kappa L/4})\)
for some constant \(c>0.\)

It remains to extract a lower bound on
\(\rho_{\overline{a}_{0}}  \big( We^{-\kappa L}\leq y\leq 2WE   \big).\) Set
\(L=L_{E}:= \frac{1}{\kappa }\ln \frac{1}{E}.\) For \(E\) small,  \(L_{E}\) is large enough for all our results to hold.
Then \(2WE=2We^{-\kappa L}\) and hence \(We^{-\kappa L}< 2WE<\overline{a}_{0}.\) Moreover 
\[
\frac{(\overline{a}_{0}-y)^{2}}{y}\leq W\qquad \forall y \text{ such that } \, We^{-\kappa L}\leq y\leq 2 W e^{-\kappa L}.
\]
It follows
\begin{align*}
&\rho_{\overline{a}_{0}}  \big( We^{-\kappa L}\leq y\leq 2WE\big)
=\frac{1}{\sqrt{2\pi }} \int_{We^{-\kappa L}}^{2WE}
e^{-\frac{(\overline{a}_{0}-y)^{2}}{2y}}\frac{1}{\sqrt{y}} dy\\
&
\geq \ \frac{e^{-\frac{W}{2}}}{\sqrt{2\pi }} \int_{We^{-\kappa L}}^{2WE}\frac{1}{\sqrt{y}} dy
= c e^{-W/2}\sqrt{WE}
\end{align*}
for some constant \(c>0\) independent of \(E,\kappa,L, W\).
Putting all these results together we obtain
\begin{equation}
\begin{aligned}
N(E,H_{\beta }) &\ge  \tfrac{1}{|\Lambda_{L_{E}}|}
 \nu^{W,\eta^{w} }_{\Lambda_{L_{E}}}\left (( H^{D}_{\beta ,\Lambda_{L_{E}}})^{-1} (0,0) \geq \frac{1}{E}
\right)\\
& \geq   \tfrac{1}{|\Lambda_{L_{E}}|}(1-C_{2}e^{-\kappa L_{E}/4})\
 ce^{-W/2}\sqrt{W} \sqrt{E}\ \ge\  c'\frac{1}{|\log E|^d} \sqrt{E}
\end{aligned}
\end{equation}
for some constant \(c'>0\) depending on \(W,d,\kappa\).
This concludes the proof of the lower bound.

\section{Wegner estimate}
\label{sec:org59d4fb1}

In this section we prove Theorem \ref{wegner-type-theorem}. We work out in detail 
the proof only for the case of simple boundary conditions. The proof in the case of Dirichlet boundary conditions works exactly in the same way. 

\begin{proof}[Proof of Theorem \ref{wegner-type-theorem}]
Note that, 
since \(H=\mathcal{H}/W,\) we have
\[
N_{\Lambda_L}(E,{H}_{\beta ,\Lambda_{L}})=N_{\Lambda_L}(WE,\mathcal{H}_{\beta ,\Lambda_{L}}) =\frac{1}{|\Lambda_L|}
\operatorname{tr} \mathbbm{1}_{(-\infty,0]}  (\mathcal{H}_{\beta ,\Lambda_{L}}-WE ).
\]
We start with a regularity bound on
\[N_{\Lambda_L}(E+\varepsilon,{H}_{\beta ,\Lambda_{L}})-N_{\Lambda_L}(E-\varepsilon,{H}_{\beta ,\Lambda_{L}})=\frac{1}{|\Lambda_L|}
\operatorname{tr} \mathbbm{1}_{[-\varepsilon,\varepsilon]}  (\mathcal{H}_{\beta ,\Lambda_{L}}-WE ).\]
We smooth out the discontinuous function \( \mathbbm{1}_{[-\varepsilon,\varepsilon]} (x)\) as follows.
Let \(\rho \) be a  smooth non-decreasing function \(\rho\)  satisfying  \(\rho=0\) on
\((-\infty,-\varepsilon )\) and \(\rho=1\) on \((\varepsilon ,\infty)\).
Then
\[
\mathbbm{1}_{[-\varepsilon ,\varepsilon ]} (x)\leq \rho (x+2\varepsilon )-\rho (x-2\varepsilon )\qquad \forall x\in \mathbb{R}.  
\]
Setting
\begin{equation}\label{def:deltaL}
\delta_L(E,\varepsilon):=\mathbb{E}_{\Lambda_{L}}^{W,\eta^{w}}\big [ \operatorname{tr}\left ( \rho(\mathcal{H}_{\beta ,\Lambda_{L}} -E+2 \varepsilon)-\rho(\mathcal{H}_{\beta,\Lambda_L}-E-2\varepsilon) \right)\big ],
\end{equation}
we have
\begin{equation}
0\leq \mathbb{E}_{\Lambda_{L}}^{W,\eta^{w}}\left[ N_{\Lambda_L}(E+\varepsilon,H_{\beta,\Lambda_L})-N_{\Lambda_L}(E-\varepsilon,H_{\beta ,\Lambda_L}) \right] \le \frac{1}{|\Lambda_L|} \delta_L(WE,W\varepsilon).
\end{equation}
Remember that   the conditional
measure for \(2\beta_{j}- P_{j,j^c}^{W} \mathcal{H}^{-1}_{\beta ,\Lambda_{L} \setminus \{j \}} P^{W}_{j^c,j},\) 
 given \(\beta_{j^{c}}= (\beta_{j})_{i\in \Lambda_{L} \setminus \{j \}},\) is \(\rho_{a_{j}}\)
defined in \eqref{conditional-density-y}, where, for general \(j\in \Lambda_{L},\)
\begin{equation}\label{eq:adefj}
a_{j} =a_{j} (\beta_{j^{c}}) := \eta^{w}_{j}+W \hspace{-0,4cm}\sum_{i\sim j, k\in  \Lambda_{L} \setminus \{j \}}  \hspace{-0,4cm}
\mathcal{H}_{\beta,\Lambda_{L} \setminus \{j \}}^{-1}(i,k)\ \eta^{w}_{k}
=\frac{\sum_{k\in \partial \Lambda_{L}}\mathcal{H}_{\beta,\Lambda_{L}}^{-1} (j,k)\eta^{w}_{k}}{
\mathcal{H}_{\beta,\Lambda_{L}}^{-1} (j,j)}
\end{equation}
(cf. Equation (5.14) of \cite{Sabot2019}).
We also recall that the L\'evy concentration of a measure \(\mu\) on \(\mathbb{R}\) is
defined by
\begin{equation}\label{levy-concentrtion}
\mathscr{L}_{\mu}(\varepsilon)=\sup_{x}\mu([x,x+\varepsilon)).
\end{equation}
By Lemma \ref{le:deltabound} below, we have
\begin{align*}
&\mathbb{E}_{\Lambda_{L}}^{W,\eta^{w}}\left[ N_{\Lambda_L}(E+\varepsilon,H_{\beta,\Lambda_L})-N_{\Lambda_L}(E-\varepsilon,H_{\beta ,\Lambda_L}) \right] \leq \frac{1}{|\Lambda_{L}|} \delta_L(WE,W\varepsilon)\\
&\qquad 
\leq\frac{1}{|\Lambda_{L}|} \sum_{j\in \Lambda_{L}}\mathbb{E}_{\Lambda_{L}}^{W,\eta^{w}}\left[
\mathscr{L}_{\rho_{a_{j} (\beta_{j^{c}} )}}(4W\varepsilon) \right].
\end{align*}
Now \eqref{eq:wegner-17} and \eqref{eq:wegner-18} follow by inserting the bounds \eqref{eq:levy-bound} and \eqref{eq:levy-bound-bis}, stated below. This concludes the proof of Theorem \ref{wegner-type-theorem}.
\end{proof}

\begin{lemma}\label{le:deltabound}
For all \(E>0\) and \(0<\varepsilon<E \) it holds
\begin{equation}\label{eq:deltabound}
\delta_L(E,\varepsilon)\leq  \sum_{j\in \Lambda_L }
\mathbb{E}_{\Lambda_{L}}^{W,\eta^{w}}\left[
\mathscr{L}_{\rho_{a_{j} (\beta_{j^{c}})}}(4\varepsilon)  \right].
\end{equation}
\end{lemma}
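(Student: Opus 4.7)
The approach is the standard site-by-site interpolation (``switching argument'') from Wegner-estimate theory, adapted to the non-i.i.d.\ law $\nu^{W,\eta^w}_{\Lambda_L}$ via the conditional density $\rho_{a_j}$ of the Schur variable $y_j$. Since $\mathcal{H}_{\beta\pm\varepsilon\mathbf{1}} = \mathcal{H}_\beta \pm 2\varepsilon\,I$, the trace in the definition of $\delta_L(E,\varepsilon)$ equals $\operatorname{tr}(\rho(\mathcal{H}_{\beta+\varepsilon\mathbf{1}} - E) - \rho(\mathcal{H}_{\beta-\varepsilon\mathbf{1}} - E))$. Enumerating $\Lambda_L = \{v_1,\ldots,v_N\}$ and setting $\tilde\beta^{(k)} := \beta - \varepsilon\mathbf{1} + 2\varepsilon\sum_{i=1}^{k}e_{v_i}$, this telescopes as $\sum_{k=1}^{N}\Delta_k$, where $\Delta_k := \operatorname{tr}(\rho(\mathcal{H}_{\tilde\beta^{(k)}} - E) - \rho(\mathcal{H}_{\tilde\beta^{(k-1)}} - E))$ involves only a rank-one change at the $(v_k,v_k)$ diagonal entry.

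For each $k$, introduce $\mathcal{H}^\star(t)$, the operator obtained from $\mathcal{H}_{\tilde\beta^{(k-1)}}$ by replacing its $v_k$-th diagonal $\beta$-entry by $t$; then $d\mathcal{H}^\star/dt = 2\,e_{v_k}e_{v_k}^T$, and by Hellmann--Feynman
\begin{equation*}
\Delta_k \;=\; \int_{\beta_{v_k}-\varepsilon}^{\beta_{v_k}+\varepsilon} 2\bigl[\rho'(\mathcal{H}^\star(t) - E)\bigr]_{v_k,v_k}\, dt.
\end{equation*}
Taking expectation, I condition on $\beta_{v_k^c} := (\beta_i)_{i\neq v_k}$ and swap the order of integration via Fubini (the integrand is nonnegative); the inner $\beta_{v_k}$-integral produces $\mathbb{P}(\beta_{v_k}\in[t-\varepsilon,t+\varepsilon]\,|\,\beta_{v_k^c})$, which is bounded uniformly in $t$ by the L\'evy concentration $\mathscr{L}_{f(\cdot|\beta_{v_k^c})}(2\varepsilon)$ of the conditional density of $\beta_{v_k}$. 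By the Schur change of variable $y_{v_k} = 2\beta_{v_k} - P^W_{v_k,v_k^c}\mathcal{H}^{-1}_{\beta,\Lambda_L\setminus\{v_k\}}P^W_{v_k^c,v_k}$, whose conditional density is $\rho_{a_{v_k}(\beta_{v_k^c})}$, this equals $\mathscr{L}_{\rho_{a_{v_k}(\beta_{v_k^c})}}(4\varepsilon)$ (the factor $4$ appears because the Jacobian $dy/d\beta$ equals $2$).

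After pulling this L\'evy bound out of the $t$-integral, there remains
\begin{equation*}
\int 2\bigl[\rho'(\mathcal{H}^\star(t) - E)\bigr]_{v_k,v_k}\, dt \;=\; \int \frac{d}{dt}\operatorname{tr}\rho(\mathcal{H}^\star(t) - E)\, dt,
\end{equation*}
which equals $1$: as $t\to\pm\infty$, exactly one eigenvalue of $\mathcal{H}^\star(t)$ escapes to $\pm\infty$ while the remaining $N-1$ interlace and converge to the eigenvalues of the $\{v_k\}$-deleted matrix, so the telescope of $\rho$-values contributes $\rho(+\infty)-\rho(-\infty)=1$. Averaging over $\beta_{v_k^c}$ and summing over $k$ produces $\delta_L(E,\varepsilon)\leq \sum_{j}\mathbb{E}_{\Lambda_L}^{W,\eta^w}[\mathscr{L}_{\rho_{a_j(\beta_{j^c})}}(4\varepsilon)]$, which is the claim. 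The main delicate point is this spectral-averaging identity---essentially a rank-one Krein--Kotani-type statement---whose rigorous verification requires controlling eigenvalue asymptotics as the diagonal parameter tends to $\pm\infty$; once it is established, the rest of the argument reduces to Fubini, the Schur change of variables, and the telescoping identity.
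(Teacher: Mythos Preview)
Your proposal is correct and follows essentially the same route as the paper: the same site-by-site telescoping using $\mathcal{H}_{\beta\pm\varepsilon\mathbf{1}}=\mathcal{H}_\beta\pm 2\varepsilon I$, conditioning on $\beta_{j^c}$ and passing to the Schur variable $y_j$ with conditional law $\rho_{a_j}$, extracting the L\'evy concentration, and bounding the residual spectral integral by $1$ via the rank-one perturbation argument. The only stylistic difference is that the paper integrates by parts in $y$ (using the primitive $G_k$ of $\rho_{a_k}$) and bounds $F_k(M+4\varepsilon)-F_k(0)\le 1$, whereas you differentiate first via Hellmann--Feynman and then apply Fubini; these are dual presentations of the same computation.
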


\begin{proof}
Note  that 
\[
\mathcal{H}_{\beta,\Lambda_{L}}\pm 2\varepsilon = 2 (\beta\pm \varepsilon  )-P^{W}=
\mathcal{H}_{\beta\pm \varepsilon ,\Lambda_{L}}.
\]
Order the vertices in \(\Lambda_L\) as \(\{1,2 ,\ldots,|\Lambda_{L}|\}\). For each
\(1\le k\le |\Lambda_L|\), we define
\begin{equation}\label{def:omegatilde}
{\beta}'_k={\beta}'_k(\varepsilon):= (\beta_{1} +\varepsilon, ,\ldots,\beta_{k-1}+\varepsilon,\beta_k-\varepsilon ,\ldots,\beta_{|\Lambda_L|}-\varepsilon)
\end{equation}
and
\[{\beta}'_{|\Lambda_L|+1}={\beta}'_{|\Lambda_L|+1}(\varepsilon):= (\beta_{1} +\varepsilon, ,\ldots,\beta_{|\Lambda_L|}+\varepsilon).\]
With this convention we have
\[
\mathcal{H}_{{\beta}'_{k+1},\Lambda_{L}} (i,j) =\mathcal{H}_{{\beta}'_{k},\Lambda_{L}} (i,j) +\mathbf{1}_{i=j=k} 4\varepsilon . 
\]
Expanding in a telescopic sum we get
\[
 \delta_L(E,\varepsilon) =\sum_{k=1}^{|\Lambda_L|}
  \mathbb{E}_{\Lambda_{L}}^{W,\eta^{w}}\left[ \operatorname{tr} \left(
  \rho(\mathcal{H}_{{\beta}'_{k+1},\Lambda_{L}}-E)-\rho(\mathcal{H}_{{\beta}'_k,\Lambda_{L}}-E) \right)\right].
\]
We concentrate now on the \(k\)-th term in the sum.
For a fixed configuration \(\beta_{k^{c}},\) we define
\(y_{k}:= 2\beta_{k}-P_{k,k^{c}}^{W}\mathcal{H}^{-1}_{\beta,\Lambda_{L}\setminus\{k \}}P^{W}_{k^{c},k} .\)
Note that \({\beta}'_{k}={\beta}'_{k} (y_{k},\beta_{k^{c}})\)
is a function of \(y_{k}\) and \(\beta_{k^{c}}.\) 
We consider the function
\(y_{k}\mapsto F_{k} (y_{k}):= \operatorname{tr} \rho(\mathcal{H}_{{\beta}'_{k} (y_{k},\beta_{k^{c}}),\Lambda_{L}}-E).\)
We can write
\begin{align*}
&  \mathbb{E}_{\Lambda_{L}}^{W,\eta^{w}}\left[ \operatorname{tr} \left(
  \rho(\mathcal{H}_{{\beta}'_{k+1},\Lambda_{L}}-E)-\rho(\mathcal{H}_{{\beta}'_k,\Lambda_{L}}-E) \right)\right]\\
&\qquad  = \mathbb{E}_{\Lambda_{L}}^{W,\eta^{w}}\left[
 \int   ( F_k(y_k+4 \varepsilon)-F_k(y_k)) \rho_{a_{k}} (y_{k}) dy_{k}   \right].
\end{align*}
We take the following primitive of \(\rho_{a_{k}}:\)
\[
G_{k}(y):=\int_0^{y}\rho_{a_{k}}(t)dt.
\]
This function is differentiable and satisfies \(G(\infty)=1\) and \(G(0)=0\). Moreover
we can write
\begin{equation*}
\begin{aligned}
&\int_0^{\infty}\rho_{a_{k}}(y)\left[ F_k(y+4 \varepsilon)-F_k(y) \right] d y\\
& =\lim_{M\to \infty}\int_0^M\rho_{a_{k}}(y)\left[ F_k(y+4 \varepsilon)-F_k(y_k) \right] d y=:\lim_{M\to \infty} I_M^{k}.
\end{aligned}
\end{equation*}
Performing  integration by parts, we argue, using \(G_k(0)=0\),
\begin{equation*}
\begin{aligned}
I_M^{k} & = G_{k}(M)(F_k(M+4 \varepsilon)-F_k(M)) -\int_0^M G_{k}(y)(F_k'(y+4 \varepsilon)-F'_k(y))d y\\
&= \int_M^{M+4 \varepsilon} G_{k}(M)F'_k(y)dy - \int_0^M G_{k}(y)F'_k(y+ 4 \varepsilon) d y  +\int_0^M G_{k} (y) F'_k(y)d y.
\end{aligned}
\end{equation*}
We write the second integral as
\[
\int_0^M G_{k}(y)F'_k(y+4 \varepsilon)=\int_{4 \varepsilon}^M G_{k}(y-4 \varepsilon)F'_k(y) d y + \int_M^{M+4 \varepsilon}G_{k}(y- 4 \varepsilon)F'_k(y)dy
\]
and the third integral as
\[
\int_0^M G_{k}(y)F'_k(y)d y = \int_{4 \varepsilon}^M G_k(y)F'_k(y)d y +\int_0^{4 \varepsilon} G_{k}(y)F'_k(y)d y.
\]
Putting everything together we get
\begin{equation*}
\begin{aligned}
I_M^{k} & = \int_M^{M+4 \varepsilon}(G_{k}(M)-G_{k}(y-4 \varepsilon))F'_k(y)d y \\
&+ \int_{4 \varepsilon}^M (G_{k}(y)-G_{k}(y-4 \varepsilon))F'_k(y)d y +\int_0^{4 \varepsilon}G_{k}(y)F'_k(y)d y.
\end{aligned}
\end{equation*}
Now we argue
\[G_{k}(y)-G_{k}(y-4 \varepsilon)=\int_{y-4\varepsilon }^{y} \rho_{a_{k}} (t) dt\leq
 \mathscr{L}_{\rho_{a_{k}}}(4 \varepsilon)\qquad \forall y\in [4\varepsilon,M ].\]
The same bound holds for \(G_{k}(M)-G_{k}(y-4 \varepsilon)\) for \(y\in [M,M+4 \varepsilon]\) and \(G_{k} (y)=G_{k} (y)-G_{k} (0)\) for \(y\in [0,4 \varepsilon]\).
Therefore
\[
I_{M}^{k} \le \mathscr{L}_{\rho_{a_{k}}}(4 \varepsilon) \int_0^{M+4 \varepsilon} F'_k(y)d y
=  \mathscr{L}_{\rho_{a_{k}}}(4 \varepsilon) (F_{k} (M+4\varepsilon )-F_{k} (0)).
\]
Finally,  using a standard argument of rank-one perturbation (see e.g. \cite[Lemma 5.25]{kirsch2007invitation}) \((F_{k} (M+4\varepsilon )-F_{k} (0))\leq 1\) uniformly in \(M\).
The result follows.
\end{proof}

\begin{lemma}\label{le:levy-bound}
It holds, for all \(\varepsilon >0,\) and \(a>0\) 
\begin{equation}\label{eq:levy-bound}
\mathscr{L}_{\rho_{a}}(\varepsilon)\leq \sqrt{\frac{2\varepsilon }{\pi }}.
\end{equation}
Moreover, for \(d\geq 3\)  and \(W\geq W_{0},\) (as defined in Theorem \ref{wegner-type-theorem} )
the following improved estimate holds for all \(\varepsilon >0\) 
\begin{equation}\label{eq:levy-bound-bis}
\mathbb{E}_{\Lambda_{L}}^{W,\eta^{w}}\left[\mathscr{L}_{\rho_{a_{j} (\beta_{j^{c}} )}}(\varepsilon)\right ]\leq \frac{C_{1}}{\sqrt{W}}\varepsilon, \qquad \forall j\in \Lambda_{L},
\end{equation}
where  \(C_{1}>0\) is a constant depending only on the dimension.
\end{lemma}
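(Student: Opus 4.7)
}

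The two bounds call for substantially different approaches: the universal bound \eqref{eq:levy-bound} follows from a change of variables that makes the density uniformly bounded on the square-root scale, whereas \eqref{eq:levy-bound-bis} requires a pointwise sup-norm bound on $\rho_a$ combined with integrable control of the random parameter $a_j$.

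For \eqref{eq:levy-bound}, the plan is to set $s = \sqrt{y}$. A direct computation gives
\begin{equation*}
\rho_a(y)\, dy = \sqrt{\tfrac{2}{\pi}}\, e^{a} e^{-\frac{1}{2}(s^2 + a^2/s^2)}\, ds,
\end{equation*}
and the AM--GM inequality $s^2 + a^2/s^2 \geq 2a$ cancels the $e^a$ prefactor, so the pushed-forward density $\tilde\rho_a(s)$ is bounded by $\sqrt{2/\pi}$ uniformly in $a > 0$. Hence, for any $x \geq 0$,
\begin{equation*}
\rho_a([x, x+\varepsilon]) \;=\; \int_{\sqrt{x}}^{\sqrt{x+\varepsilon}} \tilde\rho_a(s)\, ds \;\leq\; \sqrt{\tfrac{2}{\pi}}\,\bigl(\sqrt{x+\varepsilon}-\sqrt{x}\bigr),
\end{equation*}
and since $x \mapsto \sqrt{x+\varepsilon}-\sqrt{x}$ is decreasing on $[0,\infty)$, the supremum is attained at $x = 0$ and equals $\sqrt{\varepsilon}$, giving the claimed constant.

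For \eqref{eq:levy-bound-bis}, I would first establish the pointwise bound announced in the strategy paragraph,
\begin{equation*}
\rho_a(y) \;\leq\; \frac{1}{\sqrt{2\pi}}\left(\frac{1}{a}+\frac{1}{\sqrt{a}}\right), \qquad y > 0.
\end{equation*}
Using the identity $(y + a^2/y)/2 - a = (a-y)^2/(2y)$, the density rewrites as $\rho_a(y) = (2\pi y)^{-1/2} \exp\bigl(-(a-y)^2/(2y)\bigr)$. For $y \geq a$, the exponential factor is at most $1$ and $y^{-1/2} \leq a^{-1/2}$, which yields the $1/\sqrt{a}$ piece. For $0 < y < a$, a direct optimization of $\log \rho_a$ shows that the maximum is attained at the unique point $y^* > 0$ solving $y^2 + y = a^2$, and evaluating $\rho_a$ there produces a bound by a constant multiple of $1/a$, giving the $1/a$ piece.

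From the pointwise bound, $\mathscr{L}_{\rho_{a_j}}(\varepsilon) \leq (\varepsilon/\sqrt{2\pi})(1/a_j + 1/\sqrt{a_j})$ deterministically, and Cauchy--Schwarz gives $\mathbb{E}[1/\sqrt{a_j}] \leq \mathbb{E}[1/a_j]^{1/2}$, so everything reduces to estimating $\mathbb{E}_{\Lambda_L}^{W,\eta^w}[1/a_j]$. Combining \eqref{eq:adefj} with \eqref{eq:ujasHb1} yields the identity $1/a_j = e^{-u_j}\mathcal{H}_{\beta,\Lambda_L}^{-1}(j,j)$, and Lemma \ref{le:boundH0} (whose content is precisely $\mathbb{E}_{\Lambda_L}^{W,\eta^w}[e^{-u_0}\mathcal{H}_{\beta,\Lambda_L}^{-1}(0,0)] \leq C_d/W$ for $d \geq 3$ and $W$ large enough) supplies the required $O(1/W)$ bound. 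For $W \geq W_0 \geq 1$ one has $1/W \leq 1/\sqrt{W}$ and $\sqrt{C_d/W} \leq C/\sqrt{W}$, so both contributions collapse into a single $C_1 \varepsilon/\sqrt{W}$.

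The main obstacle is the pointwise bound on $\rho_a$: near the origin the $1/\sqrt{y}$ singularity competes with the Gaussian-type factor $e^{-(a-y)^2/(2y)}$, and one must carefully track the dependence on $a$ to obtain an upper bound whose $a \to 0$ blow-up is only of order $1/a$. A secondary, much milder subtlety is that $a_j$ depends on $j$ through the distance to the boundary (since $\eta^w$ is supported there), but the identity $1/a_j = e^{-u_j}\mathcal{H}_{\beta,\Lambda_L}^{-1}(j,j)$ applies for any $j \in \Lambda_L$ and Lemma \ref{le:boundH0} is stated uniformly in this position, so no further work is needed.
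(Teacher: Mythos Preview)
Your proposal follows essentially the same route as the paper. For \eqref{eq:levy-bound}, your substitution $s=\sqrt{y}$ is just a repackaging of the paper's direct bound $\rho_a(y)\le (2\pi y)^{-1/2}$ followed by integrating over $[x,x+\varepsilon]$; both give $\sqrt{2/\pi}\,(\sqrt{x+\varepsilon}-\sqrt{x})$, maximised at $x=0$. For \eqref{eq:levy-bound-bis}, the paper also bounds the Lévy concentration by $\varepsilon\cdot\sup_y\rho_a(y)$, locates the maximiser at the same $y^*$ satisfying $y^2+y=a^2$, bounds $\rho_a(y^*)\le(2\pi y^*)^{-1/2}\le \tfrac{1}{\sqrt{2\pi}}(1/a+1/\sqrt a)$ via the algebra $\frac{1}{2y^*}=\frac{1+\sqrt{1+4a^2}}{4a^2}\le\frac{1}{2}(1/a^2+1/a)$, and then invokes Lemma \ref{le:boundH0} together with $\mathbb E[1/\sqrt{a_j}]\le\mathbb E[1/a_j]^{1/2}$ exactly as you do.

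One point of caution: your case split for the pointwise bound does not quite work as stated. You claim that on $0<y<a$ the maximum $\rho_a(y^*)$ is bounded by a constant times $1/a$, but for large $a$ one has $y^*=a-\tfrac12+O(1/a)$ and hence $\rho_a(y^*)\sim (2\pi a)^{-1/2}$, which exceeds any fixed multiple of $1/a$. The split into ``$1/a$ piece'' and ``$1/\sqrt a$ piece'' by region is therefore misleading; the correct way (which is what the paper does) is to drop the case split and bound $1/\sqrt{y^*}$ directly by $1/a+1/\sqrt a$ via the quadratic relation $y^*(y^*+1)=a^2$. This is a cosmetic fix --- the stated pointwise bound and everything downstream remain valid.
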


\begin{proof}
Note that, for all \(y>0\) we have
\[
\rho_{a} (y)={\frac{1}{\sqrt{2\pi}}} e^{-\frac{1}{2 y}(y-a)^2} \frac{1}{\sqrt{y}} \le
\frac{1}{\sqrt{2\pi y}},
\]
and therefore
\begin{align*}
\mathscr{L}_{\rho_{a}}(\varepsilon)&=\sup_{x\geq 0}\rho_{a}([x,x+\varepsilon))\leq 
\sup_{x\geq 0}\int_{x}^{x+\varepsilon } \frac{1}{\sqrt{2\pi y}}\,dy
=\int_{0}^{\varepsilon } \frac{1}{\sqrt{2\pi y}}\,dy= \sqrt{\frac{2\varepsilon }{\pi }}.
\end{align*}
This gives the first bound \eqref{eq:levy-bound}.
To obtain the improved bound \eqref{eq:levy-bound-bis} note that, by \eqref{eq:adefj}
\(a=a_{j} (\beta_{j^{c}})>0\) almost surely, hence
the function \(y\mapsto \rho_{a} (y)\) takes its maximum value
in 
\[
y_{a}:= \frac{1}{2}\left(-1+\sqrt{1+4a^{2}} \right).
\]
Therefore we have \(\mathscr{L}_{\rho_{a}}(\varepsilon)\leq  \rho (y_{a})\varepsilon.\)
Now, using  
\[
\frac{1}{2y_{a}}= \frac{1+\sqrt{1+4a^{2}}}{4a^{2}}\leq
\frac{2+2a}{4a^{2}}=\frac{1}{2}\left(\frac{1}{a^{2}}+\frac{1}{a} \right),
\]
we obtain
\[
 \rho (y_{a})= \frac{1}{\sqrt{2\pi y_{a} }}
 e^{-\frac{1}{2} \frac{(y_{a}-a)^{2}}{y_{a}}}\leq
 \frac{1}{\sqrt{2\pi y_{a} }}\leq \frac{1}{\sqrt{2\pi} }
 \left(\frac{1}{a^{2}}+\frac{1}{a} \right)^{\frac{1}{2}}
 \leq  \frac{1}{\sqrt{2\pi} }
 \left(\frac{1}{a}+\frac{1}{\sqrt{a}} \right).
\]
It follows
\begin{align*}
&\mathbb{E}_{\Lambda_{L}}^{W,\eta^{w}}\left[
\mathscr{L}_{\rho_{a_{j} (\beta_{j^{c}} )}}(\varepsilon) \right]
\leq\ \frac{\varepsilon }{\sqrt{2\pi }} \
\left(\mathbb{E}_{\Lambda_{L}}^{W,\eta^{w}}\left[\frac{1}{a_{j}} \right]
+ \mathbb{E}_{\Lambda_{L}}^{W,\eta^{w}}\left[\frac{1}{\sqrt{a}_{j}}
  \right] \right)\\
&\leq \frac{\varepsilon}{\sqrt{2\pi }}\left( \mathbb{E}_{\Lambda_{L}}^{W,\eta^{w}}\left[\frac{1}{a_{j}} \right]+\mathbb{E}_{\Lambda_{L}}^{W,\eta^{w}}\left[\frac{1}{a_{j}} \right]^{1/2}
 \right).
\end{align*}
The result now follows from Lemma \ref{le:boundH0} below setting
\(C_{1}:=\sqrt{2C_{2}/\pi } \).
\end{proof}

\begin{lemma}\label{le:boundH0}
For \(d\geq 3\) and \(W\geq W_{0},\) (as defined in Theorem \ref{wegner-type-theorem}) it holds
\begin{equation}\label{eq:boundH0}
\mathbb{E}_{\Lambda_{L}}^{W,\eta^{w}}\left[\frac{1}{a_{j} (\beta_{j^{c}})} \right]\leq \frac{C_{2}}{W},
\qquad \forall j\in \Lambda_L,
\end{equation}
where  \(C_{2}>0\) is a constant depending only on the dimension.
\end{lemma}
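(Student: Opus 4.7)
The plan hinges on the algebraic identity
\[
\frac{1}{a_j} \;=\; e^{-u_j}\,\mathcal{H}_{\beta,\Lambda_L}^{-1}(j,j),
\]
which follows at once from the definition \eqref{eq:adefj} of $a_j$ and the supersymmetric representation \eqref{eq:ujasHb1} of $e^{u_j}$: multiplying \eqref{eq:adefj} by $\mathcal{H}_{\beta,\Lambda_L}^{-1}(j,j)$ gives $a_j\,\mathcal{H}_{\beta,\Lambda_L}^{-1}(j,j)=\sum_{k\in\partial\Lambda_L}\mathcal{H}_{\beta,\Lambda_L}^{-1}(j,k)\eta^w_k=(\mathcal{H}_{\beta,\Lambda_L}^{-1}\eta^w)(j)=e^{u_j}$. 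Consequently the lemma reduces to proving
\[
\mathbb{E}_{\Lambda_L}^{W,\eta^w}\!\left[e^{-u_j}\,\mathcal{H}_{\beta,\Lambda_L}^{-1}(j,j)\right]\leq \frac{C_2}{W}\qquad\forall\,j\in\Lambda_L,
\]
which is precisely the form announced in the strategy of proof of Theorem~\ref{wegner-type-theorem}.

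The next step is to transfer this expectation to the $u$-field representation via Lemma~\ref{le2vrai}, and to apply the supersymmetric Ward identity (Lemma~\ref{le-ward} of the paper). The Ward identity exploits the $H^{2|2}$ symmetry of the integrand so as to dominate $\mathbb{E}^{W,\eta^w}_{u,\Lambda_L}[e^{-u_j}\mathcal{H}_{\beta(u)}^{-1}(j,j)]$ by the diagonal Green's function of a reference operator, essentially the simple random walk on $\mathbb{Z}^d$ with edge weight $W$, up to a controlled SUSY perturbation valid in the weak-disorder regime $W\geq W_0$. For $d\geq 3$, transience of simple random walk gives $G_{-\Delta}(0,0)<\infty$, and the rescaling by $W$ yields the bound $C_d/W$ with $C_2=C_d$, uniformly in $j\in\Lambda_L$ and in the box size $L$.

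The main obstacle is the proper application of the Ward identity in finite volume with wired boundary condition $\eta^w$. The factor $e^{-u_j}$ has to provide a delicate SUSY compensation for the near-zero singularity of $\mathcal{H}_\beta^{-1}(j,j)$ so that the product scales as $1/W$ and not merely as $O(1)$; a naive Cauchy--Schwarz would be too lossy to capture this sharp scaling. The two hypotheses $d\geq 3$ and $W\geq W_0$ play complementary roles: the first ensures transience of the reference random walk (finiteness of $G_{-\Delta}(0,0)$), while the second guarantees that the SUSY perturbation around the free Laplacian is in a convergent regime, in agreement with the weak-disorder phase where the improved Wegner bound \eqref{eq:wegner-18} is expected to hold.
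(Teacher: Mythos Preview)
Your reduction is exactly right: the identity $1/a_j = e^{-u_j}\mathcal{H}_{\beta,\Lambda_L}^{-1}(j,j)$ is the correct starting point, and this is precisely what the paper does (after disposing of the trivial boundary case $j\in\partial\Lambda_L$, where $a_j\geq\eta^w_j\geq W$ directly). The high-level picture you sketch---compare to the free Laplacian Green's function, which is finite for $d\geq 3$, and use weak-disorder control on the $u$-fluctuations---is also the right one.

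However, the substance of the argument is missing, and the mechanism you invoke is not quite what actually works. Lemma~\ref{le-ward} is not a Ward identity that dominates $\mathbb{E}[e^{-u_j}\mathcal{H}_{\beta}^{-1}(j,j)]$ by a reference Green's function; it is only the auxiliary estimate $\mathbb{E}[(\cosh u_k)^2]\leq 8$. The paper's actual argument proceeds by rewriting $\mathcal{H}_{\beta(u),\Lambda_L}^{-1}(j,j)e^{-u_j}=\tfrac{1}{W}D^{-1}(j,j)e^{u_j}$ with $D(u):=e^uH_{\beta(u),\Lambda_L}e^u$, then setting $f_k:=\delta_{kj}e^{u_j/2}$ and comparing $\langle f,D^{-1}f\rangle$ to $\langle f,D_0^{-1}f\rangle$, where $D_0=D(0)=-\Delta_{\Lambda_L}+\tilde\eta^w$. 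This comparison is done precisely via Cauchy--Schwarz in the Dirichlet form of $D_0$, which yields
\[
\langle f,D^{-1}f\rangle\leq \sum_{k\sim k'}(\nabla_{kk'}D_0^{-1}\delta_j)^2\,e^{u_j-(u_k+u_{k'})}+\sum_k\tilde\eta^w_k(D_0^{-1}\delta_j)_k^2\,e^{u_j-u_k}.
\]
So your remark that ``a naive Cauchy--Schwarz would be too lossy'' is misplaced: a \emph{structured} Cauchy--Schwarz, applied at the level of the quadratic form rather than to the raw product $e^{-u_j}\cdot\mathcal{H}^{-1}(j,j)$, is exactly the tool. The $u$-dependent weights $e^{u_j-(u_k+u_{k'})}$ and $e^{u_j-u_k}$ are then controlled in expectation by the DSZ weak-disorder bound $\mathbb{E}[(\cosh(u_j-u_k))^m]\leq 2$ (equation~\eqref{ward2}, valid for $d\geq 3$ and $W\geq W_0$) together with Lemma~\ref{le-ward}. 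After averaging, one recovers $16\langle\delta_j,D_0^{-1}\delta_j\rangle\leq C_2$, which is the finite free Green's function in $d\geq 3$. Without this explicit $D$-matrix reparametrization and Dirichlet-form Cauchy--Schwarz, your proposal remains a heuristic outline rather than a proof.
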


\begin{proof}
In the case \(j\in \partial \Lambda_{L}\) we have, by  \eqref{eq:adefj},
\(a_{j}\geq \eta^{w}_{j}\geq W\) a.s. and hence \eqref{eq:boundH0} holds with \(C_{2}=1.\)\vspace{0,2cm}

Assume now \(j\in \Lambda_{L}\setminus \partial \Lambda_{L}.\)
Using \eqref{eq:ujasHb1} and \eqref{eq:adefj}  we have
\[
a_{j}= \frac{\sum_{k\in \partial \Lambda_{L}}\mathcal{H}_{\beta,\Lambda_L}^{-1}(j,k)\ \eta^{w}_{k}}{\mathcal{H}_{\beta,\Lambda_L}^{-1}(j,j)}=
\frac{e^{u_{j}}}{\mathcal{H}_{\beta,\Lambda_L}^{-1}(j,j)},
\]
therefore
\[
\mathbb{E}_{\Lambda_{L}}^{W,\eta^{w}}\left[\frac{1}{a_{j}} \right]=
\mathbb{E}_{u,\Lambda_{L}}^{W,\eta^{w}}\left[\mathcal{H}_{\beta (u),\Lambda_{L}}^{-1}(j,j)e^{-u_{j}} \right]=
\frac{1}{W}\mathbb{E}_{u,\Lambda_{L}}^{W,\eta^{w}}\left[D^{-1}(j,j)e^{u_{j}} \right],
\]
where we used \eqref{eq:betau} and \eqref{eq:connection-u-beta}. The matrix \(D=D (u):= e^{u}H_{\beta (u),\Lambda_{L} }e^{u} \) can be  characterized via the quadratic form
\begin{equation}\label{eq:Ddef}
\left< v,D (u) v \right>= \sum_{k\sim k'\in \Lambda_{L}}e^{u_{j}+u_{k}} (\nabla_{kk'}v)^{2}+ \sum_{k\in \Lambda_{L}} \tilde{\eta}^{w}_{k}e^{u_{k}} v_{k}^{2},
\end{equation}
where we defined  \(\tilde{\eta}^{w}_{k}:= \eta^{w}_{k}/W\) and \(\nabla_{kk'}v:= v_{k}-v_{k'}.\)
To estimate the average  of \(D^{-1}(j,j)e^{u_{j}}\) we use the same strategy as
in the proof of Theorem 3 of  \cite{Disertori2010a}).
We can write \(D^{-1}(j,j)e^{u_{j}}= \left< f,D^{-1}f \right>,\)
where \(f_{k}:= \delta_{kj}e^{u_{j}/2}=e^{u_{j}/2} (\delta_{j}) (k).\)
Setting \(D_{0}:=D (0)= -\Delta +\tilde{\eta}^{w}\)  we argue
\begin{align*}
  \left< f,D^{-1}f \right>&= \left< D_{0}D_{0}^{-1}f,D^{-1}f \right>\\
&= \sum_{k\sim k'}  (\nabla_{kk'}D_{0}^{-1}f)  (\nabla_{kk'}D^{-1}f)+
\sum_{k} \tilde{\eta}^{w}_{k} (D_{0}^{-1}f)_{k} (D^{-1}f)_{k}\\
& = \sum_{k\sim k'}  \frac{(\nabla_{kk'}D_{0}^{-1}f)}{e^{(u_{k}+u_{k'})/2}}  \frac{(\nabla_{kk'}D^{-1}f)}{e^{-(u_{k}+u_{k'})/2}}+
\sum_{k}\tilde{\eta}^{w}_{k}  \frac{(D_{0}^{-1}f)_{k}}{e^{u_{k}/2}} \frac{(D^{-1}f)_{k}}{e^{-u_{k}/2}}\\
&\leq \left( \sum_{k\sim k'}  \frac{(\nabla_{kk'}D_{0}^{-1}f)^{2}}{e^{(u_{k}+u_{k'})}}+
  \sum_{k}\tilde{\eta}^{w}_{k}  \frac{(D_{0}^{-1}f)_{k}^{2}}{e^{u_{k}}}  \right)^{\frac{1}{2}}\left< f,D^{-1}f \right>^{\frac{1}{2}},
\end{align*}
where in the last step we used Cauchy-Schwarz inequality. It follows
\begin{align*}
\left<   f,D^{-1}f \right>&\leq  \sum_{k\sim k'}  \frac{(\nabla_{kk'}D_{0}^{-1}f)^{2}}{e^{(u_{k}+u_{k'})}}+
\sum_{k}\tilde{\eta}^{w}_{k}  \frac{(D_{0}^{-1}f)_{k}^{2}}{e^{u_{k}}}\\
&=
 \sum_{k\sim k'}  (\nabla_{kk'}D_{0}^{-1}\delta_{j})^{2}\ e^{u_{j}-(u_{k}+u_{k'})}+
\sum_{k}\tilde{\eta}^{w}_{k}  (D_{0}^{-1}\delta_{j})_{k}^{2}\ e^{u_{j}-u_{k}}
\end{align*}
where we used the explicit form of \(f.\) Therefore
\begin{align*}
  &\mathbb{E}_{u,\Lambda_{L}}^{W,\eta^{w}}\left[\left< f,D^{-1}f \right> \right]\leq
 \sum_{k\sim k'}  (\nabla_{kk'}D_{0}^{-1}\delta_{j})^{2}\ \mathbb{E}_{u,\Lambda_{L}}^{W,\eta^{w}}\left[e^{u_{j}-(u_{k}+u_{k'})} \right]\\
&\qquad +
\sum_{k}\tilde{\eta}^{w}_{k}  (D_{0}^{-1}\delta_{j})_{k}^{2}\   \mathbb{E}_{u,\Lambda_{L}}^{W,\eta^{w}}\left[e^{u_{j}-u_{k}} \right].
\end{align*}
Note that
\begin{align*}
& \mathbb{E}_{u,\Lambda_{L}}^{W,\eta^{w}}\left[e^{u_{j}-(u_{k}+u_{k'})} \right]\leq \ 4\ 
  \mathbb{E}_{u,\Lambda_{L}}^{W,\eta^{w}}\left[(\cosh (u_{j}-u_{k}))^{2}\right]^{\frac{1}{2}}
   \mathbb{E}_{u,\Lambda_{L}}^{W,\eta^{w}}\left[(\cosh u_{k'})^{2}\right]^{\frac{1}{2}}\\
& \mathbb{E}_{u,\Lambda_{L}}^{W,\eta^{w}}\left[e^{u_{j}-u_{k}} \right]\leq \ 2\ 
  \mathbb{E}_{u,\Lambda_{L}}^{W,\eta^{w}}\left[\cosh (u_{j}-u_{k})\right].
\end{align*}
The bounds \eqref{ward2} and \eqref{ward3} in Appendix B
ensure $\mathbb{E}_{u,\Lambda_{L}}^{W,\eta^{w}}\left[(\cosh (u_{j}-u_{k}))^{m} \right]\leq 2$
$\forall j,k\in \Lambda_{L}$
and  \( \mathbb{E}_{u,\Lambda_{L}}^{W,\eta^{w}}\left[(\cosh u_{k})^{2}\right]\leq 8\) $\forall j\in \Lambda_{L}.$
Putting all these bounds together we obtain
\begin{align}\label{eq:final-bound-D}
  &\mathbb{E}_{u,\Lambda_{L}}^{W,\eta^{w}}\left[\left< f,D^{-1}f \right> \right]\leq
16 \left( \sum_{k\sim k'}  (\nabla_{kk'}D_{0}^{-1}\delta_{j})^{2}+\sum_{k}\tilde{\eta}^{w}_{k}
( D_{0}^{-1}\delta_{j})_{k}^{2}  \right)\\
  &= 16 \left< \delta_{j}, D_{0}^{-1}\delta_{j} \right>= 16 (-\Delta_{\Lambda_{L}}+\tilde{\eta})^{-1}_{jj}\leq C_{2},
\nonumber
\end{align}
for some constant \(C_{2}\) independent of \(j\) and \(L, \) since we are in dimension \(d\geq 3.\)
This concludes the proof of the lemma.
\end{proof}

\section{An alternative approach} 
\label{sec:alterapproach}
Some of the above results can also be obtained by using the properties of the \textit{infinite volume} measure  \(\nu^W\), defined in \eqref{eq-laplace-beta-Lambda}. This alternative approach also provides the improved bound  \eqref{eq-NE-nolifhistz-upper-bound-d=3} in Theorem  \ref{corol-upper-bound-on-IDS}. In this section we higlight the main
differences with the finite volume approach
and give the proof of \eqref{eq-NE-nolifhistz-upper-bound-d=3}.
For more details see \cite{theserapenne}.

To explain the strategy we  need to introduce a few preliminary notions and results.
Recall that \(\Lambda_L=[-L,L]^d\cap \mathbb{Z}^d.\)  We define,  for every \(i\in\mathbb{Z}^d\) and  \(L\in \mathbb{N}_{\ge 1},\)
\[
\begin{array}{ll}
\psi_L(i):=1& \text{ if }i\notin\Lambda_L,\\
\psi_L(i):=\sum\limits_{k\in\partial \Lambda_L}\mathcal{H}_{\beta,\Lambda_L}^{-1}(i,k)\eta_{\Lambda_L}^{w}(k) = e^{u_{i} (\beta_{\Lambda_{L}})} 
&\text{ if }i\in\Lambda_L.
\end{array}
\]
The following result is an extract of Theorem 1 in  \cite{Sabot2019}.
\begin{prop}\label{prop:sabotzeng}~\\\vspace{-0.5 cm}
\begin{enumerate}
\item For every \((i,j)\in\mathbb{Z}^d\), \(\left(\mathcal{H}_{\beta,\Lambda_L}^{-1}(i,j)\right)_{L\in\mathbb{N}_{\ge 1}}\) is increasing \(\nu^W\)-a.s. Moreover, it converges toward some almost surely finite random variable which is denoted by \(\hat{G}(i,j)\).
\item For every \(i\in\mathbb{Z}^d\), \((\psi_L(i))_{L\in\mathbb{N}_{\ge 1}}\) is a positive martingale with respect to the filtration \(\left(\sigma(\beta_i, i\in\Lambda_L),L\in\mathbb{N}_{\ge 1}\right)\).
\item For every \(i\in\mathbb{Z}^d\), the bracket of \((\psi_L(i))_{L\in\mathbb{N}_{\ge 1}}\) equals \(\left(\mathcal{H}_{\beta,\Lambda_L}^{-1}(i,i)\right)_{L\in\mathbb{N}_{\ge 1}}\).
In particular , \(\left(\psi_L(i)^2-\mathcal{H}_{\beta,\Lambda_L}^{-1}(i,i)\right)_{L\in\mathbb{N}_{\ge 1}}\) is a martingale  for every \(i\in \mathbb{Z}^d\).
\end{enumerate}

\end{prop}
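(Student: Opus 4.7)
The plan is to verify the three claims in order, leaning on the random-walk (polymer) expansion of $\mathcal{H}_{\beta,\Lambda}^{-1}$, on the horospherical identity $\psi_L(i)=e^{u_i}$ supplied by Lemma~\ref{le2vrai}, and on the Laplace transform formula \eqref{eq-Laplace-nuGWthetaeta} to handle all conditional integrations.

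For (1), I start from the path representation
\begin{equation*}
\mathcal{H}_{\beta,\Lambda_L}^{-1}(i,j) = \sum_{\sigma \in \mathcal{P}^{\Lambda_L}_{i,j}} \frac{W_\sigma}{(2\beta)_\sigma},
\end{equation*}
already used in \eqref{eq:boundLL+1}. Enlarging $L$ only introduces additional admissible paths, which gives the monotonicity pointwise, and monotone convergence produces an a.s.\ limit $\hat G(i,j)\in[0,\infty]$. Finiteness at $i=j=0$ follows from Corollary~\ref{corboundH00}: the tail bound stated there is uniform in $L$, so $\mathbb{P}(\hat G(0,0)>\delta)\le \int_0^{1/(2\delta)}(\pi\gamma)^{-1/2}e^{-\gamma}\,d\gamma \to 0$ as $\delta\to\infty$. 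Translation invariance of $\nu^W$ extends this to any diagonal $\hat G(i,i)$, and the Cauchy--Schwarz bound $\mathcal{H}_{\beta,\Lambda_L}^{-1}(i,j)\le\sqrt{\mathcal{H}_{\beta,\Lambda_L}^{-1}(i,i)\,\mathcal{H}_{\beta,\Lambda_L}^{-1}(j,j)}$ for the positive definite Green's function covers the off-diagonal case.

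For (2), the assertion is trivial outside $\Lambda_L$ since $\psi_L\equiv 1$ there. For $i\in\Lambda_L$ I would use the identification $\psi_L(i)=(\mathcal{H}_{\beta,\Lambda_L}^{-1}\eta^w_L)(i)$ together with \eqref{eq:ujasHb1}, and split $\Lambda_{L+1}$ into the inner block $\Lambda_L$ and the outer annulus $A_L:=\Lambda_{L+1}\setminus\Lambda_L$. A Schur complement decomposition of $\mathcal{H}_{\beta,\Lambda_{L+1}}$ along this splitting expresses $\psi_{L+1}|_{\Lambda_L}$ as $\mathcal{H}_{\beta,\Lambda_L}^{-1}$ applied to an effective source supported on $\partial\Lambda_L$. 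The martingale identity $\mathbb{E}^W[\psi_{L+1}(i)\mid\mathcal{F}_L]=\psi_L(i)$ then reduces to showing that the conditional mean of this effective source (obtained by integrating $\beta_{A_L}$ against its law conditional on $\beta_{\Lambda_L}$) equals $\eta^w_L$, which is precisely the consistency property of the family $\{\nu^{W,\eta}_{\Lambda}\}$ encoded in \eqref{eq-Laplace-nuGWthetaeta} and in the marginal identity \eqref{eq:marginal1}.

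For (3), once (2) is in hand, the martingale relation gives $\mathbb{E}^W[\psi_{L+1}^2\mid\mathcal{F}_L]-\psi_L^2=\mathbb{E}^W[(\psi_{L+1}-\psi_L)^2\mid\mathcal{F}_L]$, so proving that $\psi_L^2-\mathcal{H}_{\beta,\Lambda_L}^{-1}(i,i)$ is a martingale reduces to the increment identity
\begin{equation*}
\mathbb{E}^W\!\bigl[(\psi_{L+1}(i)-\psi_L(i))^2\,\big|\,\mathcal{F}_L\bigr] = \mathbb{E}^W\!\bigl[\mathcal{H}_{\beta,\Lambda_{L+1}}^{-1}(i,i)-\mathcal{H}_{\beta,\Lambda_L}^{-1}(i,i)\,\big|\,\mathcal{F}_L\bigr].
\end{equation*}
Both sides are expanded using the same Schur-complement/Laplace-transform scheme as in (2), applied at the level of the second moment rather than the first; the compensator identification of the bracket then follows by the standard telescoping argument.

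The main obstacle is clearly the conditional averaging used in (2) and (3): under $\nu^W$ the law of $\beta_{A_L}$ given $\beta_{\Lambda_L}$ is \emph{not} a product measure, but rather built from the joint density \eqref{eq-density-beta-Lambda} together with a determinantal ratio coming from the Schur decomposition. The nontrivial verification is that the Laplace transform \eqref{eq-Laplace-nuGWthetaeta}, evaluated on the effective boundary source generated by that Schur complement, produces exactly the compensating terms demanded by the first- and second-moment martingale identities. The cleanest implementation passes through the horospherical coordinates $u$, in which $\psi_L=e^{u^L}$ and the effective bosonic measure $\mu^{W,\eta^w}_{\Lambda_L}$ of \eqref{def:umeasure} makes the consistency of the $\eta^w$-boundary transparent; the detailed execution is carried out in \cite{Sabot2019}.
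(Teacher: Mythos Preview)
The paper does not prove this proposition at all: immediately before the statement it says ``The following result is an extract of Theorem 1 in \cite{Sabot2019}'', and no proof is given. Your proposal therefore goes well beyond what the paper does, and in fact your closing sentence (``the detailed execution is carried out in \cite{Sabot2019}'') is already the entirety of the paper's argument.

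That said, your sketch is a fair outline of how the result is actually established in \cite{Sabot2019}. The monotonicity in (1) via the path expansion is exactly the mechanism used throughout the paper (cf.\ \eqref{eq:boundLL+1}), and your finiteness argument via the uniform tail bound of Corollary~\ref{corboundH00} is clean. For (2) and (3) you correctly identify the crux: the conditional law of the outer annulus given $\beta_{\Lambda_L}$ is again of the form \eqref{eq-density-beta-Lambda} with the appropriate boundary parameter, and the first two conditional moments of the effective source produced by the Schur complement must match $\eta^w_L$ and the Green's function increment respectively. This is precisely the content of the ``mixing measure'' computations in \cite{Sabot2019}; your reduction to these identities is correct, though the actual verification (which you defer) is where all the work lies.
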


By Theorem 2 in \cite{Sabot2019}, \(\hat{G}\) is the inverse of the infinite
volume operator  \(\mathcal{H}_{\beta}\) in the following sense:
\(\hat{G} (i,j):=\lim_{\varepsilon\rightarrow 0} (\mathcal{H}_{\beta}+\varepsilon )^{-1}(i,j)\) \(\nu^W\)-a.s.. {Moreover, for every \(i,j\), \(\varepsilon\mapsto(\mathcal{H}_{\beta}+\varepsilon )^{-1}(i,j)\) is increasing \(\nu^W\)-a.s..}
These facts are the key for the construction of the infinite volume environment of the related vertex reinforced jump process. A first application is the improved bound   \eqref{eq-NE-nolifhistz-upper-bound-d=3}.

\begin{prop}[Upper bound on the IDS for large \(W\)]\label{prop:upper-b-IDOSd=3}\hspace{2cm}

For \(d\geq 3\) there exists a \(W_{0}>1\) such that for all \(W\geq W_{0},\)
the function \(E\mapsto N (E)\) satisfies the bound
\[
\begin{aligned}
 N(E,H_{\beta}) \le \ C' E\qquad \forall E>0,
\end{aligned}
\]
for some constant \(C'>0\) independent of \(W.\)
\end{prop}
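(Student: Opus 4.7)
The plan is to dominate the spectral indicator $\mathbbm{1}_{[0,E]}(H_{\beta,\Lambda_L})$ by a resolvent whose diagonal can be controlled via the explicit conditional density $\rho_{a_j(\beta_{j^c})}$ of $1/\mathcal{H}_{\beta,\Lambda_L}^{-1}(j,j)$, and then invoke Lemma~\ref{le:boundH0} to extract exactly the $1/W$ factor that cancels the scaling $H_\beta = \mathcal{H}_\beta/W$.

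First I would use the pointwise bound $\mathbbm{1}_{[0,E]}(x) \le 2E/(x+E)$ valid for all $x\ge 0$. Since $\sigma(\mathcal{H}_{\beta,\Lambda_L})\subset[0,\infty)$ almost surely, the spectral theorem combined with $H_{\beta,\Lambda_L}=\mathcal{H}_{\beta,\Lambda_L}/W$ yields
\[
N_{\Lambda_L}(E, H_{\beta,\Lambda_L}) \;=\; \tfrac{1}{|\Lambda_L|}\operatorname{tr}\mathbbm{1}_{[0,WE]}(\mathcal{H}_{\beta,\Lambda_L}) \;\le\; \tfrac{2WE}{|\Lambda_L|} \sum_{j\in \Lambda_L}(\mathcal{H}_{\beta,\Lambda_L}+WE)^{-1}(j,j).
\]
By operator monotonicity of $A \mapsto A^{-1}$ on positive operators, $(\mathcal{H}_{\beta,\Lambda_L}+WE)^{-1}(j,j)\le \mathcal{H}_{\beta,\Lambda_L}^{-1}(j,j)$ almost surely, so after taking expectations it suffices to establish the uniform bound $\mathbb{E}^{W,\eta^w}_{\Lambda_L}[\mathcal{H}_{\beta,\Lambda_L}^{-1}(j,j)] \le C_2/W$, uniformly in $j\in\Lambda_L$ and $L$.

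For this last bound I would condition on $\beta_{j^c}$: by \eqref{conditional-density-y} the quantity $y_j := 1/\mathcal{H}_{\beta,\Lambda_L}^{-1}(j,j)$ has conditional density $\rho_{a_j(\beta_{j^c})}$, and the key observation is the elementary identity
\[
\int_0^\infty \frac{1}{y}\,\rho_a(y)\,dy \;=\; \frac{1}{a}\qquad \forall\, a>0,
\]
which follows by writing $\rho_a(y) = \tfrac{e^a}{\sqrt{2\pi}}\,y^{-1/2}e^{-y/2 - a^2/(2y)}$, substituting $y = a^2/t$, and recognizing the resulting integral in $t$ as the total mass of $\rho_a$ up to a factor $1/a$. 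The tower property then gives $\mathbb{E}[\mathcal{H}_{\beta,\Lambda_L}^{-1}(j,j)] = \mathbb{E}[1/a_j(\beta_{j^c})]$, and Lemma~\ref{le:boundH0} supplies $\mathbb{E}[1/a_j(\beta_{j^c})] \le C_2/W$ for $d\ge 3$ and $W\ge W_0$. Combining these estimates gives $\mathbb{E}^{W,\eta^w}_{\Lambda_L}[N_{\Lambda_L}(E, H_{\beta,\Lambda_L})] \le 2C_2 E$, and \eqref{eq:Nviafinitevolume} lets us pass to $L \to \infty$ to conclude $N(E, H_\beta) \le C' E$ with $C' = 2C_2$ independent of $W$.

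No genuine obstacle is expected: the $W$-dependence cancels precisely between the prefactor $2WE$ produced by the rescaling and the $C_2/W$ coming from Lemma~\ref{le:boundH0}. The only non-boilerplate step is the one-line identity $\int y^{-1}\rho_a(y)\,dy = 1/a$; everything else is the classical indicator-by-resolvent reduction plus already-established results.
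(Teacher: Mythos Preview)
Your argument is correct, and it is genuinely different from the paper's route. The paper works through the \emph{infinite volume} spectral measure: it writes $\int_0^\infty u^{-1}\,d\widetilde N(u)=\mathbb{E}^W[\hat G(0,0)]$ via a limit $\varepsilon\to0$ and monotone convergence, then controls $\mathbb{E}^W[\hat G(0,0)]$ by Fatou and a fresh $H^{2|2}$ computation (bounding $\tfrac{1}{W}\mathbb{E}_{u}[D^{-1}(0,0)e^{2u_0}]$ by repeating the Cauchy--Schwarz argument of Lemma~\ref{le:boundH0} with $f_k=\delta_{k0}e^{u_0}$), and finally gets $N(E,H_\beta)=\widetilde N(WE)\le WE\int u^{-1}\,d\widetilde N(u)\le C_2E$. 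You instead stay entirely in finite volume, use the resolvent domination $\mathbbm{1}_{[0,WE]}(x)\le 2WE/(x+WE)$, and then recognize via the conditional density that $\mathbb{E}[\mathcal{H}_{\beta,\Lambda_L}^{-1}(j,j)]=\mathbb{E}[1/a_j]$ exactly, which feeds directly into Lemma~\ref{le:boundH0} without redoing the sigma-model estimates.

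Your route is shorter and more self-contained: it avoids the infinite-volume objects $\hat G$, the monotone convergence step, and the duplicated $H^{2|2}$ calculation. The identity $\int_0^\infty y^{-1}\rho_a(y)\,dy=1/a$ (equivalently, in the $u$-picture, $\mathbb{E}_u[D^{-1}(j,j)e^{2u_j}]=\mathbb{E}_u[D^{-1}(j,j)e^{u_j}]$) is the one genuinely new observation, and it is what lets you recycle Lemma~\ref{le:boundH0} verbatim. The paper's approach, on the other hand, makes the connection to the martingale $\psi_L$ and the infinite-volume Green's function explicit, which is conceptually informative even if not needed for the bare inequality.
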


\begin{proof} Note that \(N (E,H_{\beta })=N (WE,\mathcal{H}_{\beta })=:\widetilde{N} (WE). \)
In the rest of the proof we will work with \(\widetilde{N}.\) 
By Section 3.3 in \cite{AW}, for every bounded continuous function \(f\),
\[\int_{0}^{+\infty}f(u)d\widetilde{N}(u)=\mathbb{E}^W\left[f\left(\mathcal{H}_{\beta}\right)(0,0)\right]\]
where \(f\left(\mathcal{H}_{\beta}\right)\) is an operator which is well-defined because \(\mathcal{H}_{\beta}\) is self adjoint.
In particular, for every \(\varepsilon>0\), it holds that
\begin{align}
  \int_0^{+\infty}\frac{1}{u+\varepsilon}d\widetilde{N}(u)&=\mathbb{E}^W\left[(\mathcal{H}_{\beta} +\varepsilon)^{-1}(0,0)\right].\label{int2bis}
\end{align}
Furthermore, as remarked above, 
\((\mathcal{H}_{\beta} +\varepsilon)^{-1}(0,0)\underset{\varepsilon\rightarrow 0}
\longrightarrow\hat{G}(0,0)\), \(\nu^W\)-a.s. and this convergence is increasing.
Therefore, by monotone convergence theorem, for every \(\varepsilon>0\),
\[\mathbb{E}^W\left[(\mathcal{H}_{\beta} +\varepsilon)^{-1}(0,0)\right]\underset{\varepsilon\rightarrow 0}
\longrightarrow \mathbb{E}^W\left[\hat{G}(0,0)\right] \]
and
\[
\int_0^{+\infty}\frac{1}{u+\varepsilon}d\widetilde{N}(u)\underset{\varepsilon\rightarrow 0}\longrightarrow\int_0^{+\infty}\frac{1}{u}d\widetilde{N}(u).
\]
Thus, if we make \(\varepsilon\) go to \(0\) in \eqref{int2bis}, this implies that, \(\nu^W\)-a.s,
\[
\int_0^{+\infty}\frac{1}{u}d\widetilde{N}(u)=\mathbb{E}^W\left[\hat{G}(0,0)\right].
\]
Using Fatou's lemma, it yields
\begin{align*}
&\int_0^{+\infty}\frac{1}{u}d\widetilde{N}(u)=\mathbb{E}^W\left[\hat{G}(0,0)\right]
\leq \underset{L\rightarrow+\infty}\liminf\hspace{0.1 cm}\mathbb{E}^W\left[ \mathcal{H}_{\beta,\Lambda_L}^{-1}(0,0)\right]\\
&\quad =\underset{L\rightarrow+\infty}\liminf\hspace{0.1 cm}\frac{1}{W}\mathbb{E}_{u,\Lambda_{L}}^{W,\eta^{w}}\left[D^{-1}(0,0)e^{2u_{0}} \right]
  =\underset{L\rightarrow+\infty}\liminf\hspace{0.1 cm}\frac{1}{W} \mathbb{E}_{u,\Lambda_{L}}^{W,\eta^{w}}\left[\left< f,D^{-1}f \right> \right],
\end{align*}
where the matrix \(D\) was defined in \eqref{eq:Ddef} and \(f_{k}:= \delta_{k0}e^{u_{0}}\)
(instead of \(f_{k}:= \delta_{kj}e^{u_{j}/2}\)  in the proof of Lemma \ref{le:boundH0}).
Repeating the same arguments as in the proof of Lemma \ref{le:boundH0} we obtain
\begin{align*}
&\mathbb{E}_{u,\Lambda_{L}}^{W,\eta^{w}}\left[\left<f,D^{-1}f\right> \right]\leq
 \sum_{k\sim k'}  (\nabla_{kk'}D_{0}^{-1}\delta_{0})^{2}\ \mathbb{E}_{u,\Lambda_{L}}^{W,\eta^{w}}
 \left[e^{2u_{0}-(u_{k}+u_{k'})} \right]\\
&\qquad +
\sum_{k}\tilde{\eta}^{w}_{k}  (D_{0}^{-1}\delta_{0})_{k}^{2}\
\mathbb{E}_{u,\Lambda_{L}}^{W,\eta^{w}}\left[e^{2u_{0}-u_{k}} \right],
\end{align*}
where remember that \(D_{0}:=D (0)=-\Delta_{\Lambda_{L}}+\tilde{\eta}^w\) and \(\tilde{\eta}^{w}_{k}:= \eta^{w}_{k}/W\). Note that
\begin{align*}
& \mathbb{E}_{u,\Lambda_{L}}^{W,\eta^{w}}\left[e^{2u_{0}-(u_{k}+u_{k'})} \right]\leq \ 4\ 
  \mathbb{E}_{u,\Lambda_{L}}^{W,\eta^{w}}\left[(\cosh (u_{0}-u_{k}))^{2}\right]^{\frac{1}{2}}
   \mathbb{E}_{u,\Lambda_{L}}^{W,\eta^{w}}\left[(\cosh (u_{0}-u_{k'}))^{2}\right]^{\frac{1}{2}}\\
& \mathbb{E}_{u,\Lambda_{L}}^{W,\eta^{w}}\left[e^{2u_{0}-u_{k}} \right]\leq \ 4\ 
\mathbb{E}_{u,\Lambda_{L}}^{W,\eta^{w}}\left[(\cosh (u_{0}-u_{k}))^{2}\right]^{\frac{1}{2}}
\mathbb{E}_{u,\Lambda_{L}}^{W,\eta^{w}}\left[(\cosh u_{0})^{2}\right]^{\frac{1}{2}}.
\end{align*}
Using \eqref{ward2} and Lemma \ref{le-ward} we obtain (cf. \eqref{eq:final-bound-D}) 
\[
\mathbb{E}_{u,\Lambda_{L}}^{W,\eta^{w}}\left[\left<f,D^{-1}f\right> \right]\leq
16 \left< \delta_{0}, D_{0}^{-1}\delta_{0} \right>= 16 (-\Delta_{\Lambda_{L}}+\tilde{\eta}^w)^{-1}(0,0)\leq C_{2},
\]
where  \(C_{2}>0\) is the same constant we obtained in  \eqref{eq:final-bound-D} and we used that  we are in dimension \(d\geq 3.\) 
  Hence \(\int_0^{+\infty}\frac{1}{u}d\widetilde{N}(u)\leq C_{2}/W.\) It follows
\begin{align*}
N (E,H_{\beta })=\widetilde{N} (WE)&=\int_0^{WE}\frac{u}{u}d\widetilde{N}(u) \leq WE\int_0^{+\infty}\frac{1}{u}d\widetilde{N}(u)\ \leq\ C_{2} E.
\end{align*}
This concludes the proof setting \(C':=C_{2}\).
\end{proof}

\paragraph{Remark.} Note that  \(\left(\mathcal{H}_{\beta,\Lambda_L}^{-1}(0,0)\right)_{L\in\mathbb{N}_{\ge 1}}\) is the quadratic variation of
the martingale \((\psi_L(0))_{L\in\mathbb{N}_{\ge 1}}\) (cf  Proposition \ref{prop:sabotzeng} ). This observation gives the slightly weaker estimate
\begin{align*}
\int_0^{+\infty}\frac{1}{u}d\widetilde{N}(u)&=\mathbb{E}^W\left[\hat{G}(0,0)\right]
\leq \underset{L\rightarrow+\infty}\liminf\hspace{0.1 cm}\mathbb{E}^W\left[ \mathcal{H}_{\beta,\Lambda_L}^{-1}(0,0)\right]\\
&=\underset{L\rightarrow+\infty}\liminf\hspace{0.1 cm}\mathbb{E}^W\left[\psi_L(0)^2\right]\leq 16,
\end{align*}
where in the last inequality we used Lemma \ref{le-ward} together with
\[
\mathbb{E}^W\left[\psi_L(0)^2\right]=  \mathbb{E}_{u,\Lambda_{L}}^{W,\eta^{w}}\left[e^{2u_{0}}\right]\leq 4
\mathbb{E}_{u,\Lambda_{L}}^{W,\eta^{w}}\left[(\cosh u_{0})^{2}\right]\leq 16.
\]


The infinite volume measure approach also gives an alternative proof of the lower bound for \(N (E).\) For this we need some more definitions. 
 Setting for \(i\in\mathbb{Z}^d\) we define
\begin{equation}\label{beta-tilde}
\tilde{\beta}_i:=\beta_i-\delta_{i,0}\frac{1}{2\hat{G}(0,0)}
\end{equation}
we have the following result.

\begin{prop}[Proposition 2.4 in \cite{Gerard}]\label{fact:ger}
Recall the definition of \(W_{cr}\) in \eqref{def:Wcrit}. Then, 
for all \( W<W_{cr},\) 
\(\frac{1}{2\hat{G}(0,0)}\) has density \(\mathbbm{1}_{\gamma > 0} \frac{1}{\sqrt{\pi \gamma}}e^{-\gamma}\).
Moreover, \(\tilde{\beta}\) and \(\frac{1}{2\hat{G}(0,0)}\) are independent random variables.
\end{prop}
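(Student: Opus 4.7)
The plan is to pass to the limit $L\to\infty$ in the finite-volume Schur decomposition at vertex $0$, exploiting the consistency \eqref{eq:marginal1} that identifies the marginal of $\nu^W$ on $\Lambda_L$ with $\nu^{W,\eta^w}_{\Lambda_L}$. Under this finite-volume law, exactly as in the proof of Theorem \ref{thm-main-theorem}, conditioning on $\beta_{\Lambda_L\setminus\{0\}}$ yields that $y^{(L)}:=1/\mathcal{H}_{\beta,\Lambda_L}^{-1}(0,0)$ has conditional density $\rho_{a_0^{(L)}}$ from \eqref{conditional-density-y}, with $a_0^{(L)}$ defined in \eqref{a-def}. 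By Proposition \ref{prop:sabotzeng}(1), $\mathcal{H}_{\beta,\Lambda_L}^{-1}(0,0)\nearrow \hat G(0,0)$ $\nu^W$-a.s., so $y^{(L)}\searrow 2\hat\gamma$, writing $\hat\gamma:=1/(2\hat G(0,0))$.

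The crux is to show that $a_0^{(L)}\to 0$ in probability, which is precisely where the hypothesis $W<W_{cr}$ enters and is the main obstacle. Using the first expression in \eqref{a-def} and subadditivity of the square root,
\begin{equation*}
\mathbb{E}^W\!\left[\sqrt{a_0^{(L)}}\right]
 \le \sum_{j\in\partial\Lambda_L}\sqrt{\eta^w_j}\;
 \mathbb{E}^{W,\eta^w}_{\Lambda_L}\!\left[\sqrt{\tfrac{\mathcal{H}_{\beta,\Lambda_L}^{-1}(0,j)}{\mathcal{H}_{\beta,\Lambda_L}^{-1}(0,0)}}\right]
 \le C_0\sqrt{2dW}\,|\partial\Lambda_L|\,e^{-\kappa L}\xrightarrow[L\to\infty]{}0,
\end{equation*}
applying Theorem \ref{cor:decayd=1} when $W<W_c$. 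For $W<W_c'$ the analogous conclusion follows from Corollary \ref{cor:fractional} applied to the second expression for $a_0^{(L)}$ in \eqref{a-def}. Along a subsequence $a_0^{(L)}\to 0$ $\nu^W$-a.s.

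Now I pass to the limit in the conditional law. Fix any finite $F\subset\mathbb{Z}^d\setminus\{0\}$ and bounded continuous $\phi,g$; for $L$ large enough so $F\subset\Lambda_L\setminus\{0\}$, the tower property gives
\begin{equation*}
\mathbb{E}^W\!\left[\phi(y^{(L)})\,g(\beta_F)\right]
 =\mathbb{E}^W\!\left[g(\beta_F)\int_0^{\infty}\phi(y)\,\rho_{a_0^{(L)}}(y)\,dy\right].
\end{equation*}
Because $\rho_a(y)=\frac{e^a}{\sqrt{2\pi y}}e^{-(y+a^2/y)/2}$ is jointly continuous in $a\ge 0$ and, on $\{0\le a\le 1\}$, dominated by the integrable $\frac{e}{\sqrt{2\pi y}}e^{-y/2}$, dominated convergence gives the limit $\int\phi(y)\rho_0(y)\,dy$ on the right, while $\phi(y^{(L)})\to\phi(2\hat\gamma)$ a.s.\ on the left. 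Consequently
\begin{equation*}
\mathbb{E}^W\!\left[\phi(2\hat\gamma)\,g(\beta_F)\right]
 =\mathbb{E}^W\!\left[g(\beta_F)\right]\cdot\int_0^{\infty}\phi(y)\,\rho_0(y)\,dy.
\end{equation*}
Since $\rho_0$ is the density of $2\gamma$ with $\gamma\sim\mathbbm{1}_{\gamma>0}\frac{1}{\sqrt{\pi\gamma}}e^{-\gamma}$, this identifies the law of $\hat\gamma$ as claimed and shows $\hat\gamma\perp\!\!\!\perp\beta_F$. A monotone class argument extends the independence to $\sigma(\beta_{0^c})$.

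For the second assertion, rearranging the finite-volume identity $2\beta_0=y^{(L)}+P^{W}_{0,0^c}\mathcal{H}_{\beta,\Lambda_L\setminus\{0\}}^{-1}P^{W}_{0^c,0}$ and letting $L\to\infty$,
\begin{equation*}
2\tilde\beta_0=2\beta_0-2\hat\gamma
 =\lim_{L\to\infty}P^{W}_{0,0^c}\mathcal{H}_{\beta,\Lambda_L\setminus\{0\}}^{-1}P^{W}_{0^c,0},
\end{equation*}
which exhibits $\tilde\beta_0$ as an a.s.\ limit of $\sigma(\beta_{0^c})$-measurable functions; together with $\tilde\beta_i=\beta_i$ for $i\neq 0$, the entire field $\tilde\beta$ is $\sigma(\beta_{0^c})$-measurable, hence independent of $\hat\gamma$ by the previous step.
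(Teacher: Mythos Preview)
The paper does not prove this proposition; it is quoted from \cite{Gerard} (Proposition 2.4 there), with the remark that the hypothesis in \cite{Gerard} is recurrence of the associated vertex reinforced jump process, which holds in particular for $W<W_{cr}$. Your argument, by contrast, is a self-contained proof built from the tools already developed in Sections~2--4: the finite-volume conditional density $\rho_{a_0}$ of \eqref{conditional-density-y}, the decay estimates of Theorem~\ref{cor:decayd=1} and Corollary~\ref{cor:fractional} forcing $a_0^{(L)}\to 0$, and the monotone convergence $\mathcal{H}_{\beta,\Lambda_L}^{-1}(0,0)\nearrow\hat G(0,0)$ from Proposition~\ref{prop:sabotzeng}. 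The argument is correct. Two points deserve explicit mention: first, the conditional law of $y^{(L)}$ given $\sigma(\beta_{\Lambda_L\setminus\{0\}})$ under the infinite-volume measure $\nu^W$ indeed coincides with the finite-volume one, since by \eqref{eq:marginal1} the joint law of $\beta_{\Lambda_L}$ is the same; second, working along a subsequence on which $a_0^{(L)}\to 0$ a.s.\ suffices because the left-hand side converges along the full sequence by monotonicity of $y^{(L)}$. Your route has the advantage of making explicit exactly where the hypothesis $W<W_{cr}$ is used---only to drive $a_0^{(L)}\to 0$---whereas the cited version trades this for the more conceptual (and, on $\mathbb{Z}^d$, a priori weaker) recurrence assumption.
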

Note that this proposition works for any \(W\) such that the corresponding reinforced jump process is recurrent. This is true in particular for \(W<W_{cr}.\)
The variable  \(\tilde{\beta}_{i}\)  arises naturally as the jump rate of the vertex reinforced jump process at vertex \(i\)
(See \cite[Theorem 1.(iii)]{Sabot2019}).
In the following we will consider  \(\tilde{\mathcal{H}}_{\beta}:=2\tilde{\beta}-P^{W}\) and its  Dirichlet restriction on the finite box \(\Lambda_L\)
\(\tilde{\mathcal{H}}_{\beta,\Lambda_L}^D\) (cf  \eqref{def:HD}).

Finally, recall that the graph \(\Lambda_{L}\cup \delta \) has vertex set \(\Lambda_{L} \cup \{\delta  \}\)
and edge set \( E (\Lambda_{L} )\cup \{\{i,\delta  \}|\ i\in \Lambda_{L}  \},\) and we
defined \(W_{i,\delta }=\eta_{i}^{w}=\sum_{j\sim i, j\notin\Lambda_L} W\) \(\forall i\in \Lambda_{L} .\)
Now consider an electrical network on \(\Lambda_{L}\cup \delta\) with conductances 
\begin{align*}
c (i,j)&:=W\frac{\hat{G}(0,i)\hat{G}(0,j)}{\hat{G}(0,0)^2}\qquad\qquad  \forall i,j\in \Lambda_L,\\
c(i,\delta_L)&:=\sum\limits_{\substack{j\sim i\\j\notin\Lambda_L}} W\left(\frac{\hat{G}(0,i)\hat{G}(0,j)}{\hat{G}(0,0)^2}+\frac{\hat{G}(0,i)^2}{\hat{G}(0,0)^2}\right)
\qquad \forall i\in \Lambda_L,
\end{align*}
and let \(\mathcal{R}_L(0\longleftrightarrow\delta)\) be the effective resistance of the random walk associated with this network.
The following proposition is proved in \cite{theserapenne}.

\begin{prop}\label{prop:resistance}
Let \(W<W_{cr}.\) Then, for every \(L\in\mathbb{N}_{\ge 1}\),
\[
(\tilde{\mathcal{H}}_{\beta,\Lambda_L}^D)^{-1}(0,0)=\mathcal{R}_L(0\longleftrightarrow\delta).
\]
\end{prop}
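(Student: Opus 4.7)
The plan is to carry out a ground-state transformation of $\tilde{\mathcal{H}}_{\beta,\Lambda_L}^D$ that turns it into the Dirichlet Laplacian of the stated electrical network on $\Lambda_L\cup\{\delta\}$, and then invoke the classical identity expressing an effective resistance as an entry of that Dirichlet Green's function.

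The key observation is that $\phi(i):=\hat G(0,i)/\hat G(0,0)$ is $\tilde{\mathcal{H}}_\beta$-harmonic. Indeed, letting $L\to\infty$ in the interior identity $\mathcal{H}_{\beta,\Lambda_L}\mathcal{H}_{\beta,\Lambda_L}^{-1}(\cdot,0)(i)=\delta_{i,0}$ (a sum over the finitely many neighbours of $i$) and using the monotone convergence $\mathcal{H}_{\beta,\Lambda_L}^{-1}(j,0)\uparrow\hat G(j,0)$ from Proposition~\ref{prop:sabotzeng}, one obtains $(\mathcal{H}_\beta\phi)(i)=\delta_{i,0}/\hat G(0,0)$. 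Rewriting $\delta_{i,0}/\hat G(0,0)=\delta_{i,0}\phi(i)/\hat G(0,0)$ and absorbing it into the diagonal yields exactly $\tilde{\mathcal{H}}_\beta\phi=0$, i.e.\ $2\tilde\beta_i\phi(i)=\sum_{k\sim i}W\phi(k)$ for every $i\in\mathbb{Z}^d$. Since $\phi>0$ a.s.\ (the $\mathcal{H}_{\beta,\Lambda_L}$ are $M$-matrices with positive Green's entries) and finite a.s.\ (Proposition~\ref{prop:sabotzeng}), conjugation by $M_\phi=\operatorname{diag}(\phi(i))_{i\in\Lambda_L}$ is well defined.

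Then I would compare $L_\Lambda:=M_\phi\,\tilde{\mathcal{H}}_{\beta,\Lambda_L}^D\,M_\phi$ with the Dirichlet restriction at $\delta$ of the network Laplacian $\mathcal{L}$. For $i\neq j$ in $\Lambda_L$ with $i\sim j$, $L_\Lambda(i,j)=-W\phi(i)\phi(j)=-c(i,j)$. For the diagonal, using the harmonicity to rewrite $2\tilde\beta_i\phi(i)^2=\phi(i)\sum_{k\sim i}W\phi(k)$ and splitting according to whether the neighbour is in or out of $\Lambda_L$, one obtains
\[
L_\Lambda(i,i)=\sum_{\substack{k\in\Lambda_L\\ k\sim i}} W\phi(i)\phi(k)\;+\;\sum_{\substack{k\notin\Lambda_L\\ k\sim i}} W\bigl(\phi(i)\phi(k)+\phi(i)^2\bigr),
\]
where the last sum is precisely $c(i,\delta_L)$; the $\phi(i)^2$ contribution is furnished exactly by the Dirichlet correction $W(2d-n_i)$ hidden in $\tilde{\mathcal{H}}_{\beta,\Lambda_L}^D(i,i)$. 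Hence $L_\Lambda=\mathcal{L}_{|\Lambda_L}$ is the network Laplacian on $\Lambda_L\cup\{\delta\}$ grounded at $\delta$.

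Finally, the textbook electrical-network identity $\mathcal{R}_L(0\leftrightarrow\delta)=\mathcal{L}_{|\Lambda_L}^{-1}(0,0)$ (obtained by solving for the voltage $v$ with $v(\delta)=0$ and unit current injected at $0$), combined with $\phi(0)=1$ so that $(\tilde{\mathcal{H}}_{\beta,\Lambda_L}^D)^{-1}(0,0)=\phi(0)^{-2}L_\Lambda^{-1}(0,0)=L_\Lambda^{-1}(0,0)$, delivers the proposition. The step I expect to require the most care is the rigorous justification of $\tilde{\mathcal{H}}_\beta\phi=0$ and the precise matching of the boundary diagonal $W(2d-n_i)\phi(i)^2$ with the $\phi(i)^2$ term in $c(i,\delta_L)$; once these are in hand, the remainder is algebra. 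The hypothesis $W<W_{cr}$ enters only through Proposition~\ref{fact:ger}, which ensures $\hat G(0,0)<\infty$ a.s., so that $\phi$ (and hence the entire transform) is well defined.
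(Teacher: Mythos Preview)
Your argument is correct. The ground-state (Doob) transform by $\phi(i)=\hat G(0,i)/\hat G(0,0)$ is exactly the right mechanism: the harmonicity $\tilde{\mathcal H}_\beta\phi=0$ follows from $\mathcal H_\beta\hat G(\cdot,0)=\delta_0$ (valid pointwise by passing to the limit in the finite-volume identity, using the monotone convergence of Proposition~\ref{prop:sabotzeng}), and the conjugated matrix $M_\phi\tilde{\mathcal H}_{\beta,\Lambda_L}^D M_\phi$ coincides with the Dirichlet Laplacian of the stated network grounded at $\delta$, with the Dirichlet boundary term $W(2d-n_i)\phi(i)^2$ matching the $\phi(i)^2$ part of $c(i,\delta_L)$ exactly as you checked. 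The classical identity $\mathcal R_L(0\leftrightarrow\delta)=(\mathcal L_{|\Lambda_L})^{-1}(0,0)$ and $\phi(0)=1$ then close the argument. The paper does not give its own proof of this proposition; it refers to \cite{theserapenne}, so there is no approach to compare against.

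One minor correction: your final sentence misidentifies where the hypothesis $W<W_{cr}$ enters. Finiteness of $\hat G(0,0)$ (indeed of all $\hat G(0,i)$) holds $\nu^W$-a.s.\ for \emph{every} $W>0$ by Proposition~\ref{prop:sabotzeng}(1), not only via Proposition~\ref{fact:ger}. Your argument therefore actually proves the identity for all $W>0$; the assumption $W<W_{cr}$ in the statement is inherited from the surrounding discussion (recurrence of the VRJP, which is the setting of Section~\ref{sec:alterapproach}), not from anything intrinsic to this identity.
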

Thanks to this result, we can use some tricks from the theory of electrical networks (e.g. \cite[Chap. 2]{LP}) to construct an alternative proof
of  Theorem \ref{thm-main-theorem}. We sketch below the argument.


\begin{proof}[Proof of Theorem \ref{thm-main-theorem} (II)]
As in the proof given in section \ref{sec:org8c54fb3}, we start from
\[
N(E,\mathcal{H}_{\beta})\geq \frac{1}{|\Lambda_L|}\nu^W\left((\mathcal{H}_{\beta,\Lambda_L}^D)^{-1}(0,0)\geq \frac{1}{E} \right).
\]
Note that
\[
\begin{array}{l}
 \nu^W\left(\frac{1}{2\hat{G}(0,0)}\leq \frac{E}{4} \right)=\nu^W\left(\hat{G}(0,0)\geq\frac{2}{E}\right)\\
 \leq \nu^W\left(\hat{G}(0,0)-(\mathcal{H}_{\beta,\Lambda_L}^D)^{-1}(0,0)\geq \frac{1}{E}\right)+\nu^W\left( (\mathcal{H}_{\beta,\Lambda_L}^D)^{-1}(0,0)\geq \frac{1}{E}\right).
\end{array}
\]
Consequently,
\begin{align*}
N(E,\mathcal{H}_{\beta})&\geq \frac{1}{|\Lambda_L|}\left[\nu^W\left(\tfrac{1}{2\hat{G}(0,0)}\leq \tfrac{E}{4} \right)-
\nu^W\left(\hat{G}(0,0)-(\mathcal{H}_{\beta,\Lambda_L}^D)^{-1}(0,0)\geq \tfrac{1}{E}\right)\right]\\
&\geq \frac{1}{|\Lambda_L|}\left[C \sqrt{E}-
\nu^W\left(\hat{G}(0,0)-(\mathcal{H}_{\beta,\Lambda_L}^D)^{-1}(0,0)\geq \tfrac{1}{E}\right)\right],
\end{align*}
where \(C>0\) is some constant and we used  that \(\frac{1}{2\hat{G}(0,0)}\) is a \(\Gamma(1/2,1)\) random variable (cf. Prop. \ref{fact:ger}).
We claim that, for \(E\) small and \(L\) large enough,
\[
\nu^W\left(\hat{G}(0,0)-(\mathcal{H}_{\beta,\Lambda_L}^D)^{-1}(0,0)\geq \frac{1}{E}\right)\ll  \sqrt{E},
\]
which implies the result. To prove this asymptotic domination, note that 
 \(\tilde{\mathcal{H}}_{\beta,\Lambda_L}^D-\mathcal{H}_{\beta,\Lambda_L}^D=-\delta_{0} \frac{1}{\hat{G}(0,0)}\) 
 (cf. equation \eqref{beta-tilde}). Therefore 
\begin{equation*}
(\mathcal{H}_{\beta,\Lambda_L}^D)^{-1}(0,0)=
\frac{(\tilde{\mathcal{H}}_{\beta,\Lambda_L}^D)^{-1}(0,0)}{1+(\tilde{\mathcal{H}}_{\beta,\Lambda_L}^D)^{-1}(0,0)/\hat{G}(0,0)} 
\end{equation*}
and thus, using  Proposition \ref{prop:resistance},
\begin{align*}
\hat{G}(0,0)-(\mathcal{H}_{\beta,\Lambda_L}^D)^{-1}(0,0)&\leq \frac{\hat{G}(0,0)^2}{(\tilde{\mathcal{H}}_{\beta,\Lambda_L}^D)^{-1}(0,0)}=
 \frac{\hat{G}(0,0)^2}{\mathcal{R}_L(0\longleftrightarrow\delta)}.
\end{align*}
Therefore, we have
\begin{align}
\nu^W\left(\hat{G}(0,0)-(\mathcal{H}_{\beta,\Lambda_L}^D)^{-1}(0,0)\geq \frac{1}{E}\right)&\leq \nu^W\left( \frac{\hat{G}(0,0)^2}{\mathcal{R}_L(0\longleftrightarrow\delta)}\geq \frac{1}{E}\right)\nonumber\\
&\hspace{-3 cm}=\int_0^{+\infty}\frac{e^{-\gamma}}{\sqrt{\pi\gamma}}\nu^W\left(\frac{1}{\mathcal{R}_L(0\longleftrightarrow\delta)}\geq\frac{4\gamma^2}{E} \right)d\gamma\label{eq:integral-bound}
\end{align}
where in the last equality of \eqref{eq:integral-bound}, we used the Proposition \ref{fact:ger} and the measurability of \((\mathcal{H}_{\beta,\Lambda_L}^D)^{-1}(0,0)=\mathcal{R}_L(0\longleftrightarrow\delta)\) with respect to \(\tilde{\beta}\).
Then, one can use classical results in electrical networks (The Nash-Williams inequality) to control the inverse of \(\mathcal{R}_L(0\longleftrightarrow\delta)\), using the local conductances on the boundary of \(\Lambda_L\). Finally, we can control these local conductances thanks to Corollary \ref{cor:fractional} if we choose a "good" \(L\) as a function of \(E\).
\end{proof}

\appendix

\section{Monotonicity}

The following monotonicity result can be found  in \cite[Theorem 6]{Poudevigne2019}

\begin{theorem}\label{th:monotonicity}
Let \(V\) be a finite set, \(W^{+},W^{-}\in \mathbb{R}_{\geq 0}^{V\times V}\)
two families of non-negative weights satisfying
\(W^{\pm}_{jj}=0\) \(\forall j\in V\) and 
\[
W^{-}_{ji}=W^{-}_{ij}\leq W^{+}_{ij}=W^{+}_{ji}\qquad \forall i\neq j.
\]
Let \(E^{+}\) (resp. \(E^{-}\)) be the set of pairs with positive weight \(W^{+}_{ij}>0\) (resp  \(W^{-}_{ij}>0\)) and denote by \(\mathcal{G}^{\pm}= (V,E^{\pm})\) the corresponding graphs.

If \(i,j\in V\) are connected by \(\mathcal{G}^{-},\) then it holds
\begin{equation}\label{eq:monotonicity}
\mathbb{E}_{\mathcal{G}^{-}}^{W^{-},0}\left[ f\left(
\frac{\mathcal{H}_{\beta,V,W^{-}}^{-1}(j,k)}{\mathcal{H}_{\beta,V,W^{-}}^{-1}(j,j)} \right)\right]
\le \mathbb{E}_{\mathcal{G}^{+}}^{W^{+},0}\left[f\left(\frac{\mathcal{H}_{\beta,V,W^{+}}^{-1}(j,k)}{\mathcal{H}_{\beta,V,W^{+}}^{-1}(j,j)} \right)\right]
\end{equation}
for any concave function \(f.\) Here we write \(\mathcal{H}_{\beta,V,W^{\pm}}\) instead of \(\mathcal{H}_{\beta,V}\) to emphasize the dependence of \(W^{\pm}.\)
\end{theorem}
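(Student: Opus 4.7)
My plan is to reduce to a one-edge comparison and then pass to the supersymmetric $H^{2|2}$ side. By writing $W^+ - W^-$ as a sum of non-negative single-edge contributions and iterating, it suffices to prove \eqref{eq:monotonicity} when $W^+$ and $W^-$ agree except on a single edge $e=\{a,b\}$, with $W^+_{ab} = W^-_{ab} + \delta$ for some $\delta > 0$. The hypothesis that $j,k$ are connected in $\mathcal{G}^-$ guarantees that $\mathcal{H}_{\beta,V,W^-}^{-1}(j,k) > 0$ a.s., so the ratio in \eqref{eq:monotonicity} stays well-defined all along the interpolation.

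Next I would pass to the $u$-representation via Lemma \ref{le2vrai}. Taking pinning $\eta = \varepsilon \delta_{j}$ and using \eqref{eq:ujasHb3}, one has
\begin{equation*}
\frac{\mathcal{H}_{\beta,V,W}^{-1}(j,k)}{\mathcal{H}_{\beta,V,W}^{-1}(j,j)} = e^{u_k - u_j},
\end{equation*}
so that, after passing $\varepsilon \to 0^+$, both sides of \eqref{eq:monotonicity} become expectations of $f(e^{u_k - u_j})$ under the $u$-measure $\mu^{W^\pm,\eta}_V$ from \eqref{def:umeasure}. The $u$-side is the right place to work because the density depends on $W_{ab}$ in a very explicit way, through the factor $e^{-W_{ab}(\cosh(u_a - u_b) - 1)}$ and through $\sqrt{\det \mathcal{H}_{\beta(u),V}}$.

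I would then interpolate: set $W^{(t)} = W^- + t(W^+ - W^-)$, $t \in [0,1]$, and differentiate $t \mapsto \mathbb{E}_{u,V}^{W^{(t)},\eta}[f(e^{u_k - u_j})]$. The derivative brings down a covariance involving $-\bigl(\cosh(u_a - u_b) - 1\bigr)$ together with a contribution from $\partial_{W_{ab}} \log\sqrt{\det \mathcal{H}_{\beta(u),V}}$. The crucial observation is that one may condition on all $u$-coordinates except for the relative height $v = u_a - u_b$, reducing the inner problem to a one-dimensional log-concave integral in $v$ whose integrand factorizes into a symmetric weight times a monotone function of $v$. The ratio $e^{u_k - u_j}$, viewed as a function of $v$, is itself monotone through the harmonic extension on the rest of the graph, and a conditional Jensen inequality against this one-dimensional density, combined with concavity of $f$, yields the desired sign.

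The main obstacle is the last step: identifying the right conditional structure so that the derivative is manifestly non-negative. Equivalently, one needs to realize the two measures $\nu_{\mathcal{G}^\pm}^{W^\pm,0}$ on a common probability space via a coupling whose marginal in $e^{u_k - u_j}$ becomes more concentrated as edge weights increase; concavity of $f$ is then used exactly once, at the end. This coupling is precisely what Poudevigne constructs in \cite{Poudevigne2019} from a variant of the vertex-reinforced jump process, which is why the present proof is only sketched and the reader is ultimately referred to that paper.
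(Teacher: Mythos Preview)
The paper does not prove this theorem at all: it is stated in the appendix with the sentence ``The following monotonicity result can be found in \cite[Theorem~6]{Poudevigne2019}'' and is used as a black box. So there is no paper proof to compare against; the relevant question is whether your sketch stands on its own, and it does not.

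Your steps 1--2 (reduce to a single edge, pass to the $u$-picture) are reasonable, though the limit $\varepsilon\to 0^+$ in the pinning needs justification since the $u$-measure \eqref{def:umeasure} is only defined for $\eta$ with at least one strictly positive entry. The real gap is in steps 3--4. Differentiating in $t$ produces a covariance-type expression, but your proposed way of signing it --- ``condition on all $u$-coordinates except the relative height $v=u_a-u_b$'' --- is not a well-defined operation: the $u$-variables live in $\mathbb{R}^V$, there is no natural product structure singling out $v$, and the determinant factor $\sqrt{\det\mathcal{H}_{\beta(u),V}}$ couples all coordinates. The subsequent claim that $e^{u_k-u_j}$ is monotone in $v$ ``through the harmonic extension on the rest of the graph'' conflates the integration variables $u$ with a deterministic harmonic function; nothing in the measure forces such monotonicity. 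You yourself flag this as ``the main obstacle,'' and it is not resolved.

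Your final paragraph then switches to a genuinely different argument: a coupling under which $e^{u_k-u_j}$ for $W^-$ is a conditional expectation of the same quantity for $W^+$, so that Jensen's inequality for concave $f$ gives the result. That is indeed the mechanism in \cite{Poudevigne2019}, and it is what the paper cites. But this coupling is not ``equivalent'' to your analytic interpolation; it is an alternative route, and invoking it does not complete steps 3--4. In effect your proposal ends at the same place as the paper --- a pointer to Poudevigne --- after a detour through an argument that does not close.
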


In this paper we use the following corollary.

\begin{corollary}
\label{corollary-monotonicity}
Let \(\mathcal{G}=(V,E)\) be a connected finite graph and \(W\in  \mathbb{R}_{> 0}^E\) a given set of weights. Fix a vertex \(j_{0}\in V\) and set \(\eta_{j}=\eta \delta_{jj_{0}}\) with \(\eta>0\) (one pinning at \(j_{0}\)). It holds, for all \(j\in V,\)
\[
\mathbb{E}_{\mathcal{G}}^{W,\eta}\left[ \sqrt{\frac{\mathcal{H}_{\beta,V}^{-1}(j_{0},j)}{\mathcal{H}_{\beta,V}^{-1}(j_{0},j_{0})}} \right]\geq \mathbb{E}_{\mathcal{G}}^{W,0}\left[ \sqrt{\frac{\mathcal{H}_{\beta,V}^{-1}(j_{0},j)}{\mathcal{H}_{\beta,V}^{-1}(j_{0},j_{0})}} \right].
\]

\end{corollary}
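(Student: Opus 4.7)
The strategy is to embed both sides of the inequality into a single comparison on an augmented graph and apply Theorem~\ref{th:monotonicity} directly. Let $V^{*}:=V\cup\{\delta\}$ and define two symmetric weight configurations on $V^{*}\times V^{*}$ that agree with $W$ on $V\times V$ and differ only in the single pair $\{j_{0},\delta\}$: set $W^{+}_{j_{0},\delta}=\eta$ and $W^{+}_{j,\delta}=0$ for $j\neq j_{0}$, whereas $W^{-}_{j,\delta}=0$ for every $j\in V$. Then $W^{-}\le W^{+}$ entrywise, and $j_{0},j$ remain connected in $\mathcal{G}^{-}$ since $\mathcal{G}$ is connected, so the hypotheses of Theorem~\ref{th:monotonicity} are satisfied for any concave $f$; I will take $f(x)=\sqrt{x}$.

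Next I match each side of Theorem~\ref{th:monotonicity} with the corresponding side of the corollary. Under $\nu^{W^{-},0}_{\mathcal{G}^{-}}$ the vertex $\delta$ is isolated, hence $\mathcal{H}_{\beta,V^{*},W^{-}}$ is block-diagonal and $\mathcal{H}_{\beta,V^{*},W^{-}}^{-1}(j_{0},k)=\mathcal{H}_{\beta,V,W}^{-1}(j_{0},k)$ for every $k\in V$; moreover the $\beta|_{V}$ marginal is exactly $\nu^{W,0}_{\mathcal{G}}$ by the marginal statement in Theorem~\ref{thm-the-multivariate-inverse-gaussian-distribution}. Under $\nu^{W^{+},0}_{\mathcal{G}^{+}}$ the same marginal statement gives $\beta|_{V}\sim \nu^{W,\eta}_{\mathcal{G}}$ with single-site pinning $\eta_{k}=\eta\,\delta_{k,j_{0}}$.

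The remaining key point is to show that the ratio appearing on the augmented graph coincides with the ratio of the corollary on $V$, namely
\[
\frac{\mathcal{H}_{\beta,V^{*},W^{+}}^{-1}(j_{0},j)}{\mathcal{H}_{\beta,V^{*},W^{+}}^{-1}(j_{0},j_{0})}
=\frac{\mathcal{H}_{\beta,V,W}^{-1}(j_{0},j)}{\mathcal{H}_{\beta,V,W}^{-1}(j_{0},j_{0})}\qquad (j\in V).
\]
I would prove this by a Schur-complement plus Sherman--Morrison computation: writing $\mathcal{H}_{\beta,V^{*},W^{+}}$ in block form with respect to $(V,\{\delta\})$, its $V$-block is $A$ with $A:=\mathcal{H}_{\beta,V,W}$, the off-diagonal block is $-v$ with $v=\eta\,e_{j_{0}}$, and the $\delta$-block is $2\beta_{\delta}$. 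The $V$-part of the inverse equals $(A-vv^{T}/(2\beta_{\delta}))^{-1}$, and the rank-one update formula gives
\[
\mathcal{H}_{\beta,V^{*},W^{+}}^{-1}(j_{0},k)=\frac{2\beta_{\delta}}{2\beta_{\delta}-\eta^{2}A^{-1}(j_{0},j_{0})}\,A^{-1}(j_{0},k)\qquad\forall\,k\in V,
\]
so the common prefactor cancels in the ratio. (Equivalently, one reads the same identity off \eqref{exp-u-beta}--\eqref{eq:ujasHb3}, both of which express the ratio as $e^{u_{j}-u_{j_{0}}}$.)

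Combining the three observations, Theorem~\ref{th:monotonicity} applied to $W^{-}\le W^{+}$ with $f(x)=\sqrt{x}$ (concave on $[0,\infty)$) yields
\[
\mathbb{E}_{\mathcal{G}}^{W,0}\!\left[\sqrt{\tfrac{\mathcal{H}_{\beta,V}^{-1}(j_{0},j)}{\mathcal{H}_{\beta,V}^{-1}(j_{0},j_{0})}}\right]
=\mathbb{E}_{\mathcal{G}^{-}}^{W^{-},0}[\sqrt{\cdot}\,]
\le \mathbb{E}_{\mathcal{G}^{+}}^{W^{+},0}[\sqrt{\cdot}\,]
=\mathbb{E}_{\mathcal{G}}^{W,\eta}\!\left[\sqrt{\tfrac{\mathcal{H}_{\beta,V}^{-1}(j_{0},j)}{\mathcal{H}_{\beta,V}^{-1}(j_{0},j_{0})}}\right],
\]
which is the corollary. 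The only slightly non-routine ingredient is the ratio-invariance identity above; everything else is bookkeeping about block structure and marginals.
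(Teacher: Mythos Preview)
Your proof is correct and follows essentially the same approach as the paper's: augment $V$ by an extra vertex $\delta$, compare the two weight configurations $W^{+}$ (with $W^{+}_{j_{0},\delta}=\eta$) and $W^{-}$ (with the $\delta$-edge switched off), check that the Green's-function ratio is unchanged when passing between $V$ and $V\cup\{\delta\}$, and then invoke Theorem~\ref{th:monotonicity} with $f(x)=\sqrt{x}$. The paper obtains the ratio-invariance via a resolvent expansion giving the same multiplicative prefactor you compute by Sherman--Morrison; the remaining bookkeeping on marginals is identical.
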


\begin{proof}
The measure \(\nu_{\mathcal{G}}^{W,\eta} \) is the marginal of \(\nu_{\mathcal{G}^{\delta }}^{W,0} ,\)
where the graph 
\(\mathcal{G}^{\delta }\) has vertex set \(V  \cup \{\delta  \}\) and edge set
\(E \cup \{j_{0},\delta \} ,\) and we defined \(W_{j_{0},\delta }=\eta.\)
Moreover,  by resolvent expansion, we have
\begin{align*}
\mathcal{H}_{\beta,V\cup \{\delta  \}}^{-1}(j_{0},j)&=
\mathcal{H}_{\beta,V}^{-1}(j_{0},j) \ \left[ 1+\eta^{2} \mathcal{H}_{\beta,V}^{-1}(j_{0},j_{0})
\mathcal{H}_{\beta,V\cup \{\delta  \}}^{-1}(\delta ,\delta )
\right],\\
\mathcal{H}_{\beta,V  \cup \{\delta  \}}^{-1}(j_{0},j_{0})&=
\mathcal{H}_{\beta,V}^{-1}(j_{0},j_{0}) \ \left[ 1+\eta^{2} \mathcal{H}_{\beta,V}^{-1}(j_{0},j_{0})
\mathcal{H}_{\beta,V\cup \{\delta  \}}^{-1}(\delta ,\delta ),
\right]
\end{align*}
and hence
\[
\mathbb{E}_{\mathcal{G}}^{W,\eta}\left[ \sqrt{\tfrac{\mathcal{H}_{\beta,V}^{-1}(j_{0},j)}{\mathcal{H}_{\beta,V}^{-1}(j_{0},j_{0})}} \right]=
\mathbb{E}_{\mathcal{G}^{\delta }}^{W,0}\left[ \sqrt{\tfrac{\mathcal{H}_{\beta,V}^{-1}(j_{0},j)}{\mathcal{H}_{\beta,V}^{-1}(j_{0},j_{0})}} \right]=
\mathbb{E}_{\mathcal{G}^{\delta }}^{W,0}\left[ \sqrt{\tfrac{\mathcal{H}_{\beta,V\cup \{\delta  \}}^{-1}(j_{0},j)}{\mathcal{H}_{\beta,V\cup \{\delta  \}}^{-1}(j_{0},j_{0})}} \right]
\]
Define \(\widetilde{W}_{ij}=W_{ij}\) \(\forall i\sim j\in V\) and \(\widetilde{W}_{j_{0}\delta }=0.\)
Then \(W_{ij}\geq \widetilde{W}_{ij}\) \(\forall i\sim j\in \mathcal{G}^{\delta }\) and the graph
generated by \(\widetilde{W}\) connects \(j_{0}\) to \(j\) \(\forall j\in V.\)
Since \(f (x)=\sqrt{x}\) is a concave function, by Theorem \ref{th:monotonicity} we have
\[
\mathbb{E}_{\mathcal{G}^{\delta }}^{W,0}\left[ \sqrt{\tfrac{\mathcal{H}_{\beta,V\cup \{\delta  \},W}^{-1}(j_{0},j)}{\mathcal{H}_{\beta,V\cup \{\delta  \},W}^{-1}(j_{0},j_{0})}} \right]\geq 
\mathbb{E}_{\mathcal{G}^{\delta }}^{\widetilde{W},0}\left[ \sqrt{\tfrac{\mathcal{H}_{\beta,V\cup \{\delta  \},\widetilde{W}}^{-1}(j_{0},j)}{\mathcal{H}_{\beta,V\cup \{\delta  \},\widetilde{W}}^{-1}(j_{0},j_{0})}} \right].
\]
Since \(\widetilde{W}_{j_{0}\delta }=0\) and \(\widetilde{W}=W\) on \(V,\) \(2\beta_{\delta }\) is independent of the other random variables and we have
\(\mathcal{H}_{\beta,V\cup \{\delta  \},\widetilde{W}}=2\beta_{\delta }\oplus \mathcal{H}_{\beta,V},\)
where we abbreviated \(\mathcal{H}_{\beta,V}=\mathcal{H}_{\beta,V,W}.\)
Therefore
\[
\mathcal{H}_{\beta,V\cup \{\delta  \},\widetilde{W}}^{-1}(j_{0},j)= \mathcal{H}_{\beta,V}^{-1}(j_{0},j),\qquad 
\mathcal{H}_{\beta,V\cup \{\delta  \},\widetilde{W}}^{-1}(j_{0},j_{0})= \mathcal{H}_{\beta,V}^{-1}(j_{0},j_{0}).
\]
It follows
\[
\mathbb{E}_{\mathcal{G}^{\delta }}^{\widetilde{W},0}\left[ \sqrt{\tfrac{\mathcal{H}_{\beta,V\cup \{\delta  \},\widetilde{W}}^{-1}(j_{0},j)}{\mathcal{H}_{\beta,V\cup \{\delta  \},\widetilde{W}}^{-1}(j_{0},j_{0})}} \right]=
\mathbb{E}_{\mathcal{G}^{\delta }}^{\widetilde{W},0}\left[ \sqrt{\tfrac{\mathcal{H}_{\beta,V}^{-1}(j_{0},j)}{\mathcal{H}_{\beta,V}^{-1}(j_{0},j_{0})}} \right]=
\mathbb{E}_{\mathcal{G}}^{W,0}\left[ \sqrt{\tfrac{\mathcal{H}_{\beta,V}^{-1}(j_{0},j)}{\mathcal{H}_{\beta,V}^{-1}(j_{0},j_{0})}} \right],
\]
where in the last step we used that \(\beta_{\delta }\) is independent of the other variables.
This concludes the proof.
\end{proof}


\section{Long range order estimates on the $u$ field associated to the \( H^{2|2} \)-model}
Recall the definition of $\mu^{W, \eta}_{\Lambda} (u )$  and  $\eta^{w}$ in \eqref{def:umeasure} and \ref{eq-eta-boundary-Lambda}
respectively.

\begin{lemma}\label{le-ward1}
For  any \(W>0,\) $d\geq 1,$ \( j\in  \Lambda\) such that $\eta_{j}>0$ and \(m\leq \eta_{j}/2\) we have
\begin{equation}\label{ward0}
\mathbb{E}_{u,\Lambda}^{W,\eta}\left[(\cosh u_{j})^{m} \right]\leq
\frac{1}{1-\frac{m}{\eta_{j}}}\leq 2.
\end{equation}
In particular, in the case $\Lambda=\Lambda_{L}$ and $\eta=\eta^{w}$ we have 
\begin{equation}\label{ward1}
\mathbb{E}_{u,\Lambda_{L}}^{W,\eta^{w}}\left[(\cosh u_{j})^{m} \right]\leq
\frac{1}{1-\frac{m}{W}}\leq 2
\end{equation}
for all $j\in  \partial \Lambda_{L}$ and $m\leq W/2.$
\end{lemma}

\begin{proof}
This inequality follows by a supersymmetric Ward identity analog to the one in section 5.1 of
\cite{Disertori2010a}.  See also Lemma 4.2 in \cite{disertori-merkl-rolles-2024transiencevertexreinforcedjumpprocesses}.
We sketch here the argument. We refer to the notation in \cite{Disertori2010a},
in particular
\[
B_j=\cosh u_j + \tfrac{1}{2}s_j^2 e^{u_j}
\] where \(s_j\) is a real variable. Denoting the supersymmetric mean by \(\left< \cdot \right>_{\text{susy}}\), we have
\[
1=\left< (B_{j}+\bar{\psi}_{j} \psi_{j}e^{u_{j}})^{m}\right>_{\text{susy}}=
\mathbb{E}_{u,\Lambda}^{W,\eta}\left[B_{j}^{m}
\Big (1-\frac{m}{B_{j}}\mathcal{H}_{\beta,\Lambda_{L}}^{-1}(j,j) \Big ) \right]
\]
where \(\overline{\psi}_j,\psi_j\) are anti-commuting variables.
The bound \eqref{ward0} now follows from \(B_{j}\geq 1\) and 
\(\mathcal{H}_{\beta,\Lambda_L}^{-1}(j,j)\leq \frac{1}{\eta_{j}}.\)
The bound \eqref{ward1} is obtained observing that $\eta^{w}_{j}\geq W$ 
\(\forall j\in  \partial \Lambda_{L}.\)

\end{proof}

The following result has been proved in  \cite[Theorem 1]{Disertori2010a}.

\begin{theorem}\label{th:bound-u-fluctDSZ}
Let \(d\geq 3.\)
There exists \(W_{0}'=W_{0}' (d) >1\) such that \(\forall W\geq W_{0}'\) we have
\begin{align}\label{ward2}
\mathbb{E}_{u,\Lambda_{L}}^{W,\eta}\left[(\cosh (u_{j}-u_{k}))^{m} \right]\leq 2\qquad
\forall j,k\in  \Lambda_{L} ,\  m\leq W^{\frac{1}{8}}.
\end{align}
This bound holds for all $\eta.$
\end{theorem}
\begin{proof}
Although the model considered in  \cite{Disertori2010a} has uniform pinning
\(\eta_{j}=\varepsilon >0\) \(\forall j\in \Lambda_{L}\) the proof of Theorem 1 is completely
independent from the pinning choice.
Also, while in \cite{Disertori2010a} only \(d=3\) is considered,
the same proof works for any \(d\geq 3.\) Indeed the key dimension-dependent result (Lemma 5) is proved
for general  dimension \(d\geq 3.\) The same strategy was used in \cite{Disertori2015tr}
in the case when the edge weights $W_{e}$ are independent Gamma distributed variables. 
For a related result on a one dimensional chain with non homogeneous weights, see
\cite{DMR2024fluctuationsnonlinearsupersymmetrichyperbolic}.
\end{proof}


\begin{lemma}\label{le-ward}
Let \(d\geq 3\) and  \(W_{0}'=W_{0}' (d) >1\) be the parameter introduced in Theorem \ref{th:bound-u-fluctDSZ} above.
For all \(W\geq W_{0}:=\max\{ W_{0}', 4^{8}\}\) we have
\begin{equation}\label{ward3}
 \mathbb{E}_{u,\Lambda_{L}}^{W,\eta^{w}}\left[(\cosh u_{k})^{2}\right]\leq 8\qquad \forall k\in \Lambda_{L}.
\end{equation}
\end{lemma}

\begin{proof}
By Lemma \ref{le-ward1}, for  any \(W\geq W_{0},\) \( j\in \partial \Lambda\) and \(m\leq W/2\) we have
\begin{equation}\label{ward1bis}
\mathbb{E}_{u,\Lambda_{L}}^{W,\eta^{w}}\left[(\cosh u_{j})^{m} \right]\leq
\frac{1}{1-\frac{m}{W}}\leq 2.
\end{equation}
Fix now  \(k\in \Lambda_{L} \setminus \partial \Lambda_{L} \) and let \(j\) be some vertex on \( \partial \Lambda_{L}.\)
We have the bound \(\cosh u_{k}\leq 2 \cosh u_{j} \cosh (u_{k}-u_{j})\) and hence
\[
\mathbb{E}_{u,\Lambda_{L}}^{W,\eta^{w}}\left[(\cosh u_{k})^{2} \right]\leq
\ 4 \ \mathbb{E}_{u,\Lambda_{L}}^{W,\eta^{w}}\left[(\cosh u_{j})^{4} \right]^{\frac{1}{2}}\mathbb{E}_{u,\Lambda_{L}}^{W,\eta^{w}}\left[(\cosh (u_{j}-u_{k}))^{4} \right]^{\frac{1}{2}}.
\]
The constraint $ W\geq \max\{ W_{0}, 4^{8}\}$ ensures $4\leq W^{\frac{1}{8}}$ and $4\leq \frac{W}{2}.$
The result now follows from \eqref{ward2} and  \eqref{ward1bis}.

\end{proof}

\textbf{Remark} Note that in  \cite{Disertori2010a}
the bound \eqref{ward3} is proved in Theorem 2 and requires quite some work due to the presence of
uniform \textit{small} pinning \(\varepsilon\sim 1/|\Lambda_{L}|^{1-s},\) \(0<s\ll 1.\) 
Here the same bound follows easily from \eqref{ward2} and the fact that we have
\textit{large} pinning at the boundary \(\eta^{w}_{j}\geq W\gg 1\) \(\forall j\in \partial \Lambda_{L}.\)

\bibliographystyle{plain}

\end{document}